
\documentclass[a4paper,10pt]{article}
\usepackage[utf8]{inputenc}

\usepackage{color}

\usepackage{amssymb}
\usepackage{amsmath}
\usepackage{amsfonts}
\usepackage{amsthm}
\usepackage{url}

\usepackage[dvips]{epsfig}
\usepackage[dvips]{graphics}

\usepackage{algorithm}
\usepackage{algorithmic}

\usepackage[all,knot,poly,arc]{xy}









\newcommand{\NN}{{\Bbb N}}

\newcommand{\ZZ}{{\Bbb Z}}

\newcommand{\im}{\mbox{\rm im}}

\newcommand{{\uk}}{\mbox{$\underline{k}$}}

\newcommand{\End}{\mbox{\rm End}}

\newcommand{\id}{\mbox{\rm id}}

\def\nod(#1,#2){\put(#1,#2){\circle*{.125}}
\put(#1,#2){\makebox(0,0.5){{\small$#2$}}}}%
\def\rod(#1,#2){\put(#1,#2){\circle*{.2}}}
\def\NOD(#1,#2)#3{\put(#1,#2){\circle*{.2}}\put(#1,#2){\makebox(0,0.8){{\small$#3$}}}}

\parindent=10pt

\newcounter{exampleNo}

\newtheorem{theorem}{Theorem}[section]
\newtheorem{lemma}[theorem]{Lemma}
\newtheorem{proposition}[theorem]{Proposition}
\newtheorem{corollary}[theorem]{Corollary}

\newenvironment{example}[1][Example \arabic{exampleNo}.]{\begin{trivlist}\refstepcounter{exampleNo}
\item[\hskip \labelsep {\bfseries #1}]}{\end{trivlist}}

\title{On Krohn-Rhodes Theory for Semiautomata}
\author{Karl-Heinz Zimmermann\footnote{Email: k.zimmermann@tuhh.de}\\
Department of Computer Engineering \\
Hamburg University of Technology\\
21071 Hamburg, Germany}

\begin{document}
\maketitle
\begin{abstract}
Krohn-Rhodes theory encompasses the techniques for the study of finite automata 
and their decomposition into elementary automata.
The famous result of Krohn and Rhodes roughly states that 
each finite automaton can be decomposed into elementary components which correspond to
permutation and reset automata connected by a cascade product. 
However, this outcome is not easy to access for the working computer scientist.
This paper provides a short introduction into Krohn-Rhodes theory based on the valuable work of Ginzburg.
\end{abstract}
\medskip

\mbox{\bf AMS Subject Classification:} 68Q70, 20M35, 20A05
\medskip

\mbox{\bf Keywords:} Semiautomaton, monoid, group, Krohn-Rhodes theory

\section{Introduction}

Automata theory is a subfield of theortical computer science 
and is the study of abstract automata and their computational problems~\cite{eilenberg, mihov, salomaa}.
Algebraic automata theory or Krohn-Rhodes theory
provides a framework for the decomposition of finite automata into elementary components.
More specifically, a finite automaton can be emulated by a feed-forward cascade of automata 
whose transition monoids are simple groups and automata which are banks of reset flip-flops.
This deep result known as Krohn-Rhodes theory was a joint work 
of Kenneth Krohn\index{Krohn, Kenneth} and John Rhodes\index{Rhodes, John}~\cite{kr}
and can be seen as a generalization of the Jordan-H\"older decomposition of finite groups.

This paper provides a short introduction to Krohn-Rhodes theory based on the valuable work of Ginzburg~\cite{ginzburg}.
The inclined reader may note that there is also a purely algebraic representation of Krohn-Rhodes theory by
the recent work of Diekert et al.~\cite{kufi}.

\section{Semigroups and Monoids}
In abstract algebra, a semigroup is an algebraic structure with an associative binary operation.
In particular, a monoid is a semigroup with an identity element.
Semigroups and monoids play an important role in the fundamental aspects of theoretical computer science.
Notably, the full transformation monoid captures the notion of function composition, and 
each finitely generated free monoid can be represented by the set of strings over a finite alphabet,

A {\em semigroup\/}\index{semigroup} is an algebraic structure consisting of a non-empty set $S$ and a binary operation 
\begin{eqnarray}
\cdot:S\times S\rightarrow S:(x,y)\mapsto x\cdot y
\end{eqnarray}
that satisfies the {\em associative law\/}\index{associative law}; i.e., 
for all $r,s,t\in S$, 
\begin{eqnarray}
(r\cdot s) \cdot t = r\cdot (s\cdot t).
\end{eqnarray}
A semigroup $S$ is {\em commutative\/}\index{semigroup!commutative} if it satisfies the {\em commutative law\/}\index{commutative law}; i.e., 
for all $s,t\in S$, 
\begin{eqnarray}
s\cdot t = t\cdot s.
\end{eqnarray}
A semigroup $S$ with operation $\cdot$ is also written as a pair $(S,\cdot)$.
The juxtaposition~$s\cdot t$ is also denoted as $st$.
The number of elements of a semigroup $S$ is called the {\em order}\index{order} of $S$.

\begin{example}
\mbox{ }
\begin{itemize} 
\item
Each singleton set $S=\{s\}$ forms a commutative semigroup under the operation $s\cdot s = s$.
\item
The set of natural numbers $\NN$ forms a commutative semigroup under addition, or multiplication.
\item
The set of integers $\ZZ$ forms a commutative semigroup under minimum, or maximum.
\end{itemize}
\end{example}

A {\em monoid\/}\index{monoid} is a semigroup $M$ with a distinguished element $e\in M$, called {\em identity element\/}\index{identity element}, which satisfies for all $m\in M$,
\begin{eqnarray}
m\cdot e = m = e\cdot m .
\end{eqnarray}
A monoid is {\em commutative\/}\index{monoid!commutative} if its operation is commutative.
A monoid $M$ with operation $\cdot$ and identity element $e$ is also written as $(M,\cdot,e)$.

Note that each monoid $M$ has a unique identity element,
since if $e$ and $e'$ are identity elements of $M$, then $e=e\cdot e' = e'$.
\begin{example}
\mbox{ }
\begin{itemize}
\item
Any semigroup $S$ can be made into a monoid by adjoining an element $e$ not in $S$ and defining $s\cdot e = s = e\cdot s$ for all $s\in S\cup\{e\}$.
\item 
The set of natural numbers $\NN_0$ forms a commutative monoid under addition (with identity element~0), or multiplication (with identity element~1).
\item
The power set $P(X)$ of a set $X$ forms a commutative monoid under intersection (with identity element $X$), or union (with identity element $\emptyset$).
\item
The set of all integral $2\times 2$ matrices $\ZZ^{2\times 2}$ forms a commutative monoid under addition (with the zero matrix $O$ as identity element),
and a non-commutative monoid under multiplication (with the unit matrix $I$ as identity element).
\end{itemize} 
\end{example}
In view of the above observation, we will concentrate on monoids instead of semigroups in the sequel.

\begin{proposition}\label{p-MX}
Let $(M,\cdot,e)$ be a monoid and $X$ a non-empty set.
The set of all mappings from $M$ to $X$,
\begin{eqnarray}
M^X = \{f\mid f:X\rightarrow M\}, 
\end{eqnarray}
forms a monoid under component-wise multiplication
\begin{eqnarray}
(f\cdot g) (x) = f(x)g(x),\quad f,g\in M^X, x\in X.
\end{eqnarray}
\end{proposition}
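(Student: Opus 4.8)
The plan is to verify directly that $(M^X,\cdot)$ satisfies the monoid axioms, inheriting everything pointwise from $M$. First I would check closure: for $f,g\in M^X$ and $x\in X$, the value $(f\cdot g)(x)=f(x)g(x)$ lies in $M$ because $M$ is closed under its operation, so $f\cdot g$ is again a map $X\to M$, i.e.\ an element of $M^X$. This also shows the component-wise rule genuinely defines a binary operation on $M^X$.

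Next I would verify associativity. Given $f,g,h\in M^X$ and an arbitrary $x\in X$, I would compute $((f\cdot g)\cdot h)(x)=(f(x)g(x))h(x)$ and $(f\cdot(g\cdot h))(x)=f(x)(g(x)h(x))$, and observe these are equal by the associative law in $M$ applied to the three elements $f(x),g(x),h(x)\in M$. Since this holds for every $x\in X$, the two functions $(f\cdot g)\cdot h$ and $f\cdot(g\cdot h)$ agree, establishing associativity in $M^X$.

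Finally I would exhibit the identity. Let $\hat{e}\in M^X$ be the constant map $\hat{e}(x)=e$ for all $x\in X$. For any $f\in M^X$ and $x\in X$ we have $(f\cdot\hat{e})(x)=f(x)e=f(x)$ and $(\hat{e}\cdot f)(x)=e\,f(x)=f(x)$ by the identity law in $M$, so $f\cdot\hat{e}=f=\hat{e}\cdot f$. Hence $\hat{e}$ is an identity element for $M^X$, and $(M^X,\cdot,\hat{e})$ is a monoid.

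There is no real obstacle here; the only point requiring any care is the standard principle that two functions into $M$ are equal precisely when they agree at every point of $X$, which is what lets each pointwise identity in $M$ lift to an equation between elements of $M^X$. (One could also remark that commutativity of $M$ would likewise transfer pointwise to $M^X$, though the statement does not ask for this.)
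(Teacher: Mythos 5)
Your proof is correct and follows essentially the same route as the paper: pointwise verification of associativity via the associative law in $M$, with the constant map $x\mapsto e$ as identity. You are slightly more thorough in spelling out closure and the identity law, but the argument is the same.
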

\begin{proof}
The operation is associative, 
since for all $f,g,h\in M^X$ and $x\in X$,
\begin{eqnarray*}
((f\cdot g)\cdot  h)(x) &=& (f\cdot g)(x)h(x) = (f(x)g(x)) h(x) = f(x) (g(x)h(x)) \\
&=& f(x)(g\cdot  h)(x) = (f\cdot (g\cdot  h))(x),
\end{eqnarray*}
and the mapping $X\rightarrow M:x\mapsto e$ is the identity element.
\end{proof}
The image of $x$ under a mapping $f$ is denoted by $f(x)$ or $xf$.

\begin{proposition}
Let $X$ be a non-empty set. 
The set of all mappings on $X$,
\begin{eqnarray}
T(X) = \{f\mid f:X\rightarrow X\},
\end{eqnarray}
forms a monoid under function composition 
\begin{eqnarray}
(fg)(x) = g(f(x)),\quad x\in X,
\end{eqnarray}
and the identity function $\id=\id_X:X\rightarrow X:x\mapsto x$ is the identity element.
\end{proposition}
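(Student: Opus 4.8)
The plan is to verify the three monoid axioms for the structure $(T(X),\circ)$ with $\circ$ the operation $(fg)(x)=g(f(x))$: closure of $T(X)$ under this operation, its associativity, and the fact that $\id_X$ is a two-sided identity. Each of these is a routine pointwise check, so the work is to organise it correctly rather than to overcome any genuine difficulty.

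For closure, if $f,g\in T(X)$ then for each $x\in X$ we have $f(x)\in X$, hence $g(f(x))\in X$, so $x\mapsto g(f(x))$ is a well-defined map $X\to X$ and $fg\in T(X)$. For associativity, the single point needing attention is the left-to-right convention $(fg)(x)=g(f(x))$; expanding both triple products pointwise, for $f,g,h\in T(X)$ and $x\in X$ one gets
\[
((fg)h)(x)=h\bigl((fg)(x)\bigr)=h\bigl(g(f(x))\bigr)=(gh)\bigl(f(x)\bigr)=(f(gh))(x),
\]
so $(fg)h=f(gh)$ as elements of $T(X)$. For the identity, for $f\in T(X)$ and $x\in X$ we have $(f\,\id_X)(x)=\id_X(f(x))=f(x)$ and $(\id_X\,f)(x)=f(\id_X(x))=f(x)$, whence $f\,\id_X=f=\id_X\,f$.

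I do not expect any real obstacle here; the only thing to watch is using the stated convention $(fg)(x)=g(f(x))$ consistently throughout (under the classical reading $f(g(x))$ the computations are simply mirrored and the conclusion is the same). It is also worth noting that, although $T(X)$ has the same underlying set as $M^X$ when $M=X$, Proposition~\ref{p-MX} does not give the result, since the operation in question is function composition rather than the component-wise product and $X$ need not carry a monoid structure at all; hence the direct verification above is the natural route.
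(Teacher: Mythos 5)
Your proof is correct and follows the only natural route: a pointwise verification of closure, associativity under the stated convention $(fg)(x)=g(f(x))$, and the two-sided identity law for $\id_X$. The paper's own proof simply asserts these facts in one sentence, so you have filled in exactly the details it omits, with no difference in approach.
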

\begin{proof}
The function composition is associative and the identity function is the identity element.
\end{proof}
The monoid $T(X)$ is called the {\em full transformation monoid\/}\index{full transformation monoid} of $X$.

\begin{proposition}
Let $(M,\cdot,e)$ and $(M',\cdot',e')$ be monoids.
The direct product set $M\times M'$ forms a monoid under the component-wise operation
\begin{eqnarray}
(x,x')\circ (y,y') = (x\cdot y,x'\cdot' y'),\quad x,y\in M,\, x',y'\in M',
\end{eqnarray}
and the pair $(e,e')$ is the identity element.
The monoid $M\times M'$ is commutative if $M$ and $M'$ are commutative.
\end{proposition}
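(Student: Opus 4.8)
The plan is to verify the three monoid axioms for $(M\times M', \circ)$ directly, carrying the computations componentwise and deferring each associativity or identity claim to the corresponding property already known for $M$ and $M'$. This is entirely routine; the only thing to be careful about is bookkeeping of which coordinate lives in which factor.

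First I would check associativity. Take $(x,x'),(y,y'),(z,z')\in M\times M'$. Unfolding the definition of $\circ$ twice,
\begin{eqnarray*}
\bigl((x,x')\circ(y,y')\bigr)\circ(z,z') &=& (x\cdot y, x'\cdot' y')\circ(z,z') = \bigl((x\cdot y)\cdot z,\ (x'\cdot' y')\cdot' z'\bigr),
\end{eqnarray*}
and similarly
\begin{eqnarray*}
(x,x')\circ\bigl((y,y')\circ(z,z')\bigr) &=& (x,x')\circ(y\cdot z, y'\cdot' z') = \bigl(x\cdot(y\cdot z),\ x'\cdot'(y'\cdot' z')\bigr).
\end{eqnarray*}
The two right-hand sides agree because $\cdot$ is associative in $M$ and $\cdot'$ is associative in $M'$. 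Next I would check that $(e,e')$ is a two-sided identity: for any $(x,x')\in M\times M'$ we have $(e,e')\circ(x,x') = (e\cdot x, e'\cdot' x') = (x,x')$ using the identity law in each factor, and symmetrically $(x,x')\circ(e,e') = (x,x')$. This establishes that $M\times M'$ is a monoid with identity $(e,e')$.

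Finally I would address commutativity. Assuming $M$ and $M'$ are both commutative, for any $(x,x'),(y,y')$ we get $(x,x')\circ(y,y') = (x\cdot y, x'\cdot' y') = (y\cdot x, y'\cdot' x') = (y,y')\circ(x,x')$, so $M\times M'$ is commutative. There is no genuine obstacle here: the entire argument is a direct appeal to the hypotheses coordinatewise, exactly in the style of the proof of Proposition~\ref{p-MX}. If anything, the only point worth stating explicitly is that $M\times M'$ is nonempty, which is immediate since $e\in M$ and $e'\in M'$, so $(e,e')\in M\times M'$.
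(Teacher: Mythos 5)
Your proof is correct and follows essentially the same route as the paper's: verify associativity, the identity law, and commutativity componentwise, reducing each to the corresponding axiom in $M$ and $M'$. The only cosmetic difference is that you compute both sides of the associativity equation separately and compare, while the paper chains one side into the other; this is immaterial.
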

\begin{proof}
Let $x,y,z\in M$ and $x',y',z'\in M'$.
The operation is associative, since 
\begin{eqnarray*}
(x,x')\circ [(y,y')\circ(z,z')] 
&=& (x,x')\circ (y\cdot z,y'\cdot'z') \\
&=& (x\cdot(y\cdot z),x'\cdot'(y'\cdot'z')) \\
&=& ((x\cdot y)\cdot z,(x'\cdot'y')\cdot'z') \\
&=& (x\cdot y,x'\cdot' y') \circ (z,z')\\
&=& [(x,x')\circ (y,y')]\circ(z,z'). 
\end{eqnarray*}
The operation is commutative, since if $M$ and $M'$ are commutative, then
\begin{eqnarray*}
(x,x')\circ (y,y') = (x\cdot y,x'\cdot' y') = (y\cdot x,y'\cdot' x') = (y,y')\circ (x,x').
\end{eqnarray*}
Finally, the identity element is $(e,e')$, since
\begin{eqnarray*}
(x,x')\circ (e,e') &=& (x\cdot e,x'\cdot' e') = (x,x'), \\
(e,e')\circ (x,x') &=& (e\cdot x,e'\cdot' x') = (x,x'). 
\end{eqnarray*}
\end{proof}

An {\em alphabet\/}\index{alphabet} denotes always a finite non-empty set, 
and the elements of an alphabet are called {\em symbols\/}\index{symbol} or {\em characters\/}\index{character}.
A {\em word\/}\index{word!see string} or {\em string\/}\index{string} {\em of length\/}\index{string!length} $n\geq 0$ over an alphabet $\Sigma$ is a finite sequence 
$x=(x_1,\ldots,x_n)$ with components in $\Sigma$.
A string $x=(x_1,\ldots,x_n)$ is usually written without delimiters $x =x_1\ldots x_n$.
If~$n=0$, then $x$ is the {\em empty string\/}\index{string!empty} denoted by $\epsilon$.
\begin{proposition}
Let $\Sigma$ be an alphabet.
The set $\Sigma^*$ of all finite strings over $\Sigma$ forms a monoid under concatenation of strings
and the empty string $\epsilon$ is the identity element.
\end{proposition}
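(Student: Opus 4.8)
The plan is to verify the three monoid requirements directly from the definition of string concatenation. First I would make the operation explicit: given strings $x=x_1\ldots x_n$ and $y=y_1\ldots y_m$ over $\Sigma$, their concatenation is the string $xy=x_1\ldots x_n y_1\ldots y_m$ of length $n+m$, i.e.\ the finite sequence whose $i$-th component is $x_i$ for $1\le i\le n$ and $y_{i-n}$ for $n<i\le n+m$. Since this is again a finite sequence with entries in $\Sigma$, the set $\Sigma^*$ is closed under this rule, so concatenation is a well-defined binary operation $\Sigma^*\times\Sigma^*\to\Sigma^*$.

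Next I would establish associativity. For strings $x=x_1\ldots x_n$, $y=y_1\ldots y_m$ and $z=z_1\ldots z_k$, both $(xy)z$ and $x(yz)$ are strings of length $n+m+k$, and in either grouping the $i$-th component works out to $x_i$ for $1\le i\le n$, to $y_{i-n}$ for $n<i\le n+m$, and to $z_{i-n-m}$ for $n+m<i\le n+m+k$. Two strings of equal length with equal components in every position are the same string, hence $(xy)z=x(yz)$. An equivalent route is induction on the length of $z$, the base case $z=\epsilon$ being immediate and the inductive step stripping off the last symbol; I would present the direct component computation, which is shorter here.

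Finally I would check that $\epsilon$ is a two-sided identity: concatenating the empty sequence on the left or the right adds no components, so $x\epsilon=x=\epsilon x$ for every $x\in\Sigma^*$. This gives all three axioms, so $(\Sigma^*,\cdot,\epsilon)$ is a monoid; by the earlier remark on uniqueness of identities, $\epsilon$ is its only identity element.

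The only real obstacle is bookkeeping with index ranges when strings are treated as finite sequences, so that the principle \emph{same length and same components implies equal} is invoked cleanly in the associativity step. There is no genuine mathematical difficulty here; the one thing to resist is hand-waving associativity, which is precisely where writing out the components of an iterated concatenation earns its keep.
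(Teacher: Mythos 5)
Your proof is correct; the paper actually omits a proof of this proposition entirely, treating it as routine. Your component-wise verification of closure, associativity, and the identity property is the standard argument, and it is consistent with how the paper itself handles strings as finite sequences (the same ``equal length and equal components'' principle is used later in the proof that $\Sigma^*$ is free with basis $\Sigma$).
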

The monoid $\Sigma^*$ is called the {\em word monoid\/}\index{word monoid} over $\Sigma$.


A {\em submonoid\/}\index{submonoid} of a monoid $(M,\cdot,e)$ is a non-empty subset $U$ of $M$ which contains the identity element $e$ 
and is closed under the monoid operation; i.e.,
for all $u,v\in U$, $u\cdot v\in U$.

\begin{example}
\mbox{ }
\begin{itemize}
\item
Each monoid $(M,\cdot,e)$ has two trivial submonoids, $M$ and $\{e\}$.
\item
Let $\Sigma_1$ and $\Sigma_2$ be alphabets.
If $\Sigma_1$ is a subset of $\Sigma_2$, then $\Sigma_1^*$ is a submonoid of $\Sigma_2^*$.
\item
Consider the set of all upper triangular integral $2\times 2$ matrices 
$U = \left\{ \left( \begin{array}{cc}  a&b\\0&c \end{array} \right)\mid a,b,c\in\ZZ \right\}$.
The set $U$ forms a submonoid of the matrix monoid $(\ZZ^{2\times 2},+, O)$, 
since $U$ is closed under matrix addition 
$\left(\begin{array}{cc}a&b\\0&c \end{array}\right) + \left(\begin{array}{cc}  d&e\\0&f \end{array}\right)
= \left(\begin{array}{cc}a+d&b+e\\0&c+f \end{array}\right)   $
and contains the unit element $O$.
Likewise, the set $U$ forms a submonoid of the matrix monoid $(\ZZ^{2\times 2},\cdot, I)$, 
since $U$ is closed under matrix multiplication 
$\left(\begin{array}{cc}a&b\\0&c \end{array}\right) \left(\begin{array}{cc}  d&e\\0&f \end{array}\right)
= \left(\begin{array}{cc}ad&ae+bf\\0&cf \end{array}\right)   $
and contains the unit element $I$.
\end{itemize} 
\end{example}

\begin{lemma}\label{p-right}
Let $(M,\cdot,e)$ be a monoid.
The mapping 
$r_x:M\rightarrow M:m\mapsto mx$
is called {\em right multiplication\/} with $x\in M$.
The set of all right multiplications with elements from $M$,
$R_M = \{r_x\mid x\in M\}$,
forms a submonoid of the full transformation monoid $T(M)$.
\end{lemma}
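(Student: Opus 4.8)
The plan is to verify directly that $R_M = \{r_x \mid x \in M\}$ satisfies the two defining conditions of a submonoid of $T(M)$: it contains the identity element of $T(M)$, and it is closed under function composition. Since $T(M)$ was defined with composition given by $(fg)(x) = g(f(x))$ and identity element $\id_M$, these are the targets.

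First I would handle the identity: the claim is that $r_e = \id_M$. Indeed, for every $m \in M$ we have $r_e(m) = m \cdot e = m = \id_M(m)$ by the monoid identity law, so $r_e = \id_M \in R_M$, and in particular $R_M$ is non-empty. Next I would show closure: given $r_x, r_y \in R_M$, I claim that $r_x r_y = r_{xy}$ (composition in $T(M)$). For any $m \in M$, unwinding the definition of composition in $T(M)$ gives $(r_x r_y)(m) = r_y(r_x(m)) = r_y(mx) = (mx)y = m(xy)$, where the last equality is associativity in $M$; and $m(xy) = r_{xy}(m)$. Hence $r_x r_y = r_{xy} \in R_M$, so $R_M$ is closed under the operation of $T(M)$.

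Since $R_M$ is a non-empty subset of $T(M)$ containing the identity $\id_M$ and closed under composition, it is by definition a submonoid of $T(M)$, which completes the proof.

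There is essentially no hard part here; the only thing to be careful about is the orientation of function composition. Because the paper uses the convention $(fg)(x) = g(f(x))$ (so arguments are applied left-to-right), right multiplications compose in the order $r_x r_y = r_{xy}$ rather than $r_{yx}$ — had the opposite composition convention been chosen, one would instead get a submonoid from \emph{left} multiplications, or $R_M$ would be anti-isomorphic to $M$. With the stated convention everything lines up cleanly, and the map $x \mapsto r_x$ is in fact a monoid homomorphism $M \to T(M)$, though the statement only asks for the image to be a submonoid.
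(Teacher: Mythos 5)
Your proof is correct and follows essentially the same route as the paper: both establish closure via the identity $r_x r_y = r_{xy}$ (using associativity in $M$ and the left-to-right composition convention) and verify $r_e = \id_M$ from the identity law. Your added remark about the orientation of composition is a fair point but not a divergence in method.
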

\begin{proof}
Let $x,y,z\in M$.
Then $r_xr_y = r_{xy}$ since for all $z\in M$, 
$$zr_{xy} =  z(xy)=(zx)y = (zx)r_y = (zr_x)r_y = z(r_x r_y),$$
and $r_e=\id$ since for all $z\in M$, $r_e(z) = ze = z =\id(z).$
\end{proof}
The left multiplications are similarly defined.
Each submonoid of a full transformation monoid is called a {\em transformation monoid}\index{transformation monoid}.


\begin{proposition}
Let $M$ be a monoid and $X$ be a subset of $M$.
The set of all finite products of elements of~$X$,
\begin{eqnarray}
\langle X\rangle = \{x_1\cdots x_n\mid x_1,\ldots,x_n\in X, n\geq 0\},
\end{eqnarray}
forms a submonoid of $M$.
It is the smallest submonoid of $M$ containing $X$.
\end{proposition}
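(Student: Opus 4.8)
The plan is to verify the three assertions in turn: that $\langle X\rangle$ is a submonoid of $M$, that it contains $X$, and that it is contained in every submonoid of $M$ which contains $X$. Throughout I adopt the convention that the empty product (the case $n=0$) denotes the identity element $e$; this is the one place where a small amount of care is needed, since it is precisely this convention that guarantees $e\in\langle X\rangle$.

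First I would check the submonoid axioms. The set $\langle X\rangle$ is non-empty and contains $e$, because $e$ is the empty product obtained by taking $n=0$. For closure under the monoid operation, I take two elements $x_1\cdots x_n$ and $y_1\cdots y_m$ of $\langle X\rangle$, with all $x_i,y_j\in X$; their product $x_1\cdots x_n\cdot y_1\cdots y_m$ is again a finite product of $n+m\geq 0$ elements of $X$, hence lies in $\langle X\rangle$. Associativity needs no separate argument, since the iterated product is taken inside $M$, so no bracketing ambiguity arises. This shows that $\langle X\rangle$ is a submonoid of $M$.

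Next, $X\subseteq\langle X\rangle$ is immediate: each $x\in X$ is the one-factor product corresponding to $n=1$.

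Finally, for minimality I would let $U$ be an arbitrary submonoid of $M$ with $X\subseteq U$ and prove $\langle X\rangle\subseteq U$ by induction on the number $n$ of factors. For $n=0$ the product equals $e$, which lies in $U$ because $U$, being a submonoid, contains the identity element. For the inductive step I write $x_1\cdots x_{n+1}=(x_1\cdots x_n)\cdot x_{n+1}$; the first factor lies in $U$ by the induction hypothesis, the second lies in $X\subseteq U$, and $U$ is closed under the monoid operation, so the product lies in $U$. Hence every element of $\langle X\rangle$ belongs to $U$, and therefore $\langle X\rangle$ is the smallest submonoid of $M$ containing $X$. There is no genuine obstacle in this argument; the only point to watch is the base case $n=0$ together with the convention that identifies the empty product with the identity. (One could alternatively obtain minimality by noting that $\langle X\rangle$ is contained in the intersection of all submonoids containing $X$, but the explicit description given makes the direct induction the cleaner route.)
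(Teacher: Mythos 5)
Your proposal is correct and follows essentially the same route as the paper: closure under products, the empty-product convention supplying the identity, and minimality via the observation that any submonoid containing $X$ must contain all finite products of elements of $X$ (which you merely make explicit by induction on the number of factors). No substantive difference.
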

\begin{proof}
The product of two finite products of elements of $X$ is also a finite product, and
the empty product $(n=0)$ gives the identity element of $M$.
Thus~$\langle X\rangle$ is a submonoid of $M$.
Moreover, each submonoid $U$ of $M$ containing the set~$X$ 
must also contain all finite products of elements of $X$; i.e., we have $\langle X\rangle \subseteq U$.
\end{proof}
For each finite set $X=\{x_1,\ldots,x_n\}$ write $\langle x_1,\ldots,x_n\rangle$ instead of $\langle X\rangle$.

Let $M$ be a monoid and $X$ be a subset of $M$.
Then $M$ is said to be {\em generated by\/} $X$ if $M=\langle X\rangle$. 
In this case, $X$ is a {\em generating set\/}\index{generating set} of $M$.
In particular, $M$ is {\em finitely generated\/}\index{monoid!finitely generated} if $M$ has a finite generating set $X$.

\begin{example}
\mbox{ }
\begin{itemize}
\item
For each monoid $(M,\cdot,e)$, we have $\langle \emptyset\rangle = \{e\}$ und $\langle M\rangle = M$.
\item
The monoid $(\NN_0,+,0)$ is generated by $X=\{1\}$, since $n = n\cdot 1= 1+\ldots+1$ for all $n\in\NN_0$.
\item
The monoid $(\NN,\cdot,1)$ is generated by the set of prime numbers, since by the fundamental theorem of arithmetic, each positive natural number 
is a (unique) product of prime numbers. 
\item
The word monoid $\Sigma^*$ is generated by the underlying alphabet $\Sigma$.
\item
Consider the matrix monoid $(\ZZ^{2\times 2},\cdot, I)$.
The powers of the matrix 
$A = \left(\begin{array}{cc}0&1\\-1&0\end{array}\right)$
are 
$A^2 = \left(\begin{array}{cc}-1&0\\0&-1\end{array}\right)$,
$A^3 = \left(\begin{array}{cc}0&-1\\1&0\end{array}\right)$,
$A^4 = \left(\begin{array}{cc}1&0\\0&1\end{array}\right)$,
and $A^5=A$.
Thus $\langle A\rangle = \{I,A,A^2,A^3\}$ is a (cyclic) submonoid of order~4.
\end{itemize}
\end{example}


Let $(M,\cdot,e)$ and $(M',\cdot',e')$ be monoids.
A {\em homomorphism\/}\index{homomorphism} is a mapping $\phi:M\rightarrow M'$ which assigns the identity elements 
\begin{eqnarray}
\phi(e) = e'
\end{eqnarray}
and commutes with the monoid operations, i.e., for all $x,y\in M$,
\begin{eqnarray}
\phi(x\cdot y) = \phi(x)\cdot' \phi(y).
\end{eqnarray}

A homomorphism $\phi:M\rightarrow M'$ between monoids is a {\em monomorphism}\index{monomorphism} if $\phi$ is one-to-one,
an {\em epimorphism}\index{epimorphism} if $\phi$ is onto,
an {\em isomorphism}\index{isomorphism} if $\phi$ is one-to-one and onto,
an {\em endomorphism}\index{endomorphism} if $M=M'$, and
an {\em automorphism}\index{automorphism} if $\phi$ is an endomorphism and an isomorphism.
In particular, two monoids $M$ and $M'$ are {\em isomorphic}\index{isomorphic} if there is an isomorphism $\phi:M\rightarrow M'$.
Isomorphic monoids have the same structure up to a relabelling of the elements.

\clearpage 
\begin{example}\label{e-hom-mon}
\mbox{ }
\begin{itemize}
\item
Let $\Sigma$ be an alphabet.
The length mapping 
$\phi:\Sigma^*\rightarrow\NN_0:x\mapsto |x|$
is a homomorphism from $(\Sigma^*,\cdot,\epsilon)$ onto $(\NN_0,+,0)$,
since for all $x,y\in \Sigma^*$,
$\phi(xy) =|xy| = |x|+|y| = \phi(x)+\phi(y)$
and $\phi(\epsilon) = |\epsilon| = 0 $.
This homomorphism is an epimorphism.
\item
The mapping
$\phi: \ZZ^{2\times 2} \rightarrow \ZZ: A \mapsto  \det(A)$,
where $\det(A)=ad-bc$ denotes the determinant of the matrix $A= \left( \begin{array}{cc}  a&b\\c&d \end{array} \right)$, 
is a homorphisms from $(\ZZ^{2\times 2},\cdot,I)$ onto $(\ZZ,\cdot,1)$.
This homomorphism is an epimorphism.
\item
Let $X$ be a set.
The mapping
$\phi: P(X)\rightarrow P(X): A \mapsto  X\setminus A$
is a homomorphism from $(P(X),\cap,X)$ onto $(P(X),\cup,\emptyset)$,
since for all $A,B\in P(X)$,
$\phi(A\cap B) = X\setminus (A\cap B) = (X\setminus A)\cup (X\setminus B) =\phi(A)\cup \phi(B)$
and $\phi(X) = X\setminus X = \emptyset$.
This homomorphism is an isomorphism.
\end{itemize}
\end{example}

\begin{lemma}\label{l-comp}
Let $\phi:M\rightarrow M'$ be a homomorphism between monoids.
\begin{itemize}
\item
The image of $\phi$ is a submonoid of $M'$, written  $\im(\phi)$.
\item
If $N'$ is a submonoid of $M'$, then $\phi^{-1}(N')$ is a submonoid of $M$.
\item
If $\psi:M'\rightarrow M''$ is another homomorphism between monoids, the composition $\phi\psi:M\rightarrow M''$ is also a homomorphism.
\end{itemize}
\end{lemma}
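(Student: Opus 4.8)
The plan is to verify each of the three bullets directly from the definitions of \emph{homomorphism} and \emph{submonoid}, in the order listed. All three are routine ``diagram chasing'' arguments; the only care needed is to keep track of which identity element and which operation is being used at each stage.

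First, for $\im(\phi)$: I would show it is a non-empty subset of $M'$ containing $e'$ and closed under $\cdot'$. Non-emptiness and the identity both follow from $\phi(e)=e'$, so $e'\in\im(\phi)$. For closure, take $a',b'\in\im(\phi)$, write $a'=\phi(a)$ and $b'=\phi(b)$ with $a,b\in M$; then $a'\cdot' b' = \phi(a)\cdot'\phi(b) = \phi(a\cdot b)\in\im(\phi)$ since $a\cdot b\in M$. Hence $\im(\phi)$ is a submonoid of $M'$.

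Second, for $\phi^{-1}(N')$ where $N'$ is a submonoid of $M'$: since $N'$ contains $e'$ and $\phi(e)=e'$, we get $e\in\phi^{-1}(N')$, so it is non-empty and contains the identity of $M$. For closure, if $u,v\in\phi^{-1}(N')$ then $\phi(u),\phi(v)\in N'$, and since $N'$ is closed, $\phi(u\cdot v)=\phi(u)\cdot'\phi(v)\in N'$, so $u\cdot v\in\phi^{-1}(N')$. Thus $\phi^{-1}(N')$ is a submonoid of $M$.

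Third, for the composition $\phi\psi:M\to M''$ (using the paper's convention that $\phi\psi$ means ``first $\phi$, then $\psi$''): I would check the two defining conditions. It sends identity to identity, since $(\phi\psi)(e)=\psi(\phi(e))=\psi(e')=e''$. It commutes with the operations, since for $x,y\in M$,
\[
(\phi\psi)(x\cdot y)=\psi(\phi(x\cdot y))=\psi(\phi(x)\cdot'\phi(y))=\psi(\phi(x))\cdot''\psi(\phi(y))=(\phi\psi)(x)\cdot''(\phi\psi)(y),
\]
using first that $\phi$ is a homomorphism and then that $\psi$ is. Hence $\phi\psi$ is a homomorphism. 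There is no real obstacle here; the main thing to watch is not to conflate the three operations $\cdot$, $\cdot'$, $\cdot''$ and the three identities $e$, $e'$, $e''$, and to respect the left-to-right composition order fixed earlier in the paper.
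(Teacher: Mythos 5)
Your proof is correct and follows essentially the same route as the paper's: closure and the identity element are checked directly from $\phi(e)=e'$ and $\phi(x\cdot y)=\phi(x)\cdot'\phi(y)$ for each of the three bullets. Your phrasing of the closure step for $\im(\phi)$ (starting from arbitrary elements of the image and choosing preimages) is marginally more explicit than the paper's, but the argument is the same.
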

\begin{proof}
Let $x,y\in M$.
Then $\phi(x)\cdot'\phi(y) = \phi(x\cdot y) \in\im(\phi)$.
Moreover, $e'= \phi(e)\in \im(\phi)$.
Therefore, $\im(\phi)$ is a submonoid of $M'$.

Let $x,y\in\phi^{-1}(N')$.
Then $\phi(x\cdot y) = \phi(x)\cdot' \phi(y) \in N'$. Moreover,  $\phi(e)=e'\in N'$ and so $e\in\phi^{-1}(N')$.
Thus $\phi^{-1}(N')$ is a submonoid of $M$. 

The composition of mappings is a mapping.
Moreover, for all $x,y\in M$,
\begin{eqnarray*}
(\phi\psi)(x\cdot y) 
&=& \psi(\phi(x\cdot y)) = \psi(\phi(x)\cdot'\phi(y)) \\
&=& \psi(\phi(x))\cdot'' \psi(\phi(y))\\
&=& (\phi\psi)(x)\cdot'' (\phi\psi)(y)
\end{eqnarray*}
and $(\phi\psi)(e) = \psi(\phi(e)) = \psi(e') = e''.$
\end{proof}

\begin{proposition}
The set of all endomorphisms $\End(M)$ of a monoid $M$ forms a monoid under the composition of mappings.
\end{proposition}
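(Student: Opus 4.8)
The plan is to exhibit $\End(M)$ as a submonoid of the full transformation monoid $T(M)$ of the underlying set of $M$, and then to invoke the (already established) fact that a submonoid is in particular a monoid. Every endomorphism of $M$ is by definition a mapping $M\to M$, so $\End(M)\subseteq T(M)$, and the operation on $\End(M)$ is the restriction of function composition $(fg)(x)=g(f(x))$, which is the operation of $T(M)$.

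First I would check that $\End(M)$ contains the identity element $\id=\id_M$ of $T(M)$: indeed $\id(e)=e$ and $\id(x\cdot y)=x\cdot y=\id(x)\cdot\id(y)$ for all $x,y\in M$, so $\id$ is a homomorphism $M\to M$, hence an endomorphism. Next I would check closure: if $\phi,\psi\in\End(M)$, then both are homomorphisms $M\to M$, so by the third part of Lemma~\ref{l-comp} (applied with $M=M'=M''$) the composition $\phi\psi$ is again a homomorphism $M\to M$, i.e.\ $\phi\psi\in\End(M)$.

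Associativity of the operation on $\End(M)$ is inherited from $T(M)$, where composition of mappings is associative, and $\id$ is a two-sided identity there as well; restricting to the subset $\End(M)$ preserves both properties. Hence $\End(M)$ is a non-empty subset of $T(M)$ that contains the identity and is closed under composition, i.e.\ a submonoid of $T(M)$, and therefore a monoid under composition of mappings in its own right.

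I do not expect a genuine obstacle here: the only points requiring a little care are keeping track of the composition convention $(fg)(x)=g(f(x))$ when applying Lemma~\ref{l-comp}, and observing that "submonoid" already entails "monoid" so that no separate verification of the monoid axioms for $\End(M)$ is needed beyond what $T(M)$ supplies.
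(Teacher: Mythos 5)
Your proposal is correct and follows essentially the same route as the paper: closure under composition via the third part of Lemma~\ref{l-comp}, associativity inherited from composition of mappings, and the identity mapping as the identity element (the paper even records the submonoid-of-$T(M)$ observation as a remark immediately after the proposition). Your explicit check that $\id_M$ is a homomorphism is a small extra detail the paper leaves implicit, but otherwise the arguments coincide.
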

\begin{proof}
By Lemma~\ref{l-comp}, the composition of two endomorphisms 
$\phi:M\rightarrow M$ and $\psi:M\rightarrow M$ is an endomorphism $\phi\psi:M\rightarrow M$.
Thus the set of all endomorphisms of $M$ is closed under composition.
Moreover, the composition of mappings is associative and the identity mapping $\id:M\rightarrow M$ is the identity element.
\end{proof}
Therefore, the endomorphism monoid $\End(M)$ is a submonoid of the full transformation monoid $T(M)$.

\begin{proposition}
Let $M$ and $N$ be monoids.
If $\phi:M\rightarrow M'$ is an isomorphism, the inverse mapping $\phi^{-1}:M'\rightarrow M$ is also an isomorphism.
\end{proposition}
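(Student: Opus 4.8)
The plan is to verify directly that the set-theoretic inverse $\phi^{-1}\colon M'\rightarrow M$ — which exists because an isomorphism is by definition a bijection — is itself a monoid homomorphism, and then to observe that a bijective homomorphism is an isomorphism by definition, so nothing further is needed. Thus the argument splits into three small checks: $\phi^{-1}$ is a well-defined bijection, $\phi^{-1}$ sends identity to identity, and $\phi^{-1}$ commutes with the operations.

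First I would note that since $\phi$ is one-to-one and onto, the inverse map $\phi^{-1}$ exists and is again one-to-one and onto; this is a purely set-theoretic fact about bijections and requires no monoid structure. Next, for the identity element: from $\phi(e)=e'$ we immediately get $\phi^{-1}(e')=e$, so $\phi^{-1}$ maps the identity of $M'$ to the identity of $M$.

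The only step with any content is compatibility with the operations, and even this is routine. Given $x',y'\in M'$, surjectivity of $\phi$ supplies $x,y\in M$ with $\phi(x)=x'$ and $\phi(y)=y'$, equivalently $x=\phi^{-1}(x')$ and $y=\phi^{-1}(y')$. Then
\begin{eqnarray*}
\phi(x\cdot y) = \phi(x)\cdot'\phi(y) = x'\cdot' y',
\end{eqnarray*}
and applying $\phi^{-1}$ to both sides yields $\phi^{-1}(x'\cdot' y') = x\cdot y = \phi^{-1}(x')\cdot\phi^{-1}(y')$, as required.

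Having shown $\phi^{-1}$ is a homomorphism that is also one-to-one and onto, I would conclude that $\phi^{-1}$ is an isomorphism. There is no real obstacle here; the statement is essentially a bookkeeping exercise, and the one point worth stating explicitly is that surjectivity of $\phi$ is exactly what licenses the choice of preimages $x,y$ used in the operation-compatibility computation.
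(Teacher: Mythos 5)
Your proposal is correct and follows essentially the same route as the paper's proof: establish that $\phi^{-1}$ exists as a bijection, verify $\phi^{-1}(e')=e$ from $\phi(e)=e'$, and check compatibility with the operation by computing $\phi(\phi^{-1}(x')\cdot\phi^{-1}(y'))=x'\cdot' y'$ and applying $\phi^{-1}$ (equivalently, injectivity of $\phi$). The only cosmetic difference is that you name the preimages $x,y$ explicitly before computing, whereas the paper writes $\phi^{-1}(x)$, $\phi^{-1}(y)$ directly.
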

\begin{proof}
The inverse mapping $\phi^{-1}$ exists, since $\phi$ is bijective.
Moreover, the mapping $\phi^{-1}$ is also bijective and we have $\phi\phi^{-1}=\id = \phi^{-1}\phi$.
Since $\phi$ is a homomorphism, we have for all $x,y\in M'$,
$$\phi(\phi^{-1}(x) \cdot \phi^{-1}(y)) = \phi(\phi^{-1}(x)) \cdot' \phi(\phi^{-1}(y)) = x\cdot' y$$ 
and therefore
$\phi^{-1}(x) \cdot \phi^{-1}(y) = \phi^{-1}(x\cdot' y)$.
Moreover, $\phi(e)=e'$ implies $\phi^{-1}(e')=e$.
\end{proof}

\begin{theorem}\label{t-right}
Each monoid $M$ is isomorphic to a transformation monoid. 
\end{theorem}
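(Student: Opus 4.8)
The plan is to exhibit an explicit isomorphism from $M$ onto the submonoid $R_M=\{r_x\mid x\in M\}$ of the full transformation monoid $T(M)$ introduced in Lemma~\ref{p-right}. Concretely, I would define the \emph{right regular representation} $\rho:M\rightarrow R_M$ by $\rho(x)=r_x$, and show it is a monoid isomorphism. Since Lemma~\ref{p-right} already establishes that $R_M$ is a submonoid of $T(M)$, this yields the theorem once $\rho$ is verified to be an isomorphism onto $R_M$.

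First I would check that $\rho$ is a homomorphism. This is essentially already contained in the proof of Lemma~\ref{p-right}: the identities $r_xr_y=r_{xy}$ and $r_e=\id$ give exactly $\rho(x)\rho(y)=\rho(xy)$ and $\rho(e)=\id$, which is the identity element of $T(M)$. Next I would observe that $\rho$ is surjective onto $R_M$ by the very definition of $R_M$ as the set of all right multiplications.

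The one point that requires the monoid axiom (as opposed to merely the semigroup structure) is injectivity, and this is the step I would flag as the crux — though it is short. If $r_x=r_y$ as maps on $M$, then evaluating at the identity element gives $x=e\cdot x=e\,r_x=e\,r_y=e\cdot y=y$. Thus the presence of $e$ lets us recover $x$ from the transformation $r_x$, so $\rho$ is one-to-one. Hence $\rho$ is a bijective homomorphism from $M$ onto $R_M$, i.e., an isomorphism, and $R_M$ is a transformation monoid by Lemma~\ref{p-right}.

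I would close by remarking that this is the monoid analogue of Cayley's theorem for groups, and that one could equally use left multiplications with the composition convention adjusted accordingly. No genuine obstacle arises; the only subtlety worth stating explicitly is that injectivity uses the identity element, which is precisely why the statement is phrased for monoids.
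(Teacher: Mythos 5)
Your proposal is correct and follows essentially the same route as the paper: both use the right regular representation $x\mapsto r_x$ onto $R_M$, invoke Lemma~\ref{p-right} for the homomorphism identities $r_xr_y=r_{xy}$ and $r_e=\id$, and prove injectivity by evaluating at the identity element. Your explicit remark that injectivity is where the identity element is needed is a nice touch, but the argument itself matches the paper's.
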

\begin{proof}
Claim that the mapping $\phi: M\rightarrow R_M : x\mapsto r_x$ is an isomorphism.
Indeed, for all $x,y\in M$, we have $\phi(xy) = r_{xy} = r_xr_y = \phi(x)\phi(y)$
and $\phi(e) = r_e = \id.$
It is clear that the mapping is onto.
It is also one-to-one, since $r_x=r_y$ with $x,y\in M$ implies by substitution of the identity element $e$,
$x  = xe = r_x(e) = r_y(e) =ye = y.$
\end{proof}
Therefore, the theory of monoids can be considered as the theory of transformations.

\begin{example}
Consider the monoid $M=\{a,e,b\}$ (corresponding to the multiplicative monoid $\{-1,0,1\}$) with the multiplication table
$$\begin{array}{r||rrr}
\cdot & a & e & b\\\hline\hline
a     & b & e & a\\
e     & e & e & e\\
b     & a & e & b
\end{array}$$
The transformation monoid of left multiplications $R_M$ consists of the elements 
$$r_b = \left( \begin{array}{rrr} a & e & b\\ b & e & a\end{array} \right),\;
r_e   = \left( \begin{array}{rrr} a & e & b\\ e & e & e\end{array} \right),\;
r_b   = \left( \begin{array}{rrr} a & e & b\\ a & e & b\end{array} \right),$$
where the right multiplications correspond one-to-one with the rows of the multiplication table of $M$.
The multiplication table of $R_M$ is as follows:
$$\begin{array}{r||rrr}
\cdot  & r_b & r_e & r_b\\\hline\hline
r_a    & r_b & r_e & r_a\\
r_e    & r_e & r_e & r_e\\
r_b    & r_a & r_e & r_b
\end{array}$$
\end{example}

Let $M$ be a monoid.
Suppose $\equiv$ is an equivalence relation on $M$.
The {\em equivalence class\/}\index{equivalence class} of $x\in M$ is given by the set of all elements of $M$ which are equivalent to $x$,
\begin{eqnarray}
[x] = \{y\in M\mid x\equiv y\}.
\end{eqnarray}
The {\em quotient set\/}\index{quotient set} consisting of all equivalence classes is
\begin{eqnarray}
M/\!\equiv \;= \{[x]\mid x\in M\}.
\end{eqnarray}
An equivalence relation $\equiv$ on $M$ is a {\em congruence relation}\index{congruence} 
if it is compatible with the monoid operation, i.e., for all $x,x',y,y'\in M$,
\begin{eqnarray}
x\equiv x' \wedge y\equiv y' \;\Longrightarrow\; xy\equiv x'y',
\end{eqnarray}
or equivalently,
\begin{eqnarray}
[x] = [x'] \wedge [y] = [y'] \;\Longrightarrow\; [xy] = [x'y'].
\end{eqnarray}
The equivalence classes of a congruence relation are called {\em congruence classes}\index{congruence class} and elements in the same congruence class are said to be
{\em congruent}\index{congruent}.

\begin{example}
Let $n\geq 2$ be an integer.
The relation on $\ZZ$ given by
$$a\equiv b \;(\mod\;n)\quad :\Longleftrightarrow\quad n\mbox{ divides } (a-b)$$
is a congruence on the monoid $(\ZZ,+,0)$, or the monoid $(\ZZ,\cdot,1)$.
\end{example}

\begin{proposition}\label{p-Mc}
Let $M$ be a monoid and $\equiv$ a congruence relation on $M$.
Then the quotient set $M/\!\equiv$ forms a monoid under the operation
\begin{eqnarray}
[x] \cdot [y] = [xy], \quad x,y\in M.
\end{eqnarray}
The monoid $M/\!\equiv$ is commutative if $M$ is commutative.
\end{proposition}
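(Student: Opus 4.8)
The plan is to verify that the rule $[x]\cdot[y]=[xy]$ defines a genuine binary operation on $M/\!\equiv$ and then to transfer the monoid axioms from $M$ to the quotient via the canonical surjection $x\mapsto[x]$.

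First I would check well-definedness, which is the only step that genuinely uses the congruence hypothesis. One must show that the value $[xy]$ does not depend on the representatives chosen for $[x]$ and $[y]$; that is, if $[x]=[x']$ and $[y]=[y']$, then $[xy]=[x'y']$. This is precisely the defining property of a congruence relation in its equivalence-class formulation, so the operation is well-defined on $M/\!\equiv$.

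Next come the remaining monoid axioms, each a one-line transfer of the corresponding fact about $M$. For associativity, given $[x],[y],[z]\in M/\!\equiv$ one computes $([x]\cdot[y])\cdot[z]=[xy]\cdot[z]=[(xy)z]=[x(yz)]=[x]\cdot[yz]=[x]\cdot([y]\cdot[z])$, where the middle equality is associativity in $M$. For the identity, the class $[e]$ works, since $[x]\cdot[e]=[xe]=[x]=[ex]=[e]\cdot[x]$ for every $[x]\in M/\!\equiv$. Hence $M/\!\equiv$ is a monoid. For the final claim, if $M$ is commutative then $[x]\cdot[y]=[xy]=[yx]=[y]\cdot[x]$, so $M/\!\equiv$ is commutative as well.

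The hard part, such as it is, is the well-definedness check — and this has effectively been arranged in advance by the way ``congruence relation'' was defined, so no real obstacle remains; everything else is routine unwinding of definitions.
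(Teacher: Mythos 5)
Your proposal is correct and follows essentially the same route as the paper's proof: well-definedness from the equivalence-class formulation of the congruence property, followed by the transfer of associativity, the identity $[e]$, and commutativity from $M$ to the quotient by direct computation. Nothing is missing.
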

\begin{proof}
The operation is well-defined, since if $[x]=[x']$ and $[y]=[y']$ for some $x,x',y,y'\in M$, 
then by the congruence property $[xy] = [x'y']$.
The operation is associative, since  for all $x,y,z\in M$, 
$$[x]\cdot([y]\cdot[z]) = [x]\cdot[yz] = [x(yz)] = [(xy)z] = [xy]\cdot[z] = ([x]\cdot[y])\cdot[z].$$
The equivalence class of the identity element $e\in M$ is the identity element of~$M/\!\equiv$, 
since for all $x\in M$, $[x]\cdot[e] = [xe] = [x]=[ex] = [e]\cdot[x]$.
Finally, if $M$ is commutative, then for all $x,y\in M$, $[x]\cdot[y] = [xy] = [yx] = [y]\cdot[x]$.
\end{proof}

\begin{example}
Reconsider the congruence modulo $n$ on $\ZZ$.
Two integers are congruent modulo $n$ if and only if they have the same remainder upon integral division by $n$.
The quotient set $\ZZ/\!\equiv$, consisting of all congruence classes of remainders $[0],\ldots,[n-1]$, 
forms a commutative monoid under addition, or multiplication. 
\end{example}

\begin{proposition}
Let $M$ be a monoid and $\equiv$ a congruence on $M$.
Then the mapping $\pi:M\rightarrow M/\!\equiv:x\mapsto [x]$ is an epimorphism.
\end{proposition}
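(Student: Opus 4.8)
The plan is to verify directly that $\pi$ satisfies the two defining conditions of a homomorphism and is, in addition, surjective. Note first that by Proposition~\ref{p-Mc} the quotient set $M/\!\equiv$ is genuinely a monoid: its operation is $[x]\cdot[y]=[xy]$ and its identity element is the class $[e]$ of the identity $e\in M$. So the target of $\pi$ is a well-defined monoid, and it only remains to check the three assertions about $\pi$ itself.

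First I would check that $\pi$ respects the operation. For all $x,y\in M$ we have, straight from the definition of $\pi$ and of the operation on $M/\!\equiv$,
\begin{eqnarray*}
\pi(x\cdot y) = [x\cdot y] = [x]\cdot[y] = \pi(x)\cdot\pi(y).
\end{eqnarray*}
Next I would check that $\pi$ sends the identity to the identity: $\pi(e)=[e]$, and $[e]$ is precisely the identity element of $M/\!\equiv$ as recorded in Proposition~\ref{p-Mc}. Hence $\pi$ is a homomorphism of monoids.

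Finally I would observe that $\pi$ is onto: an arbitrary element of $M/\!\equiv$ has the form $[x]$ for some $x\in M$ by definition of the quotient set, and $[x]=\pi(x)$, so $x$ is a preimage. Combining surjectivity with the homomorphism property gives that $\pi$ is an epimorphism. There is no real obstacle here; the only thing that required care — that $M/\!\equiv$ is a monoid and that the operation $[x]\cdot[y]=[xy]$ is well defined, which is where the congruence hypothesis is actually used — has already been discharged in Proposition~\ref{p-Mc}, so the present statement is an immediate consequence.
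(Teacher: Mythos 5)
Your proof is correct and follows essentially the same route as the paper's: invoke Proposition~\ref{p-Mc} for the monoid structure on $M/\!\equiv$, verify $\pi(xy)=[xy]=[x]\cdot[y]$ and $\pi(e)=[e]$, and note surjectivity from the definition of the quotient set. Your write-up is in fact slightly more careful than the paper's, which leaves the surjectivity observation as a one-line remark.
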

\begin{proof}
The mapping $\pi$ is surjective.
Moreover, by Prop.~\ref{p-Mc}, for all $x,y\in M$, $\phi(xy) = [xy] = [x]\cdot[y] = \phi(x)\cdot\phi(y)$,
and $\phi(e) = [e]$.
\end{proof}

\begin{example}
Consider the monoid $M$ given by the multiplication table 
$$\begin{array}{c||cccc} 
\cdot & e & a & b & c\\ \hline\hline
e     & e & a & b & c \\
a     & a & e & c & b \\
b     & b & c & e & a \\
c     & c & b & a & e \\
\end{array}$$
The partition $\{\{e,a\},\{b,c\}\}$ defines a congruence $\equiv$ on $M$, and the quotient monoid $M/\!\equiv$ is given by the multiplication table
$$\begin{array}{c||cc} 
\cdot & \mbox{$[e]$} & \mbox{$[b]$} \\ \hline\hline
\mbox{$[e]$} & \mbox{$[e]$} & \mbox{$[b]$}\\
\mbox{$[b]$} & \mbox{$[b]$} & \mbox{$[e]$}\\
\end{array}$$
\end{example}

\begin{lemma}
Let $\phi:M\rightarrow N$ be a homomorphism between monoids $M$ and $N$.
Then the relation on $M$ defined by
\begin{eqnarray}
x \equiv y\quad:\Longleftrightarrow\quad \phi(x)=\phi(y),  \quad x,y\in M,
\end{eqnarray}
is a congruence on $M$.
\end{lemma}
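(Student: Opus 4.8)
The plan is to check directly the two requirements in the definition of a congruence relation: first that $\equiv$ is an equivalence relation on $M$, and then that it is compatible with the monoid operation of $M$.

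For the first part, I would note that $\equiv$ is simply the pullback along $\phi$ of the equality relation on $N$, and that equality is an equivalence relation. Concretely: reflexivity holds because $\phi(x)=\phi(x)$ for all $x\in M$; symmetry holds because $\phi(x)=\phi(y)$ implies $\phi(y)=\phi(x)$; and transitivity holds because $\phi(x)=\phi(y)$ together with $\phi(y)=\phi(z)$ gives $\phi(x)=\phi(z)$. Hence $x\equiv x$, $x\equiv y\Rightarrow y\equiv x$, and $x\equiv y\wedge y\equiv z\Rightarrow x\equiv z$.

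For the second part, I would take arbitrary $x,x',y,y'\in M$ with $x\equiv x'$ and $y\equiv y'$, that is, $\phi(x)=\phi(x')$ and $\phi(y)=\phi(y')$. Since $\phi$ is a homomorphism, I compute $\phi(xy)=\phi(x)\phi(y)=\phi(x')\phi(y')=\phi(x'y')$, which says precisely that $xy\equiv x'y'$. This establishes the compatibility condition, and therefore $\equiv$ is a congruence on $M$.

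There is no substantial obstacle in this argument; the only point requiring care is invoking the homomorphism property $\phi(xy)=\phi(x)\phi(y)$ at exactly the right place, while everything else follows immediately from the fact that equality in $N$ is an equivalence relation.
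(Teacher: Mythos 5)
Your proposal is correct and follows essentially the same route as the paper: the compatibility step $\phi(xy)=\phi(x)\phi(y)=\phi(x')\phi(y')=\phi(x'y')$ is exactly the paper's argument. The only difference is that you also spell out explicitly that $\equiv$ is an equivalence relation (pulled back from equality on $N$), which the paper leaves implicit; this is a harmless and reasonable addition.
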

\begin{proof}
Let $x,x',y,y'\in M$ with $x\equiv x'$ and $y\equiv y'$.
Then $\phi(x)=\phi(x')$ and $\phi(y)=\phi(y')$.
Since $\phi$ is a homomorphism, $\phi(xy) = \phi(x)\cdot\phi(y) = \phi(x')\cdot\phi(y') = \phi(x'y')$
and hence $xy\equiv x'y'$.
\end{proof}
This congruence relation is called the {\em kernel}\index{kernel} of $\phi$ and is denoted by $\ker(\phi)$.

\begin{proposition}[Homomorphism Theorem]
Let $\phi:M\rightarrow N$ be a homomorphism between monoids $M$ and $N$.
Then the mapping $\psi: M/\ker(\phi)\rightarrow N$ given by $[x]\mapsto \phi(x)$ is a monomorphism such that $\pi\psi = \phi$,
where $\pi:M\rightarrow M/\ker(\phi):x\mapsto[x]$ is a natural epimorphism.
\end{proposition}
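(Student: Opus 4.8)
The plan is to check, in this order, that $\psi$ is well-defined, that it is a monoid homomorphism, that it is injective, and that the identity $\pi\psi=\phi$ holds; that $\pi$ is an epimorphism has already been established for quotient maps in general (the earlier proposition on the natural map onto a quotient monoid).

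First I would address well-definedness, which is the only genuinely substantive point. The rule $[x]\mapsto\phi(x)$ is stated in terms of a representative of the congruence class, so I must show the output does not depend on that choice. This is exactly what the definition of $\ker(\phi)$ supplies: if $[x]=[x']$ in $M/\ker(\phi)$, then $x\equiv x'$, which by definition of the kernel congruence means $\phi(x)=\phi(x')$. Hence $\psi\colon M/\ker(\phi)\to N$ is a well-defined map. This is the one step where it is essential that we quotient by $\ker(\phi)$ rather than by an arbitrary congruence.

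Next, that $\psi$ is a homomorphism: using the operation on $M/\ker(\phi)$ from Proposition~\ref{p-Mc} together with the homomorphism property of $\phi$, for all $x,y\in M$,
\[
\psi([x]\cdot[y]) = \psi([xy]) = \phi(xy) = \phi(x)\cdot\phi(y) = \psi([x])\cdot\psi([y]),
\]
and $\psi([e])=\phi(e)$ is the identity element of $N$. For injectivity, suppose $\psi([x])=\psi([y])$ for some $x,y\in M$; then $\phi(x)=\phi(y)$, so $x\equiv y$ under $\ker(\phi)$, whence $[x]=[y]$. Thus $\psi$ is a monomorphism. Finally, reading composition left-to-right as in the paper, for every $x\in M$ we have $(\pi\psi)(x)=\psi(\pi(x))=\psi([x])=\phi(x)$, so $\pi\psi=\phi$.

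I do not expect any real obstacle here: once well-definedness is secured, the homomorphism property, injectivity, and the factorization are all immediate unwindings of the definitions of $\ker(\phi)$, of the quotient operation, and of composition.
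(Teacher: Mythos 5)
Your proposal is correct and follows essentially the same route as the paper's own proof: well-definedness via the definition of $\ker(\phi)$, the homomorphism property via the quotient operation, injectivity by reversing the well-definedness argument, and the factorization $\pi\psi=\phi$ by direct computation with the left-to-right composition convention. The only difference is expository (you flag well-definedness as the substantive step and explicitly note that $\psi$ preserves the identity element), which matches the paper's content exactly.
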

\begin{proof}
The mapping $\psi$ is well-defined, since if $[x]=[y]$ for some $x,y\in M$,
then $x\equiv y$ and so $\phi(x)=\phi(y)$.
The mapping is a homomorphism, since for all $x,y\in M$,
$\psi([xy]) =\phi(xy) = \phi(x)\cdot \phi(y) = \psi([x])\cdot \psi([y])$, and $\psi([e]) = \phi(e) = e'$ for the identity elements $e\in M$ and $e'\in N$.
The mapping is one-to-one, since if $\psi([x]) = \psi([y])$ for some $x,y\in M$,
then $\phi(x)=\phi(y)$,  or equivalently $x\equiv y$ or $[x]=[y]$.
Finally, for each $x\in M$, $(\pi\psi)(x) = \psi(\pi(x)) = \psi([x]) = \phi(x)$.
\end{proof}
This result provides the following commutative diagram:
$$\xymatrix{
M\ar@{->}[d]^{\pi}\ar@{->}[rr]^{\phi} & & N   \\
M/\ker(\phi) \ar@{-->}[rru]_\psi         & & \\
}$$
The Homomorphism theorem shows that any homomorphic image of a monoid can be constructed by means of a congruence relation on $M$.

\begin{example}
Reconsider the epimorphism (length mapping) $\phi:\Sigma^*\rightarrow\NN_0:x\mapsto |x|$.
In the kernel of this mapping, two strings are congruent if and only if they have the same length; 
i.e., the congruence class of $x\in\Sigma^*$ is  $[x] = \{y\in\Sigma^*\mid |x|=|y|\}$.
The monomorphism $\psi:\Sigma^*/\ker(\phi) \rightarrow \NN_0:[x]\mapsto |x|$ assigns to each congruence class the length of its elements.
\end{example}

A monoid $M$ is called {\em free\/}\index{monoid!free} 
if $M$ has a generating set $X$ such that each element of $M$ can be uniquely written as a product of elements in $X$.
In this case, the set $X$ is called a {\em basis\/}\index{monoid!basis} of $M$.

\begin{proposition}\label{p-sigma}
The word monoid\/ $\Sigma^*$ over $\Sigma$ is free with basis $\Sigma$.
\end{proposition}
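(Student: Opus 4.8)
The plan is to verify directly the two defining properties of a free monoid for the pair $(\Sigma^*,\Sigma)$: that $\Sigma$ is a generating set, and that every element of $\Sigma^*$ admits a \emph{unique} expression as a product of elements of $\Sigma$.

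First I would recall why $\Sigma$ generates $\Sigma^*$. By definition a string of length $n\geq 0$ is a finite sequence $x=x_1\ldots x_n$ with each $x_i\in\Sigma$, and under concatenation this string is precisely the product $x_1\cdots x_n$ of the one-symbol strings $x_1,\ldots,x_n\in\Sigma$; the empty string $\epsilon$ is the empty product $(n=0)$. Hence $\Sigma^*=\langle\Sigma\rangle$, which gives existence of a representation.

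For uniqueness, suppose $x_1\cdots x_m=y_1\cdots y_n$ with all $x_i,y_j\in\Sigma$. I would invoke the length homomorphism $\phi:\Sigma^*\rightarrow\NN_0:x\mapsto|x|$ from Example~\ref{e-hom-mon}. Since each symbol, viewed as a string, has length $1$, we get $\phi(x_1\cdots x_m)=m$ and $\phi(y_1\cdots y_n)=n$, so $m=n$; in particular the only product of elements of $\Sigma$ equal to $\epsilon$ is the empty one. Knowing the number of factors agrees, comparing the two sides componentwise — concatenation of one-symbol strings merely juxtaposes the symbols — yields $x_i=y_i$ for $1\leq i\leq m$. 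This last step can be packaged as an induction on $m$: strip off the leading symbol from each side (they coincide, being the first component of the common string), then apply the inductive hypothesis to the remaining product of length $m-1$.

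There is no genuine obstacle here; the one point deserving care is to make explicit that the operation on $\Sigma^*$ (concatenation) neither merges nor splits symbols, so that a product of $n$ elements of $\Sigma$ is literally the length-$n$ string having those symbols as its successive components. That observation makes both the existence and the uniqueness of the representation transparent, and passing through the length homomorphism is simply the cleanest way to fix the number of factors without an ad hoc combinatorial count.
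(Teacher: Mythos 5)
Your proposal is correct and matches the paper's argument in essence: both rest on the observation that a product of one-symbol strings is literally the sequence with those symbols as components, so equality of strings (as mappings from an index set) forces equal length and equal components. Your detour through the length homomorphism to obtain $m=n$ is just a repackaging of the paper's ``equality of mappings'' step and changes nothing substantive.
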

\begin{proof}
By definition, $\Sigma^* = \langle \Sigma\rangle$.
Suppose the word $x\in\Sigma^*$ has two representations
$$x_1x_2\ldots x_m = x = y_1y_2\ldots y_n,\quad x_i,y_j\in\Sigma.$$
Each word is a sequence or mapping  
$$\left( \begin{array}{cccc} 1 & 2 & \ldots & m \\ x_1 & x_2 & \ldots & x_m \end{array} \right) 
= x = 
\left( \begin{array}{cccc} 1 & 2 & \ldots & n \\ y_1 & y_2 & \ldots & y_n \end{array} \right).$$
Thus  by the equality of mappings 
$m=n$ and $x_i=y_i$ for $1\leq i\leq n$.
\end{proof}

\begin{theorem}[Universal Property]\label{t-up}
Let $M$ be a free monoid with basis $X$.
For each monoid $M'$ and each mapping $\phi_0:X\rightarrow M'$ 
there is a unique homomorphism $\phi:M\rightarrow M'$ extending $\phi_0$.
\end{theorem}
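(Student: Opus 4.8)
The plan is to construct $\phi$ explicitly using the fact that $M$ is free with basis $X$: every element $m \in M$ has a \emph{unique} expression $m = x_1 x_2 \cdots x_n$ with $x_i \in X$ and $n \geq 0$. Given the data $\phi_0 : X \to M'$, I would simply decree
$$\phi(m) = \phi_0(x_1)\cdot' \phi_0(x_2)\cdots' \phi_0(x_n),$$
and $\phi(e) = e'$ for the empty product ($n=0$). Because the factorization of $m$ over $X$ is unique, this assignment is well-defined — this is exactly where freeness is used, and it is the only subtle point. (For $\Sigma^*$ this is the content of Proposition \ref{p-sigma}; here $M$ is an abstract free monoid, so I invoke the defining property directly.)

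Next I would check that $\phi$ is a homomorphism. For $m = x_1\cdots x_n$ and $m' = y_1\cdots y_k$, the product $mm'$ has the expression $x_1\cdots x_n y_1\cdots y_k$ over $X$ — and by uniqueness this \emph{is} the canonical factorization of $mm'$, so $\phi(mm') = \phi_0(x_1)\cdots'\phi_0(x_n)\cdot'\phi_0(y_1)\cdots'\phi_0(y_k) = \phi(m)\cdot'\phi(m')$, using associativity in $M'$. The identity is preserved by construction. Also $\phi$ extends $\phi_0$: each $x \in X$ has the one-term factorization $x = x$, so $\phi(x) = \phi_0(x)$.

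Finally, uniqueness. Suppose $\psi : M \to M'$ is any homomorphism with $\psi|_X = \phi_0$. Then for $m = x_1\cdots x_n$, repeated application of the homomorphism property gives $\psi(m) = \psi(x_1)\cdots'\psi(x_n) = \phi_0(x_1)\cdots'\phi_0(x_n) = \phi(m)$, and $\psi(e) = e' = \phi(e)$; hence $\psi = \phi$.

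The main obstacle is really just pinning down well-definedness: one must be careful that the generating set $X$ being a \emph{basis} (not merely a generating set) is what makes the formula for $\phi$ unambiguous — without uniqueness of factorization, different expressions for the same $m$ could force incompatible values. Everything after that is a routine induction on the length of the factorization.
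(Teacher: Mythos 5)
Your proposal is correct and follows essentially the same route as the paper's proof: define $\phi$ on the unique factorization of each element over the basis $X$, verify the homomorphism property by concatenating factorizations, and obtain uniqueness by evaluating any other extension $\psi$ on such a factorization. Your explicit remark that uniqueness of factorization is what makes both the definition and the computation of $\phi(mm')$ unambiguous is exactly the point the paper relies on (and states slightly more tersely), so there is nothing to add.
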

The universal property provides the following commutative diagram:
$$\xymatrix{
X\ar@{->}[d]^{\rm id}\ar@{->}[rr]^{\phi_0} & & M'   \\
M \ar@{-->}[rru]_\phi         & & \\
}$$
\begin{proof}
Since $M$ is a free monoid with basis $X$,
each element $x\in M$ has a unique representation $x=x_1x_2\ldots x_n$, where $x_i\in X$.
Thus the mapping $\phi_0:X\rightarrow M'$ can be extended to a mapping $\phi:M\rightarrow M'$ such that
$$\phi(x_1x_2\ldots x_n) = \phi_0(x_1)\phi_0(x_2)\ldots \phi_0(x_n), \quad x_i\in X.$$
The uniqueness property ensures that $\phi$ is a mapping.
This mapping is a homomorphism, since for all $x=x_1\ldots x_m\in\Sigma^*$ and $y=y_1\ldots y_n\in\Sigma^*$ with $x_i,y_j\in\Sigma$, we have
\begin{eqnarray*}
\phi(xy) 
&=& \phi(x_1\ldots x_my_1\ldots y_n) \\
&=& \phi_0(x_1)\ldots\phi_0(x_m)\phi_0(y_1)\ldots \phi_0(y_n)\\
&=& \phi(x_1\ldots x_m)\phi(y_1\ldots y_n) = \phi(x)\phi(y).
\end{eqnarray*}
Let $\psi:M\rightarrow M'$ be another homomorphism extending $\phi_0$.
Then for each $x=x_1\ldots x_n\in\Sigma^*$ with $x_i\in\Sigma$,
\begin{eqnarray*}
\phi(x_1x_2\ldots x_n) 
&=& \phi_0(x_1)\phi_0(x_2)\ldots \phi_0(x_n)\\
&=& \psi(x_1)\psi(x_2)\ldots \psi(x_n)\\ 
&=& \psi(x_1x_2\ldots x_n). 
\end{eqnarray*}
Hence, $\phi=\psi$.
\end{proof}

\begin{example}\label{e-sigma0}
\mbox{ }
\begin{itemize}
\item The trivial monoid $M=\{e\}$ is free with basis $X=\emptyset$.
\item The monoid $(\NN_0,+,0)$ is free with basis $X=\{1\}$, since each natural number $n$ can be uniquely written as $n = n\cdot 1 =1+\ldots+1$.
\item The monoid $(\NN,\cdot,1)$ is not free.
Suppose it would be free with basis $X$.
Take the word monoid $\Sigma^*$ with basis $\Sigma=X$ und the mapping $\phi_0:X\rightarrow \Sigma^*: n\mapsto n$.
By the universal property~\ref{t-up}, there is a unique homomorphism $\phi:\NN\rightarrow \Sigma^*$ extending $\phi_0$.
The basis $X$ contains at least two elements, since $\NN$ is more than the power of one number.
If say $2,3\in X$, then 
\begin{eqnarray*}
23 = \phi_0(2)\phi_0(3) = \phi(2\cdot 3) = \phi(6)= \phi(3\cdot 2) = \phi_0(3)\phi_0(2) = 32. 
\end{eqnarray*}
But 23 and 32 are different as strings.
A contradiction.
\item Each finite non-trivial monoid $M$ cannot be free.
Suppose $M$ would be free with basis $X$.
Take the word monoid $\Sigma^*$ with basis $\Sigma=X$ and the mapping $\phi_0:X\rightarrow\Sigma^*:x\mapsto x$.
By the universal property~\ref{t-up}, there is a unique homomorphism $\phi:M\rightarrow \Sigma^*$ extending $\phi_0$.
Since $M$ is non-trivial, there is an element $x\ne e$ in $M$.
Consider the powers $x,x^2,x^3,\ldots$.
Since $M$ is finite, there are numbers $i,j$ such that $i<j$ and $x^i=x^j$.
Then $x^i = \phi(x^i) =\phi(x^j) = x^j$.
But $x^i$ and $x^j$ are distinct as strings in $\Sigma^*$. 
A contradiction.
\end{itemize}
\end{example}

\begin{proposition}\label{p-XX}
Let $M$ and $M'$ be free monoids with finite bases $X$ and $X'$, respectively.
If $|X|=|X'|$, then $M$ and $M'$ are isomorphic.
\end{proposition}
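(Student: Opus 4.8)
The plan is to construct mutually inverse homomorphisms between $M$ and $M'$ directly from the Universal Property (Theorem~\ref{t-up}). Since $|X|=|X'|$ and both bases are finite, fix a bijection $\phi_0\colon X\to X'$; composing with the inclusion $X'\hookrightarrow M'$ we may regard $\phi_0$ as a mapping $X\to M'$. As $M$ is free with basis $X$, Theorem~\ref{t-up} yields a unique homomorphism $\phi\colon M\to M'$ extending $\phi_0$. Symmetrically, the inverse bijection $\psi_0=\phi_0^{-1}\colon X'\to X\subseteq M$ extends, since $M'$ is free with basis $X'$, to a unique homomorphism $\psi\colon M'\to M$.

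Next I would check that $\phi$ and $\psi$ are mutually inverse. By Lemma~\ref{l-comp} the composition $\phi\psi\colon M\to M$ is again a homomorphism, and for each $x\in X$ we have $(\phi\psi)(x)=\psi(\phi(x))=\psi(\phi_0(x))=\psi_0(\phi_0(x))=x$, because $\phi_0(x)\in X'$ and $\psi$ extends $\psi_0=\phi_0^{-1}$. Hence $\phi\psi$ restricts to the identity mapping on $X$. But $\id_M\colon M\to M$ is also a homomorphism restricting to the identity on $X$, so the uniqueness clause of Theorem~\ref{t-up}, applied to the mapping $\id_X\colon X\to M$, forces $\phi\psi=\id_M$. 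Interchanging the roles of $(M,X,\phi_0)$ and $(M',X',\psi_0)$ gives $\psi\phi=\id_{M'}$ in the same way.

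Finally, from $\phi\psi=\id_M$ and $\psi\phi=\id_{M'}$ it follows that $\phi$ is a bijection with two-sided inverse $\psi$; being also a homomorphism, $\phi$ is an isomorphism, so $M$ and $M'$ are isomorphic.

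I expect the only delicate point to be the bookkeeping around the uniqueness clause of the Universal Property: one must observe that $\phi\psi$ and $\id_M$ are homomorphisms extending the \emph{same} mapping $X\to M$ (the inclusion), which is precisely what compels them to agree on all of $M$; everything else is a routine verification. One should also respect the paper's left-to-right composition convention $(\phi\psi)(x)=\psi(\phi(x))$ throughout, so that the domains and codomains of the extended maps line up correctly.
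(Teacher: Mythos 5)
Your proof is correct and follows essentially the same route as the paper: extend the mutually inverse bijections $\phi_0$ and $\psi_0$ to homomorphisms $\phi$ and $\psi$ via Theorem~\ref{t-up} and show they are mutually inverse. The only (harmless) difference is that the paper verifies $\psi\phi=\id_M$ by direct computation on a general product $x_1\cdots x_m$, whereas you check it on the basis and invoke the uniqueness clause of the Universal Property; both are valid.
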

\begin{proof}
Since $X$ and $X'$ have the same finite cardinality, we can choose bijections
$\phi_0:X\rightarrow X'$ and $\psi_0:X'\rightarrow X$ which are inverse to each other. 
By the universal property~\ref{t-up}, there are homomorphisms $\phi:M\rightarrow M'$ and $\psi:M'\rightarrow M$ extending $\phi_0$ and $\psi_0$, respectively.
Then for each $x=x_1\ldots x_m\in M$ with $x_i\in X$, 
\begin{eqnarray*}
(\psi\phi)(x) 
&=& (\psi\phi)(x_1\ldots x_m) \\ &=& \psi (\phi_0(x_1)\ldots\phi_0(x_m)) \\ &=& \psi_0(\phi_0(x_1))\ldots \psi_0(\phi_0(x_m)) \\ &=& x_1\ldots x_m=x.
\end{eqnarray*}
Thus $\psi\phi = \id_M$ and similarly $\phi\psi = \id_{M'}$.
Therefore, $\phi$ and $\psi$ are isomorphisms, as required.
\end{proof}
By the Propositions~\ref{p-sigma} and~\ref{p-XX}, and Example~\ref{e-sigma0}, we obtain the following result.
\begin{theorem}
For each number $n\geq 0$, there is exactly one free monoid (up to isomorphism) with a basis of $n$ elements.
\end{theorem}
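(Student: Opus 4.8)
The plan is to split the statement into an existence claim and a uniqueness claim, each of which reduces almost immediately to a result already proved in this section.

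For existence, I would fix a concrete set with exactly $n$ elements, say $X=\{1,2,\ldots,n\}$, which is understood to be the empty set when $n=0$. If $n\geq 1$, then $X$ is an alphabet, and by Proposition~\ref{p-sigma} the word monoid $X^*$ is free with basis $X$; hence it is a free monoid with a basis of $n$ elements. If $n=0$, I would instead invoke Example~\ref{e-sigma0}, where the trivial monoid $\{e\}$ is shown to be free with basis $\emptyset$. Either way, a free monoid with an $n$-element basis exists.

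For uniqueness, suppose $M$ and $M'$ are free monoids with bases $X$ and $X'$, respectively, and $|X|=|X'|=n$. Since $n$ is a (finite) natural number, both bases are finite and of equal cardinality, so Proposition~\ref{p-XX} applies verbatim and produces an isomorphism $M\cong M'$. Combining the two parts: for every $n\geq 0$ there is at least one free monoid with an $n$-element basis, and any two such are isomorphic, which is exactly the assertion ``exactly one up to isomorphism''.

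I do not expect a genuine obstacle here; the theorem is essentially a repackaging of Propositions~\ref{p-sigma} and~\ref{p-XX} together with Example~\ref{e-sigma0}. The only delicate point worth flagging explicitly is the degenerate case $n=0$: one cannot apply Proposition~\ref{p-sigma} directly, because the definition of an alphabet requires a non-empty set, so the trivial monoid must be treated as a separate base case via Example~\ref{e-sigma0}. With that caveat noted, the proof is a two-line synthesis of the cited results.
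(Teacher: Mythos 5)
Your proposal is correct and matches the paper's own argument exactly: the paper derives this theorem precisely from Proposition~\ref{p-sigma} and Example~\ref{e-sigma0} for existence and Proposition~\ref{p-XX} for uniqueness. Your explicit handling of the degenerate case $n=0$ via the trivial monoid is a welcome clarification of a point the paper leaves implicit.
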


\section{Semiautomata}

Semiautomata are deterministic finite state machines with inputs but no outputs.
They can serve as basic abstractions of various physical devices.

A {\em semiautomaton}\index{semiautomaton} is a triple $A = (S,\Sigma,\delta)$, 
where $S$ is a finite set of {\em states}\index{state}, 
$\Sigma$ is an alphabet of {\em inputs}\index{input}, 
and $\delta:S\times \Sigma\rightarrow S$ is a mapping called {\em transition function}\index{transition function}.

A semiautomaton is a deterministic finite state machine that is exactly in one state at a time.
If the semiautomaton is in state $s$ and reads the input symbol $a$, it transits into the state $s'=\delta(s,a)$.

\begin{lemma}
Let\/ $A = (S,\Sigma,\delta)$ be a semiautomaton.
Then there is a unique mapping\/ $\delta^*:S\times \Sigma^*\rightarrow S$ extending\/ $\delta$ as follows:
\begin{itemize}
\item 
$\delta^*(s,\epsilon)=s$ for all\/ $s\in S$.
\item 
$\delta^*(s,a)=\delta(s,a)$ for all\/ $s\in S$ and\/ $a\in\Sigma$.
\item 
$\delta^*(s,ax)=\delta^*(\delta(s,a),x)$ for all\/ $s\in S$, $a\in\Sigma$, and\/ $x\in\Sigma^*$.
\end{itemize}
We have\/ $\delta^*(s,xy)=\delta^*(\delta^*(s,x),y)$ for all\/ $s\in S$ and\/ $x,y\in\Sigma^*$. 
\end{lemma}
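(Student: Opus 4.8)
The plan is to establish existence and uniqueness of $\delta^*$ by recursion on the length of the input word, and then to derive the concatenation identity by induction on the length of the first argument. First I would define $\delta^*$ outright: put $\delta^*(s,\epsilon)=s$ for all $s\in S$, and for a non-empty word, which by freeness of $\Sigma^*$ over $\Sigma$ (Proposition~\ref{p-sigma}) is uniquely of the form $ax$ with $a\in\Sigma$ and $x\in\Sigma^*$, put $\delta^*(s,ax)=\delta^*(\delta(s,a),x)$. This is a legitimate definition by recursion on $|ax|=|x|+1$. The second bullet then holds automatically: for $a\in\Sigma$ we have $a=a\epsilon$, whence $\delta^*(s,a)=\delta^*(\delta(s,a),\epsilon)=\delta(s,a)$, so $\delta^*$ genuinely extends $\delta$.

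For uniqueness, suppose $\delta_1^*$ and $\delta_2^*$ both satisfy the three bullets. I would show $\delta_1^*(s,x)=\delta_2^*(s,x)$ for all $s\in S$ by induction on $|x|$: the case $|x|=0$ is the first bullet, and if $x=ay$ with $a\in\Sigma$, then $\delta_i^*(s,x)=\delta_i^*(\delta(s,a),y)$ by the third bullet, and these two values agree by the induction hypothesis applied to the state $\delta(s,a)$ and the shorter word $y$.

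For the concatenation property $\delta^*(s,xy)=\delta^*(\delta^*(s,x),y)$, I would fix $y$ and argue by induction on $|x|$, keeping the statement quantified over all states $s$ so that the induction hypothesis can be invoked at a different state. If $x=\epsilon$, both sides equal $\delta^*(s,y)$ since $\delta^*(s,\epsilon)=s$. If $x=az$ with $a\in\Sigma$ and $z\in\Sigma^*$, then the third bullet gives $\delta^*(s,xy)=\delta^*(s,a(zy))=\delta^*(\delta(s,a),zy)$; the induction hypothesis applied to $\delta(s,a)$ and $z$ turns this into $\delta^*(\delta^*(\delta(s,a),z),y)$; and $\delta^*(\delta(s,a),z)=\delta^*(s,az)=\delta^*(s,x)$ by the third bullet once more, which is exactly $\delta^*(\delta^*(s,x),y)$.

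The whole argument is routine; the only point requiring care is the scoping of the quantifiers in the two inductions. In particular, the concatenation induction must be carried out with the state $s$ universally quantified (equivalently, one fixes $s$ but states the inductive hypothesis for the intermediate state $\delta(s,a)$), because the recursive clause for $\delta^*$ moves from $s$ to $\delta(s,a)$; an induction that froze a single state would not close.
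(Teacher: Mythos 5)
Your proof is correct and complete. The paper itself gives only a one-line sketch ("based on the universal property and makes use of induction"), and your argument is the elementary unfolding of a slightly different packaging of the same idea. The route the paper has in mind is: for each $a\in\Sigma$ let $\tau_a\in T(S)$ be $s\mapsto\delta(s,a)$; by the universal property of the free monoid $\Sigma^*$ (Theorem~\ref{t-up}) the map $a\mapsto\tau_a$ extends uniquely to a monoid homomorphism $x\mapsto\tau_x$ from $\Sigma^*$ into $T(S)$, and one sets $\delta^*(s,x)=\tau_x(s)$. With the composition convention $(fg)(s)=g(f(s))$ used in the paper's definition of $T(S)$, the homomorphism identity $\tau_{xy}=\tau_x\tau_y$ \emph{is} the concatenation law $\delta^*(s,xy)=\delta^*(\delta^*(s,x),y)$, so that part comes for free, and uniqueness of $\delta^*$ reduces to uniqueness of the extending homomorphism. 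Your direct recursion on $|x|$ buys self-containedness and makes visible exactly the point you flag at the end --- that the induction must be carried with the state universally quantified because the recursive clause moves from $s$ to $\delta(s,a)$ --- which the slicker argument hides inside the proof of the universal property. Both your appeal to Proposition~\ref{p-sigma} (unique factorization of a non-empty word as $ax$) to legitimize the recursion and your separate induction for the concatenation identity are sound; nothing is missing.
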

The proof is based on the universal property~\ref{t-up} and makes use of induction.

\begin{proposition}
Let $A = (S,\Sigma,\delta)$ be a semiautomaton.
Then there is an action of\/ $\Sigma^*$ on $S$ defined by
\begin{eqnarray}
s\cdot x = \delta^*(s,x), \quad s\in S,x\in\Sigma^*.
\end{eqnarray}
\end{proposition}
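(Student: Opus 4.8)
The plan is to unwind the definition of a monoid action and check its two axioms directly against the properties of $\delta^*$ recorded in the previous lemma. Recall that an action of a monoid $(M,\cdot,e)$ on a set $S$ is a mapping $S\times M\rightarrow S:(s,m)\mapsto s\cdot m$ such that $s\cdot e = s$ for all $s\in S$ and $s\cdot(mn) = (s\cdot m)\cdot n$ for all $s\in S$ and $m,n\in M$; here the monoid is the word monoid $\Sigma^*$ with identity $\epsilon$ (Proposition~\ref{p-sigma}).

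First I would note that $s\cdot x = \delta^*(s,x)$ is a well-defined mapping $S\times\Sigma^*\rightarrow S$, since the lemma asserts the existence and uniqueness of $\delta^*$. For the identity axiom, the first clause of that lemma gives immediately $s\cdot\epsilon = \delta^*(s,\epsilon) = s$ for every $s\in S$. For compatibility, the final assertion of the lemma — that $\delta^*(s,xy) = \delta^*(\delta^*(s,x),y)$ — translates verbatim: for all $s\in S$ and $x,y\in\Sigma^*$,
$$s\cdot(xy) = \delta^*(s,xy) = \delta^*(\delta^*(s,x),y) = \delta^*(s\cdot x,y) = (s\cdot x)\cdot y.$$
Both axioms hold, so $\Sigma^*$ acts on $S$.

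I do not expect any real obstacle: the substantive work has already been discharged inside the lemma, whose proof (only sketched there) rests on the universal property of the free monoid $\Sigma^*$ from Theorem~\ref{t-up} together with an induction on word length to obtain $\delta^*(s,xy) = \delta^*(\delta^*(s,x),y)$. The only point deserving a moment's care is the bookkeeping of sides: since the paper composes functions and evaluates arguments on the right, one should confirm that ``read $x$, then read $y$'' corresponds to the product $xy$ in $\Sigma^*$ taken in that order — which is exactly what the displayed chain of equalities verifies — so that what is obtained is genuinely a (right) monoid action.
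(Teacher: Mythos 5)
Your proof is correct and follows essentially the same route as the paper: both verify the identity axiom $s\cdot\epsilon=\delta^*(s,\epsilon)=s$ and the compatibility axiom $s\cdot(xy)=\delta^*(\delta^*(s,x),y)=(s\cdot x)\cdot y$ by direct appeal to the properties of $\delta^*$ established in the preceding lemma. The extra remarks on well-definedness and the right-action convention are harmless elaborations of what the paper leaves implicit.
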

\begin{proof}
The operation is associative, since for all $s\in S$ and $x,y\in\Sigma^*$,
$$s\cdot (xy) = \delta^*(s,xy) = \delta^*(\delta^*(s,x),y) = (s\cdot x)\cdot y,$$ 
and $s\cdot \epsilon =\delta^*(s,\epsilon) = s$.
\end{proof}

Let $A = (S,\Sigma,\delta)$ be a semiautomaton.
For each string $x\in\Sigma^*$, define the mapping 
\begin{eqnarray}
\tau_x:S\rightarrow S: s\mapsto s\cdot x
\end{eqnarray}
which describes the effect that the string $x$ has on the set of states.

\begin{proposition}
Let $A = (S,\Sigma,\delta)$ be a semiautomaton.
The mapping $\tau:\Sigma^*\rightarrow T(S):x\mapsto \tau_x$ is a monoid homomorphism.
The image 
\begin{eqnarray}
T(A) = \{\tau_x\mid x\in\Sigma^*\}
\end{eqnarray}
is a submonoid of the full transformation monoid\/ $T(S)$ generated by the set $\{\tau_a\mid a\in \Sigma\}$.
\end{proposition}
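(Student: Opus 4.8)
The plan is to check the two defining properties of a monoid homomorphism directly from the definition of $\delta^*$ and the $\Sigma^*$-action on $S$, and then to read off the two statements about $T(A)$ from general facts already established.

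First I would verify that $\tau$ preserves identities: for every $s\in S$ we have $\tau_\epsilon(s)=s\cdot\epsilon=\delta^*(s,\epsilon)=s$, so $\tau_\epsilon=\id_S$, which is the identity element of $T(S)$. Next I would show $\tau_{xy}=\tau_x\tau_y$ for all $x,y\in\Sigma^*$. Here one must keep in mind that composition in $T(S)$ is the reversed one, $(fg)(s)=g(f(s))$. Using the associativity of the $\Sigma^*$-action on $S$ proved above, for every $s\in S$ we obtain
\[ \tau_{xy}(s)=s\cdot(xy)=(s\cdot x)\cdot y=\tau_y(\tau_x(s))=(\tau_x\tau_y)(s), \]
hence $\tau_{xy}=\tau_x\tau_y$. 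Together with identity-preservation, this shows that $\tau:\Sigma^*\to T(S)$ is a monoid homomorphism.

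The assertion that $T(A)=\{\tau_x\mid x\in\Sigma^*\}$ is a submonoid of $T(S)$ is then immediate: by the first part of Lemma~\ref{l-comp}, the image $\im(\tau)$ of a monoid homomorphism is a submonoid of the codomain, and $T(A)=\im(\tau)$. For the generation claim I would use that $\Sigma^*=\langle\Sigma\rangle$ (Proposition~\ref{p-sigma}), so any $x\in\Sigma^*$ can be written $x=a_1a_2\cdots a_n$ with $a_i\in\Sigma$; applying the homomorphism property repeatedly yields $\tau_x=\tau_{a_1}\tau_{a_2}\cdots\tau_{a_n}$, giving $T(A)\subseteq\langle\{\tau_a\mid a\in\Sigma\}\rangle$, while the reverse inclusion holds because $T(A)$ is a submonoid containing each $\tau_a$. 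Equality follows.

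There is no serious obstacle here: the content is just an unwinding of the definition of $\delta^*$ together with the associativity lemma for $\delta^*$. The only point that needs genuine care is the reversed composition convention in $T(S)$ — it is precisely this convention that makes $x\mapsto\tau_x$ (rather than the reversal $x\mapsto\tau_{\overline{x}}$) a homomorphism, so the order of the factors in $(s\cdot x)\cdot y=\tau_y(\tau_x(s))$ must be tracked carefully throughout.
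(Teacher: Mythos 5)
Your proposal is correct and follows essentially the same route as the paper: the homomorphism property is verified pointwise via $s\tau_{xy}=s\cdot(xy)=(s\cdot x)\cdot y=(s\tau_x)\tau_y$ and $\tau_\epsilon=\id_S$, the submonoid claim is read off from Lemma~\ref{l-comp}, and generation follows by decomposing strings into symbols (the paper phrases this as induction on length using $\tau_{ax}=\tau_a\tau_x$, which is the same argument). Your explicit attention to the reversed composition convention in $T(S)$ is a point the paper leaves implicit but is indeed what makes the computation work.
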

\begin{proof}
The mapping $\tau$ is a homomorphism, since for all $x,y\in\Sigma^*$ and $s\in S$,
$$s\tau_{xy} = s\cdot(xy) = (s\cdot x)\cdot y = (s\tau_x)\tau_y,$$ i.e., $\tau_{xy} = \tau_x\tau_y$,
and $s\tau_\epsilon = s\cdot \epsilon = s$, i.e., $\tau_\epsilon = \id_S$.
By Lemma.~\ref{l-comp}, $T(A)$ is a submonoid of $T(S)$.
For each symbol $a\in\Sigma$ and each string $x\in\Sigma^*$, we have $\tau_{ax}=\tau_a\tau_x$.
Thus by induction on the length of strings, the monoid $T(A)$ is generated by the set of transformations 
$\{\tau_a\mid a\in\Sigma\}$.
\end{proof}
Therefore, the behavior of a semiautomaton $A$ given by the transition function~$\delta$ is fully described by the set of transformations 
corresponding to the input alphabet $\{\tau_a\mid a\in \Sigma\}$.
The monoid $T(A)$ is called the {\em transition monoid\/}\index{transition monoid} of the semiautomaton $A$.

\begin{example}\label{e-sa0}
Consider the semiautomaton
$A = (S,\Sigma,\delta)$ 
with state set $S=\{1,2,3\}$, input alphabet $\Sigma=\{a,b\}$, and transition function $\delta$ given by the automaton graph in Figure~\ref{f-sa0}.
The corresponding transformation monoid is generated by the transformations
$$\tau_{a} = \left( \begin{array}{ccc} 1 & 2 & 3 \\ 1 & 1 & 1\end{array} \right)
\quad\mbox{and}\quad
\tau_{b} = \left( \begin{array}{ccc} 1 & 2 & 3 \\ 2 & 2 & 3\end{array} \right).$$
We have 
$$\begin{array}{ll}
\tau_{aa} = \left( \begin{array}{ccc} 1 & 2 & 3\\ 1 & 1& 1\end{array} \right), & 
\tau_{ab} = \left( \begin{array}{ccc} 1 & 2 & 3\\ 2 & 2 & 2\end{array} \right),\\
\tau_{ba} = \left( \begin{array}{ccc} 1 & 2 & 3\\ 1 & 1 & 1\end{array} \right), & 
\tau_{bb} = \left( \begin{array}{ccc} 1 & 2 & 3\\ 2 & 2 & 3\end{array} \right).
\end{array}$$
Therefore, 
the transformation monoid $T(A)$ is given by the set of transformations $\{\id_S,\tau_a,\tau_b,\tau_{ab}\}$.
\end{example}

\begin{figure}[hbt]
\begin{center}
\setlength{\unitlength}{1.0mm}
\begin{picture}(50,20)
\setlength{\unitlength}{1mm}
\put(25,10){\makebox(0,0)[c]{
\mbox{$
\xymatrix{
*++[o][F-]{1} 
 \ar@(ul,dl)[]_{a}
\ar@/^/[rr]^b
&&
*++[o][F-]{2} 
 \ar@(ur,dr)[]^{b}
\ar@/^/[ll]^a \\
& *++[o][F-]{3} 
 \ar@(ur,dr)[]^{b}
\ar@{->}[ul]^a
& \\
}
$}
}}
\end{picture}
\end{center}
\caption{Semiautomaton $A$.}\label{f-sa0}
\end{figure}
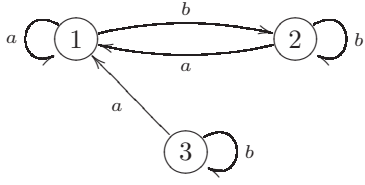

A semiautomaton $B=(S^B,\Sigma^B,\delta^B)$ is a {\em homomorphic image\/}\index{semiautomata!homomorphic image} of a semiautomaton $A=(S^A,\Sigma^A,\delta^A)$
if there are surjective mappings $\phi:S^A\rightarrow S^B$ and $\xi:\Sigma^A\rightarrow \Sigma^B$ such that for each symbol $a\in\Sigma^A$,
\begin{eqnarray}
\tau_a^A\phi = \phi\tau_{\xi(a)}^B. 
\end{eqnarray}
If both mappings $\phi$ and $\xi$ are also injective, the semiautomata $A$ and $B$ are said to be {\em isomorphic}.
If two semiautomata are isomorphic, one can be obtained from the other by simply renaming the states and inputs.
\begin{proposition}
If a semiautomaton\/ $B$ is a homomorphic image of a semiautomaton\/ $A$, then\/ $T(B)$ is a homomorphic image of\/ $T(A)$.
\end{proposition}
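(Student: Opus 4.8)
The plan is to construct a monoid epimorphism $\Psi:T(A)\to T(B)$ directly out of the two surjections $\phi:S^A\to S^B$ and $\xi:\Sigma^A\to\Sigma^B$. First I would extend $\xi$ to a homomorphism $\xi^*:(\Sigma^A)^*\to(\Sigma^B)^*$ of word monoids; this is exactly the universal property (Theorem~\ref{t-up}) applied to the free monoid $(\Sigma^A)^*$ with basis $\Sigma^A$. Since $\xi$ is onto and every word over $\Sigma^B$ is a product of symbols from $\Sigma^B$, the map $\xi^*$ is an epimorphism.

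The next step is to lift the defining intertwining relation from symbols to arbitrary words, i.e.\ to show that for all $x\in(\Sigma^A)^*$,
\[
\tau_x^A\phi=\phi\,\tau_{\xi^*(x)}^B .
\]
I would prove this by induction on the length of $x$: for $x=\epsilon$ it reads $\tau_\epsilon^A\phi=\id_{S^A}\phi=\phi=\phi\,\id_{S^B}=\phi\,\tau_\epsilon^B$; for $x=a$ a single symbol it is the hypothesis $\tau_a^A\phi=\phi\,\tau_{\xi(a)}^B$; and for $x=ay$ with $a\in\Sigma^A$ one computes, using associativity of composition and that $\tau^A,\tau^B,\xi^*$ are homomorphisms, $\tau_{ay}^A\phi=\tau_a^A(\tau_y^A\phi)=\tau_a^A(\phi\,\tau_{\xi^*(y)}^B)=(\tau_a^A\phi)\tau_{\xi^*(y)}^B=\phi\,\tau_{\xi(a)}^B\tau_{\xi^*(y)}^B=\phi\,\tau_{\xi^*(ay)}^B$.

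With this in hand I would define $\Psi:T(A)\to T(B)$ by $\Psi(\tau_x^A)=\tau_{\xi^*(x)}^B$ for $x\in(\Sigma^A)^*$. The step I expect to be the main obstacle is showing $\Psi$ is \emph{well defined}, namely that $\tau_x^A=\tau_y^A$ forces $\tau_{\xi^*(x)}^B=\tau_{\xi^*(y)}^B$; this is precisely where the surjectivity of $\phi$ enters. Given any state $t\in S^B$, choose $s\in S^A$ with $s\phi=t$; then
\[
t\,\tau_{\xi^*(x)}^B=(s\phi)\tau_{\xi^*(x)}^B=s(\tau_x^A\phi)=(s\tau_x^A)\phi=(s\tau_y^A)\phi=s(\tau_y^A\phi)=(s\phi)\tau_{\xi^*(y)}^B=t\,\tau_{\xi^*(y)}^B ,
\]
so the two transformations of $S^B$ agree everywhere.

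Finally I would verify that $\Psi$ is a monoid homomorphism — $\Psi(\tau_x^A\tau_y^A)=\Psi(\tau_{xy}^A)=\tau_{\xi^*(xy)}^B=\tau_{\xi^*(x)}^B\tau_{\xi^*(y)}^B=\Psi(\tau_x^A)\Psi(\tau_y^A)$ and $\Psi(\id_{S^A})=\Psi(\tau_\epsilon^A)=\tau_\epsilon^B=\id_{S^B}$ — and that it is surjective: any element of $T(B)$ is $\tau_z^B$ for some $z\in(\Sigma^B)^*$, and writing $z=\xi^*(x)$ (possible since $\xi^*$ is onto) gives $\Psi(\tau_x^A)=\tau_z^B$. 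Hence $\Psi$ is an epimorphism and $T(B)$ is a homomorphic image of $T(A)$. (Equivalently, one can note that $\tau^B\circ\xi^*:(\Sigma^A)^*\to T(B)$ is a surjective homomorphism whose kernel contains $\ker\tau^A$ by the displayed computation, and then factor it through $T(A)\cong(\Sigma^A)^*/\ker\tau^A$ via the Homomorphism Theorem and Theorem~\ref{t-right}; the explicit construction above seems more transparent.)
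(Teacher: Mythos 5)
Your proof is correct and follows essentially the same route as the paper: extend $\xi$ to words, lift the intertwining relation $\tau_x^A\phi=\phi\,\tau_{\xi(x)}^B$ to all strings, and define the epimorphism $\tau_x^A\mapsto\tau_{\xi(x)}^B$, with surjectivity of $\phi$ securing well-definedness and surjectivity of $\xi$ securing ontoness. Your pointwise well-definedness argument (picking a preimage $s$ of each $t\in S^B$) is in fact a cleaner rendering of the paper's somewhat informal ``$\phi^{-1}\phi=\id_{S^B}$'' step, but the underlying idea is identical.
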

\begin{proof}
Let $B$ be a homomorphic image of $A$.
Then there are surjective mappings $\phi:S^A\rightarrow S^B$ and $\xi:\Sigma^A\rightarrow \Sigma^B$ such that 
$\tau_a^A\phi = \phi\tau_{\xi(a)}^B$ for each $a\in\Sigma^A$.
The input mapping $\xi$ can be extended to strings by setting $\xi(a_1\ldots a_n) = \xi(a_1)\ldots\xi(a_n)$ for all $a_1,\ldots,a_n\in\Sigma^A$.
Then we obtain $\tau_x^A\phi = \phi\tau_{\xi(x)}^B$ for each string $x$ over~$\Sigma^A$.
Since $\phi$ is surjective, we have $\phi^{-1}\phi = \id_{S^B}$ and therefore for each string $x$ over~$\Sigma^A$,
$$\phi^{-1}\tau_x^A\phi = \phi^{-1}\phi\tau_{\xi(x)}^B = \tau_{\xi(x)}^B.$$ 

Consider the mapping $\psi:T(A)\rightarrow T(B)$ defined by $\tau_x^A\mapsto\tau_{\xi(x)}^B$.
This mapping is well-defined, since if $\tau_x^A=\tau_y^A$ for some strings $x,y$ over $\Sigma^A$,
then $\phi^{-1}\tau_x^A\phi = \phi^{-1}\tau_y^A\phi$ and so $\tau_{\xi(x)}^B=\tau_{\xi(x)}^B$.
This mapping is a homomorphism, since for all strings $x,y$ over $\Sigma^A$, 
$$\psi(\tau_x^A\tau_y^A) = \psi(\tau_{xy}^A) 
= \tau_{\xi(xy)}^B = \tau_{\xi(x)\xi(y)}^B 
= \tau_{\xi(x)}^B\tau_{\xi(y)}^B =\psi(\tau_x^A)\psi(\tau_y^A).$$
The mapping $\psi$ is surjective, since the mapping $\xi$ is onto and so each transformation $\tau^B\in T(B)$ 
has the form $\tau^B = \tau_{\xi(x)}^B$ for some string $x$ over~$\Sigma^A$.
\end{proof}
\begin{example}\label{e-sa1}
Consider the semiautomata $A$ and $B$ given in Fig.~\ref{f-sa1}.
The semiautomaton $B$ is a homomorphic image of the semiautomaton $A$ with mappings
$\phi = \left( \begin{array}{ccc} 1 & 2 & 3\\ 1 & 1 & 3\end{array} \right)$ and
$\xi = \left( \begin{array}{c} a \\ a \end{array} \right)$.
The transformation monoids $T(A) = \{\id,\tau_a^A\}$ and $T(B) = \{\id,\tau_a^B\}$ with 
$\tau_a^A= \left( \begin{array}{ccc} 1 & 2 & 3\\ 3 & 3 & 3\end{array} \right)$ and
$\tau_a^B= \left( \begin{array}{cc} 1 & 3 \\ 3 & 3 \end{array} \right)$ are isomorphic,
although the semiautomata $A$ and $B$ are not isomorphic.
\end{example}
\begin{figure}[hbt]
\begin{center}
\setlength{\unitlength}{1.0mm}
\begin{picture}(50,20)
\setlength{\unitlength}{1mm}
\put(25,10){\makebox(0,0)[c]{
\mbox{$
\xymatrix{
*++[o][F-]{1} 
\ar@{->}[rr]^a
&&
*++[o][F-]{3} 
 \ar@(ur,dr)[]^{a}\\
*++[o][F-]{2} 
\ar@{->}[urr]^a
&& \\
}
$}
}}
\end{picture}
\quad
\begin{picture}(50,20)
\setlength{\unitlength}{1mm}
\put(25,10){\makebox(0,0)[c]{
\mbox{$
\xymatrix{
*++[o][F-]{1} 
\ar@{->}[rr]^a
&&
*++[o][F-]{3} 
 \ar@(ur,dr)[]^{a}\\
}
$}
}}
\end{picture}
\end{center}
\caption{Semiautomata $A$ and $B$.}\label{f-sa1}
\end{figure}
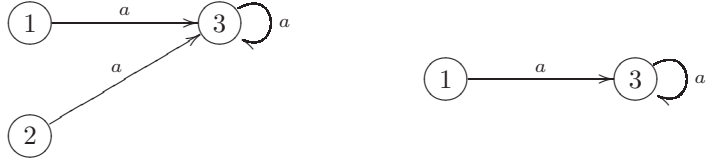

A semiautomaton $B= (S^B,\Sigma^B,\delta^B)$ is a {\em subsemiautomaton}\index{subsemiautomaton} of a semiautomaton $A=(S^A,\Sigma^A,\delta^A)$
if $S^B\subseteq S^A$, $\Sigma^B\subseteq \Sigma^A$, and $\delta^B$ is the restriction of $\delta^A$ onto $S^B\times \Sigma^B$;
i.e., $\delta^B(s,a) = \delta^A(s,a)$ for all $s\in S^B$ and $a\in\Sigma^B$.
\begin{proposition}
If\/ $B$ is a subsemiautomaton of a semiautomaton\/ $A$, then\/ $T(B)$ is a homomorphic image of a subsemigroup of\/ $T(A)$.
In particular, if~$\Sigma^A=\Sigma^B$, then\/ $T(B)$ is a homomorphic image of\/ $T(A)$.
\end{proposition}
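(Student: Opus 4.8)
The plan is to realize the required map by \emph{restricting} transformations of $A$ to the state set of $B$. The starting point is the observation that a subsemiautomaton must have $S^B$ closed under the transitions labelled by symbols of $\Sigma^B$ (this is already built into $\delta^B$ being a transition function $S^B\times\Sigma^B\rightarrow S^B$). First I would prove, by induction on the length of $x$, that $(\delta^A)^*(s,x)=(\delta^B)^*(s,x)$ for every $s\in S^B$ and every string $x$ over $\Sigma^B$: the cases $x=\epsilon$ and $x=a\in\Sigma^B$ are immediate from the definition of a subsemiautomaton, and the inductive step uses the recursion $\delta^*(s,ax)=\delta^*(\delta(s,a),x)$ together with $\delta^A(s,a)=\delta^B(s,a)\in S^B$. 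Consequently, for each string $x$ over $\Sigma^B$, the transformation $\tau_x^B\in T(B)$ is exactly the restriction of $\tau_x^A\in T(A)$ to the subset $S^B$.

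Next, since $\tau^A$ is a monoid homomorphism, the submonoid $U$ of $T(A)$ generated by $\{\tau_a^A\mid a\in\Sigma^B\}$ equals $U=\{\tau_x^A\mid x\in(\Sigma^B)^*\}$; in particular $U$ is a subsemigroup of $T(A)$. I would then define $\psi:U\rightarrow T(B)$ by $\tau_x^A\mapsto\tau_x^B$. Well-definedness is the point where the first step is used: if $\tau_x^A=\tau_y^A$ for strings $x,y$ over $\Sigma^B$, then these transformations of $S^A$ agree in particular on $S^B$, hence their restrictions agree, i.e.\ $\tau_x^B=\tau_y^B$. That $\psi$ is a homomorphism follows from $\tau_x\tau_y=\tau_{xy}$ holding in both $T(A)$ and $T(B)$ and from $\tau_\epsilon=\id$, exactly as in the proof for homomorphic images; and $\psi$ is surjective because $T(B)$ is generated by $\{\tau_a^B\mid a\in\Sigma^B\}$ and each generator $\tau_a^B=\psi(\tau_a^A)$ lies in the image. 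Hence $T(B)$ is a homomorphic image of the subsemigroup $U$ of $T(A)$.

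For the \emph{in particular} clause: when $\Sigma^A=\Sigma^B$ we have $(\Sigma^B)^*=(\Sigma^A)^*$, so $U=\{\tau_x^A\mid x\in(\Sigma^A)^*\}=T(A)$, and the map $\psi$ constructed above is then an epimorphism from $T(A)$ onto $T(B)$.

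The only genuine obstacle is the inductive lemma identifying $(\delta^A)^*$ with $(\delta^B)^*$ on $S^B\times(\Sigma^B)^*$ (equivalently, that $\tau_x^B$ is literally the restriction of $\tau_x^A$); once that is established, the restriction map does all the work and everything else is the same bookkeeping already carried out for homomorphic images of semiautomata.
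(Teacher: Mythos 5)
Your proposal is correct and follows essentially the same route as the paper: both take the subsemigroup of $T(A)$ generated by $\{\tau_a^A\mid a\in\Sigma^B\}$ and map each element to its restriction to $S^B$, obtaining a surjective homomorphism onto $T(B)$. You simply supply details the paper leaves implicit (the induction showing $\tau_x^B$ is literally the restriction of $\tau_x^A$, well-definedness of the restriction map, and the $\Sigma^A=\Sigma^B$ special case), which is a welcome elaboration rather than a different argument.
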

\begin{proof}
The transformations $\tau_a^A$ with $a\in\Sigma^B$ generate a subsemigroup $T$ of~$T(A)$.
The restrictions of the elements of $T$ to $\Sigma^B$ are mappings from $\Sigma^B$ to $\Sigma^B$ which form the submonoid $T(B)$.
The mapping $\phi:T\rightarrow T(B)$ which assigns to each transformation $\tau\in T$ its restriction to $S^B$ is a homomorphism.
\end{proof}

\begin{example}\label{e-sa2}
Reconsider the semiautomaton $A$ in Figure~\ref{f-sa0}.
Take the semiautomaton
$B = (S,\Sigma,\delta)$ with state set $S=\{1,2\}$, input alphabet $\Sigma=\{a,b\}$, 
and transition function $\delta$ given by the automaton graph in Figure~\ref{f-sa2}.
The corresponding transformation monoid is generated by the transformations
$$\tau_{a} = \left( \begin{array}{cc} 1 & 2 \\ 1 & 1\end{array} \right)
\quad\mbox{and}\quad
\tau_{b} = \left( \begin{array}{cc} 1 & 2 \\ 2 & 2\end{array} \right).$$
We have 
$$\tau_{aa} = \left( \begin{array}{cc} 1 & 2 \\ 1 & 1\end{array} \right),\;
\tau_{ab} = \left( \begin{array}{cc} 1 & 2 \\ 2 & 2\end{array} \right),\;
\tau_{ba} = \left( \begin{array}{cc} 1 & 2 \\ 1 & 1\end{array} \right),
\;\mbox{and}\;
\tau_{bb} = \left( \begin{array}{cc} 1 & 2 \\ 2 & 2\end{array} \right).$$
Thus the transformation monoid $T(B)$ is given by the set of transformations $\{\id_S,\tau_a,\tau_b\}$.
The mapping $T(A)\rightarrow T(B)$ defined by restricting each transformation $\tau_x^A$ of $T(A)$ to $S^B$ is an epimorphism.
\end{example}
\begin{figure}[hbt]
\begin{center}
\setlength{\unitlength}{1.0mm}
\begin{picture}(50,20)
\setlength{\unitlength}{1mm}
\put(25,10){\makebox(0,0)[c]{
\mbox{$
\xymatrix{
*++[o][F-]{1} 
 \ar@(ul,dl)[]_{a}
\ar@/^/[rr]^b
&&
*++[o][F-]{2} 
 \ar@(ur,dr)[]^{b}
\ar@/^/[ll]^a \\
}
$}
}}
\end{picture}
\end{center}
\caption{Semiautomaton.}\label{f-sa2}
\end{figure}
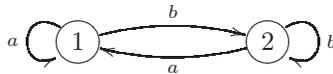

\section{Coverings of Semiautomata}
The covering of semiautomata is the most important concept in algebraic automata theory.
A covering semiautomaton is able to perform the tasks of the covered semiautomaton but may have a different structure.

A semiautomaton $B=(S^B,\Sigma^B,\delta^B)$ {\em covers\/}\index{covering!semiautomata} 
semiautomaton $A=(S^A,\Sigma^A,\delta^A)$ written $B\geq A$, 
if there is a surjective mapping $\phi$ of a subset of $S^B$ onto $S^A$ and a mapping $\xi:\Sigma^A\rightarrow\Sigma^B$ 
such that for each symbol $a\in\Sigma^A$,
\begin{eqnarray}
\phi\tau_a^A = \tau_{\xi(a)}^B\phi. 
\end{eqnarray}
Note that if $B$ covers $A$ and the state set $S^B$ or the input alphabet $\Sigma^B$ or both are augmented to larger sets, 
the resulting semiautomaton will still cover $A$. 
%
\begin{lemma}\label{l-sa-xi-inj}
If\/ $B\geq A$ such that the mapping\/ $\xi$ is injective, 
then\/ $A$ is a homomorphic image of a subsemiautomaton of\/ $B$.
\end{lemma}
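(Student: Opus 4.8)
The plan is to write down explicitly a subsemiautomaton of $B$ together with the two surjections that exhibit $A$ as its homomorphic image. Let $S'\subseteq S^B$ be the domain of the covering map, so that $\phi:S'\to S^A$ is surjective, and put $\Sigma'=\xi(\Sigma^A)\subseteq\Sigma^B$. The first step is to observe that $S'$ is closed under every transformation $\tau^B_b$ with $b\in\Sigma'$: for $s\in S'$ and $a\in\Sigma^A$ the defining identity $\phi\tau^A_a=\tau^B_{\xi(a)}\phi$ of the covering, read at $s$, only makes sense if $s\tau^B_{\xi(a)}$ again belongs to the domain $S'$ of $\phi$, whence $S'\tau^B_{\xi(a)}\subseteq S'$ for all $a$. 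Consequently $C=(S',\Sigma',\delta')$, with $\delta'$ the restriction of $\delta^B$ to $S'\times\Sigma'$, is a well-defined subsemiautomaton of $B$, and for each $b\in\Sigma'$ the transformation $\tau^C_b$ is exactly $\tau^B_b$ restricted to $S'$.

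The second step uses the hypothesis that $\xi$ is injective. By the definition of $\Sigma'$ the map $\xi:\Sigma^A\to\Sigma'$ is onto, and by hypothesis it is one-to-one, so it is a bijection; let $\eta=\xi^{-1}:\Sigma'\to\Sigma^A$, again a bijection and in particular surjective. Together with the surjection $\phi:S'\to S^A$, the pair $(\phi,\eta)$ is a candidate for realising $A$ as a homomorphic image of $C$, so it remains to verify the compatibility condition $\tau^C_b\,\phi=\phi\,\tau^A_{\eta(b)}$ for every $b\in\Sigma'$. Writing $b=\xi(a)$ with $a=\eta(b)\in\Sigma^A$, the left-hand side equals the restriction of $\tau^B_{\xi(a)}\phi$ to $S'$ by the first step, and the covering identity $\phi\tau^A_a=\tau^B_{\xi(a)}\phi$ then gives $\tau^C_b\,\phi=\phi\tau^A_a=\phi\tau^A_{\eta(b)}$. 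Hence $A$ is the homomorphic image of the subsemiautomaton $C$ of $B$ via $(\phi,\eta)$.

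The argument is mostly bookkeeping, and the only delicate point is keeping the composition conventions straight: the paper composes transformations from left to right and $\phi$ is only a partial map on $S^B$, so the main thing to get right is the closure claim $S'\tau^B_{\xi(a)}\subseteq S'$, which is precisely what is needed for the restriction $\tau^C_b=\tau^B_b|_{S'}$ and the identity $\tau^C_b\,\phi=\phi\,\tau^A_{\eta(b)}$ to be genuine equalities of functions rather than equalities holding only where both sides happen to be defined.
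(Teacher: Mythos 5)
Your proof is correct and follows essentially the same route as the paper's: you take the subsemiautomaton on $\phi^{-1}(S^A)$ with input alphabet $\xi(\Sigma^A)$, argue closure from the covering identity, and invert $\xi$ on its image to obtain the required surjection of alphabets. The closure step is argued with the same (implicit) reading of the covering condition that the paper uses, so nothing further is needed.
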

\begin{proof}
Let $B$ be a covering of $A$.
Then there are appropriate mappings $\phi$ and $\xi$ such that $\phi\tau_a^A = \tau_{\xi(a)}^B\phi$ for all $a\in\Sigma^A$.

Claim that the subset $S^C = \phi^{-1}(S^A)$ of $S^B$ forms a subsemiautomaton~$C$ of~$B$ with 
input alphabet $\Sigma^C=\xi(\Sigma^A)$.
Indeed, if $s^B\in \phi^{-1}(S^A)$, then $s^B\phi\tau_a^A = s^B\tau_{\xi(a)}^B\phi$ and so $s^B\tau_{\xi(a)}^B\in\phi^{-1}(S^A)$;
i.e., the set $S^C$ is closed under the transformations in $T(B)$ which correspond to $\Sigma^C$.

Claim that $A$ is a homomorphic image of the subsemiautomaton $C$ of $B$.
Indeed, we have 
$$\tau_{\xi(a)}^C\phi = \phi\tau_{\xi^{-1}(\xi(a))}^A = \phi\tau_a^A,$$ 
since $\xi$ is injective and so $\xi\xi^{-1}$ is the identity on $\Sigma^A$.
\end{proof}
Note that if in the covering $B\geq A$, $\phi$ is injective and $\xi$ is bijective,
then the semiautomaton obtained from $A$ by coinciding the subsets of inputs having equal images under $\xi$ is isomorphic to $B$.

\begin{example}\label{e-sa3}
The semiautomaton $A$ in Fig.~\ref{f-sa0} covers the semiautomaton $B$ in Fig.~\ref{f-sa2}, where the state mapping $\phi:\{1,2\}\rightarrow\{1,2\}$ 
and the alphabet mapping $\xi:\{a,b\}\rightarrow\{a,b\}$ are the identity mappings.
\end{example}

\begin{lemma}\label{l-sa-covtr}
Let $A$, $B$, and $C$ be semiautomata.
If\/ $C\geq B$ and\/ $B\geq A$,  then\/ $C\geq A$.
\end{lemma}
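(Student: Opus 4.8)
The plan is to obtain witnesses for $C\geq A$ by composing those for $C\geq B$ and $B\geq A$. Record the covering $C\geq B$ as a pair $(\phi_1,\xi_1)$, where $\phi_1$ is a surjective map of a subset $D_1\subseteq S^C$ onto $S^B$ and $\xi_1:\Sigma^B\rightarrow\Sigma^C$ satisfies $\phi_1\tau_b^B=\tau_{\xi_1(b)}^C\phi_1$ for every $b\in\Sigma^B$; likewise record $B\geq A$ as $(\phi_2,\xi_2)$, with $\phi_2$ a surjective map of a subset $D_2\subseteq S^B$ onto $S^A$ and $\xi_2:\Sigma^A\rightarrow\Sigma^B$ satisfying $\phi_2\tau_a^A=\tau_{\xi_2(a)}^B\phi_2$ for every $a\in\Sigma^A$. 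As witnesses for $C\geq A$ I would take the input map $\xi:\Sigma^A\rightarrow\Sigma^C$ given by $a\mapsto\xi_1(\xi_2(a))$, and the state map $\phi$ with domain $D:=\phi_1^{-1}(D_2)\subseteq S^C$ given by $\phi(s)=\phi_2(\phi_1(s))$ --- that is, $\phi_1$ restricted to $D$ followed by $\phi_2$.

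Three checks remain. First, $\phi$ is a surjection of a subset of $S^C$ onto $S^A$: for $a\in S^A$, surjectivity of $\phi_2$ yields $b\in D_2$ with $\phi_2(b)=a$, and surjectivity of $\phi_1$ yields $c\in D_1$ with $\phi_1(c)=b$; then $c\in D$ and $\phi(c)=a$. Second, the covering identity $\phi\tau_a^A=\tau_{\xi(a)}^C\phi$: starting from $s\phi\tau_a^A$ with $s\in D$, one rewrites using $\phi_2\tau_a^A=\tau_{\xi_2(a)}^B\phi_2$ (applicable since $s\phi_1\in D_2$) and then using $\phi_1\tau_{\xi_2(a)}^B=\tau_{\xi_1(\xi_2(a))}^C\phi_1$ (applicable since $s\in D_1$), and after reassociating this becomes $s\tau_{\xi(a)}^C\phi_1\phi_2$. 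Third, $D$ is stable under each $\tau_{\xi(a)}^C$, so that $\tau_{\xi(a)}^C\phi$ is genuinely defined on $D$ and the previous line reads $s\tau_{\xi(a)}^C\phi$; this falls out of the same computation, because $s\tau_{\xi(a)}^C\phi_1$ equals $s\phi_1\tau_{\xi_2(a)}^B$, which lies in $D_2$ since $D_2$ is stable under $\tau_{\xi_2(a)}^B$ --- a property implicit in the very validity of the covering $B\geq A$, exactly as exploited in the proof of Lemma~\ref{l-sa-xi-inj}.

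All the manipulations are routine substitutions using associativity of composition together with the two hypothesised identities. The only delicate point is the bookkeeping of the partial domains: one must cut $\phi_1$ down to $D=\phi_1^{-1}(D_2)$ rather than keep all of $D_1$, and one must confirm that $D$ is closed under the transformations $\tau_{\xi(a)}^C$. I expect this domain bookkeeping to be the sole (and still minor) obstacle.
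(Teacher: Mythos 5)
Your proof is correct and takes essentially the same route as the paper's: compose the two state maps and the two input maps in the appropriate order and verify the intertwining identity $\phi\tau_a^A=\tau_{\xi(a)}^C\phi$ by a chain of substitutions. You are in fact more careful than the paper about the partial domains, cutting the composite down to $\phi_1^{-1}(D_2)$ and checking its stability under the $\tau_{\xi(a)}^C$, a point the paper's proof silently glosses over by declaring $\phi_C\phi_B:S_1^C\rightarrow S^A$ surjective without restriction.
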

\begin{proof}
Suppose  $C$ covers $B$ and $B$ covers $A$.
Then there are surjective mappings 
$\phi_C:S_1^C\rightarrow S^B$ and
$\phi_B:S_1^B\rightarrow S^A$,
where $S_1^B\subseteq S^B$ and $S_1^C\subseteq S^C$,
and mappings
$\xi_B:\Sigma^A\rightarrow\Sigma^B$ and
$\xi_C:\Sigma^B\rightarrow\Sigma^C$
such that for all symbols $a\in\Sigma^A$ and $b\in\Sigma^B$,
$\phi_B\tau_a^A = \tau_{\xi_B(a)}\phi_B$ and
$\phi_C\tau_b^B = \tau_{\xi_C(b)}\phi_C$.
Then the composition $\phi_C\phi_B:S_1^C\rightarrow S^A$ is surjective
and the composition $\xi_B\xi_C:\Sigma^A\rightarrow\Sigma^C$ is a mapping.
For all symbols $a\in\Sigma^A$, we have
\begin{eqnarray*}
(\phi_C\phi_B)\tau_a^A 
&=& \phi_C(\phi_B\tau_a^A) 
= \phi_C(\tau_{\xi_B(a)}^B\phi_B)
= (\phi_C\tau_{\xi_B(a)}^B)\phi_B\\
&=& (\tau_{\xi_C(\xi_B(a))}^C\phi_C)\phi_B
= \tau_{\xi_C(\xi_B(a))}^C(\phi_C\phi_B).
\end{eqnarray*}
\end{proof}
The relation of covering is reflexive and transitive, but may not be symmetric.
If $A$ and $B$ are semiautomata with $A\geq B$ and $B\geq A$, then $A$ and $B$ are called {\em equivalent}\index{semiautomaton!equivalent}.

The {\em direct product\/}\index{direct product} of semiautomata $A=(S^A,\Sigma,\delta^A)$ and $B=(S^B,\Sigma,\delta^B)$ is the semiautomaton
$A\times B = (S^{A\times B},\Sigma, \delta^{A\times B})$ with state set $S^{A\times B} = S^A\times S^B$ and transition function
\begin{eqnarray}
\delta^{A\times B} ((s^A,s^B),a) = (\delta^A(s^A,a),\delta^B(s^B,a)),\quad s^A\in S^A,s^B\in S^B,a\in\Sigma.
\end{eqnarray}
A direct product semiautomaton consists of two subsemiautomata that work in parallel and independently of each other.
\begin{proposition}
The transition monoid of a semiautomaton $A\times B$ is the direct product monoid $T(A\times B) = T(A)\times T(B)$.
\end{proposition}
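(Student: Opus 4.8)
The statement hinges on the multiplicativity of the extended transition function, so the plan is to prove one componentwise identity about $\delta^*$ and then read off everything about the transition monoids from it. First I would show that for all $s^A\in S^A$, $s^B\in S^B$ and all $x\in\Sigma^*$,
\begin{equation*}
(\delta^{A\times B})^*((s^A,s^B),x)=((\delta^A)^*(s^A,x),(\delta^B)^*(s^B,x)).
\end{equation*}
The cleanest argument is to invoke the uniqueness clause of the lemma on $\delta^*$: the right-hand side, read as a function of $x$, sends $\epsilon$ to $(s^A,s^B)$, agrees with $\delta^{A\times B}$ on single symbols $a\in\Sigma$ (since $(\delta^A)^*(s^A,a)=\delta^A(s^A,a)$, and likewise for $B$), and satisfies the defining recursion on words of the form $ay$ (since $(\delta^A)^*(s^A,ay)=(\delta^A)^*(\delta^A(s^A,a),y)$, and likewise for $B$); hence it must equal $(\delta^{A\times B})^*$. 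Equivalently, one can argue by a straightforward induction on the length of $x$, with base cases $x=\epsilon$ and $x=a\in\Sigma$ and inductive step $x=ay$.

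Next I would translate this identity into the language of transition monoids. It says precisely that the transformation $\tau_x^{A\times B}$ of $S^A\times S^B$ acts componentwise, $(s^A,s^B)\tau_x^{A\times B}=(s^A\tau_x^A,s^B\tau_x^B)$. Letting $\iota\colon T(S^A)\times T(S^B)\to T(S^A\times S^B)$ be the canonical monoid embedding sending $(f,g)$ to the componentwise map, we therefore have $\tau_x^{A\times B}=\iota(\tau_x^A,\tau_x^B)$ for every $x\in\Sigma^*$. Consequently the assignment $\tau_x^{A\times B}\mapsto(\tau_x^A,\tau_x^B)$ is well defined and injective (because $\iota$ is injective), it is a monoid homomorphism (composition in the image of $\iota$ is computed componentwise, and $\tau_\epsilon^{A\times B}=\id$), and it carries $T(A\times B)$ onto the submonoid of $T(A)\times T(B)$ generated by $\{(\tau_a^A,\tau_a^B)\mid a\in\Sigma\}$. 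This already identifies $T(A\times B)$ with a submonoid of the direct product monoid $T(A)\times T(B)$ and exhibits its elements as the coordinatewise pairs $(\tau_x^A,\tau_x^B)$.

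The one point that needs more care — and the step I expect to be the main obstacle — is whether this submonoid exhausts $T(A)\times T(B)$, i.e. whether every pair $(\tau_u^A,\tau_v^B)$ coming from words $u,v\in\Sigma^*$ is already of the form $(\tau_w^A,\tau_w^B)$ for a single word $w$. The natural approach is to use that $T(A)$ and $T(B)$ are finite, so that powers of any word are eventually periodic and $\tau_\epsilon$ supplies the identity, and to pad $u$ and $v$ to words of a common length inducing the same transformations; making the lengths agree exactly is the delicate part, and it is here that the chosen product convention really enters. Once this is settled, the homomorphism above is an isomorphism and, under the identification $\tau_x^{A\times B}\leftrightarrow(\tau_x^A,\tau_x^B)$, we obtain $T(A\times B)=T(A)\times T(B)$ as asserted.
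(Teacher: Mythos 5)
Your first two paragraphs are correct and follow essentially the same route as the paper, though the paper is far terser: it simply computes $\tau_x^{A\times B}(s^A,s^B)=\bigl({\delta^A}^*(s^A,x),{\delta^B}^*(s^B,x)\bigr)=(\tau_x^A(s^A),\tau_x^B(s^B))$, concludes $\tau_x^{A\times B}=(\tau_x^A,\tau_x^B)$, and adds only that conversely each pair $(\tau_x^A,\tau_x^B)$ (the same $x$ in both coordinates) is a transformation in $T(A\times B)$. Your appeal to the uniqueness clause of the lemma on $\delta^*$ is a clean way to obtain the same componentwise identity.

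The gap is in your third paragraph, and it cannot be closed: the padding strategy must fail, because surjectivity onto the full direct product is false in general. Take $B=A$ with $T(A)$ nontrivial. Then for every word $w$ the pair $(\tau_w^A,\tau_w^A)$ lies on the diagonal, so $T(A\times A)=\{(\tau_w^A,\tau_w^A)\mid w\in\Sigma^*\}$ is a proper submonoid of $T(A)\times T(A)$, and no single word $w$ can realize a pair $(\tau_u^A,\tau_v^A)$ with $\tau_u^A\ne\tau_v^A$. What your first two paragraphs actually establish --- and all that the paper's proof establishes --- is that $T(A\times B)$ is the submonoid of $T(A)\times T(B)$ generated by $\{(\tau_a^A,\tau_a^B)\mid a\in\Sigma\}$, i.e.\ a subdirect product (it surjects onto each factor separately). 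The paper's proof silently identifies $T(A)\times T(B)$ with the set $\{(\tau_x^A,\tau_x^B)\mid x\in\Sigma^*\}$ and never addresses the surjectivity question you raise. So your instinct that this step is the main obstacle is exactly right; the honest resolution is to weaken the stated equality to an embedding of $T(A\times B)$ into $T(A)\times T(B)$, not to search for a padding argument.
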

\begin{proof}
Let $x\in\Sigma^*$ and $(s^A,s^B)\in S^A\times S^B$.
Then 
$$\tau_x^{A\times B}(s^A,s^B) = ({\delta^A}^*(s^A,x),{\delta^B}^*(s^B,x)) = (\tau_x^A(s_A),\tau_x^B(s_B))$$ 
and so $\tau_x^{A\times B} = (\tau_x^A,\tau_x^B)$.
Conversely, each pair $(\tau_x^A,\tau_x^B)$ gives a transformation of $T(A\times B)$.
\end{proof}

Let $A=(S^A,\Sigma^A,\delta^A)$ and $B=(S^B,\Sigma^B,\delta^B)$ be semiautomata, 
and $\omega:S^A\times \Sigma^A\rightarrow \Sigma^B$ a mapping.
The {\em cascade product\/}\index{cascade product} of $A$ and $B$ with respect to the {\em connection mapping\/}\index{connection mapping} $\omega$ is the semiautomaton 
$$A\circ_\omega B = (S^{A\circ_\omega B},\Sigma^{A\circ_\omega B},\delta^{A\circ_\omega B})$$ 
with state set $S^{A\circ_\omega B} = S^A\times S^B$,
input alphabet $\Sigma^{A\circ_\omega B}=\Sigma^A$,
and transitions defined by
\begin{eqnarray}
(s^A,s^B)\tau_a^{A\circ_\omega B} = (s^A\tau_a^A, s^B\tau_{\omega(s^A,a)}^B), \quad s^A\in S^A,s^B\in S^B,a\in\Sigma^A.
\end{eqnarray}
The cascade product is commonly used in the case when $S^A\times \Sigma^A\subseteq \Sigma^B$ and $\omega$ is the identity mapping on $S^A\times\Sigma^A$.
Then the cascase product of $A$ and $B$ is denoted by $A\circ B$ and we have
\begin{eqnarray}
(s^A,s^B)\tau_a^{A\circ B} = (s^A\tau_a^A, s^B\tau_{(s^A,a)}^B), \quad s^A\in S^A,s^B\in S^B,a\in\Sigma^A.
\end{eqnarray}
Note that if $\Sigma^A=\Sigma^B$ and $\omega(s^A,a)=a$ for each $s^A\in S^A$ and $a\in\Sigma^A$,
then the cascade product $A\circ_\omega B$ reduces to the direct product $A\times B$.
Hence, the direct product of semiautomata is special case of the cascade product of semiautomata.


\begin{lemma}\label{l-sa-CBA}
Let\/ $A$, $B$, and\/ $C$ be semiautomata.
If\/ $B\geq A$, then for each\/ $C\circ_\omega A$, there exists a connection mapping~$\omega'$ such that\/ 
$C\circ_{\omega'} B\geq C\circ_{\omega} A$.
\end{lemma}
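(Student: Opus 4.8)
The plan is to prove the lemma by exhibiting an explicit covering, specifying its three data: the new connection mapping $\omega'$, a state mapping, and an input mapping. Fix the data witnessing $B\geq A$, namely a surjection $\phi$ of a subset $S_1^B\subseteq S^B$ onto $S^A$ together with a mapping $\xi:\Sigma^A\rightarrow\Sigma^B$ satisfying $\phi\tau_a^A=\tau_{\xi(a)}^B\phi$ for every $a\in\Sigma^A$. Given a connection mapping $\omega:S^C\times\Sigma^C\rightarrow\Sigma^A$ for $C\circ_\omega A$, I would define $\omega':S^C\times\Sigma^C\rightarrow\Sigma^B$ by $\omega'(s^C,a)=\xi(\omega(s^C,a))$; that is, one simply routes the connection through the input relabelling $\xi$ supplied by $B\geq A$. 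Since both $C\circ_\omega A$ and $C\circ_{\omega'}B$ carry the input alphabet $\Sigma^C$, the input mapping of the covering will be the identity $\id_{\Sigma^C}$, and the state mapping will be $\Phi:S^C\times S_1^B\rightarrow S^C\times S^A$, $(s^C,s^B)\mapsto(s^C,s^B\phi)$, which is onto $S^C\times S^A$ because $\phi$ is onto $S^A$.

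The verification of the covering identity $\Phi\tau_a^{C\circ_\omega A}=\tau_a^{C\circ_{\omega'}B}\Phi$ for each $a\in\Sigma^C$ then reduces, after applying both sides to a state $(s^C,s^B)$ with $s^B\in S_1^B$ and unfolding the two cascade-product transition rules, to a coordinatewise comparison. The first coordinate equals $s^C\tau_a^C$ on both sides trivially. The second coordinate is $(s^B\phi)\tau_{\omega(s^C,a)}^A$ on the left and $(s^B\tau_{\xi(\omega(s^C,a))}^B)\phi$ on the right; setting $b=\omega(s^C,a)\in\Sigma^A$, these agree by the very relation $\phi\tau_b^A=\tau_{\xi(b)}^B\phi$ that defines the covering $B\geq A$. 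No induction is required --- the whole argument is a direct unwinding of the definitions of cascade product and covering.

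The one point that needs genuine care --- and the only real obstacle here --- is the bookkeeping of the domains of the partial maps involved. One must check that the right-hand composite $\tau_a^{C\circ_{\omega'}B}\Phi$ is actually defined at every $(s^C,s^B)\in S^C\times S_1^B$, i.e.\ that $s^B\tau_{\xi(b)}^B$ again lies in $S_1^B=\dom(\phi)$. This is precisely the closure of $S_1^B$ under the transformations $\tau_{\xi(a)}^B$, which is implicit in the definition of covering --- it is what makes the defining equation $\phi\tau_a^A=\tau_{\xi(a)}^B\phi$ meaningful on $S_1^B$, and it is used explicitly in the proof of Lemma~\ref{l-sa-xi-inj}. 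Recording this observation at the outset, everything else is routine.
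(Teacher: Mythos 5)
Your proof is correct and takes essentially the same route as the paper's: the same connection mapping $\omega'=\omega\xi$, the same state mapping $(s^C,s^B)\mapsto(s^C,s^B\phi)$ defined on $S^C\times\phi^{-1}(S^A)$ with the identity on inputs, and the same coordinatewise verification reducing to $\phi\tau_b^A=\tau_{\xi(b)}^B\phi$ for $b=\omega(s^C,a)$. Your closing remark that $\dom(\phi)$ is closed under the transformations $\tau_{\xi(a)}^B$ is a point the paper leaves implicit in this proof (it is made explicit only in the proof of Lemma~\ref{l-sa-xi-inj}), and you resolve it correctly.
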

\begin{proof}
Let $B\geq A$.
Then there is a mapping $\phi$ from a subset of $S^B$ onto $S^A$ and a mapping $\xi$ from $\Sigma^A$ into $\Sigma^B$ such that
for each $a\in\Sigma^A$, $\phi\tau_a^A = \tau_{\xi(a)}^B\phi$.

Define the mapping $\phi':S^C\times \phi^{-1}(S^A)\rightarrow S^C\times S^A$ by
$$(s^C,s^B)\phi' = (s^C,s^B\phi),\quad s^C\in S^C, s^B\in\phi^{-1}(S^A).$$
Moreover, define the mapping $\omega':S^C\times\Sigma^C\rightarrow\Sigma^A$ as
$$(s^C,a)\omega' = (s^C,a)\omega\xi, \quad s^C\in S^C, a\in\Sigma^C.$$
Then it follows for each $(s^C,s^B)\in S^C\times \phi^{-1}(S^A)\subseteq S^{C\circ_{\omega'}B}$ and $a\in\Sigma^C$,
$$(s^C,s^B)\phi'\tau_a^{C\circ_\omega A} = (s^C,s^B)\tau_a^{C\circ_{\omega'} B}\phi'.$$
Hence,
$C\circ_{\omega'} B\geq C\circ_{\omega} A$.
\end{proof}

\begin{lemma}\label{l-sa-ABC}
Let\/ $A$, $B$, and\/ $C$ be semiautomata.
If\/ $B\geq A$ with $\xi$ an injective mapping, 
then for each semiautomaton $A\circ_\omega C$, there exists a connection mapping~$\omega'$ 
such that\/ $B\circ_{\omega'} C\geq A\circ_{\omega} C$.
\end{lemma}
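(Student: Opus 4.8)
The plan is to run the same kind of diagram chase as in the proof of Lemma~\ref{l-sa-CBA}, but now the covering $B\geq A$ sits in the \emph{first} factor of the cascade product, so the new connection mapping $\omega'$ must incorporate \emph{both} the input translation $\xi$ and the state translation $\phi$. The injectivity of $\xi$ is exactly what will make $\omega'$ well defined.

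First I would unwind the hypothesis $B\geq A$: there is a surjective map $\phi$ from $S_1^B=\phi^{-1}(S^A)\subseteq S^B$ onto $S^A$ and an injective map $\xi:\Sigma^A\rightarrow\Sigma^B$ with $\phi\tau_a^A=\tau_{\xi(a)}^B\phi$ for all $a\in\Sigma^A$; as already observed in the proof of Lemma~\ref{l-sa-xi-inj}, this identity forces $S_1^B$ to be closed under every transformation $\tau_{\xi(a)}^B$ with $a\in\Sigma^A$. Fix a semiautomaton $A\circ_\omega C$ with connection mapping $\omega:S^A\times\Sigma^A\rightarrow\Sigma^C$; recall that $A\circ_\omega C$ has state set $S^A\times S^C$ and input alphabet $\Sigma^A$, whereas the semiautomaton $B\circ_{\omega'}C$ to be constructed will have state set $S^B\times S^C$ and input alphabet $\Sigma^B$.

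Next I would write down the covering data. For the state map I take $\phi':S_1^B\times S^C\rightarrow S^A\times S^C$ defined by $(s^B,s^C)\phi'=(s^B\phi,s^C)$, which is surjective because $\phi$ is; for the input map I simply reuse $\xi:\Sigma^A\rightarrow\Sigma^B$. The connection mapping $\omega':S^B\times\Sigma^B\rightarrow\Sigma^C$ is defined by $(s^B,\xi(a))\omega'=(s^B\phi,a)\omega$ whenever $s^B\in S_1^B$ and $a\in\Sigma^A$, and arbitrarily (say, equal to a fixed symbol of $\Sigma^C$) on every other pair. Because $\xi$ is injective, each $b\in\Sigma^B$ determines at most one $a\in\Sigma^A$ with $b=\xi(a)$, so this prescription is unambiguous.

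Finally I would verify the covering identity $\phi'\tau_a^{A\circ_\omega C}=\tau_{\xi(a)}^{B\circ_{\omega'}C}\phi'$ for $a\in\Sigma^A$, evaluated on a state $(s^B,s^C)$ with $s^B\in S_1^B$. The first coordinate of both sides equals $(s^B\phi)\tau_a^A=(s^B\tau_{\xi(a)}^B)\phi$ by $B\geq A$, where the closedness of $S_1^B$ is what makes $\phi'$ applicable to the right-hand expression; the second coordinates are $s^C\tau_{(s^B\phi,a)\omega}^C$ and $s^C\tau_{(s^B,\xi(a))\omega'}^C$, which coincide by the definition of $\omega'$. Hence $B\circ_{\omega'}C\geq A\circ_\omega C$. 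I expect the only genuinely delicate point to be the well-definedness of $\omega'$ — everything else is a routine computation analogous to Lemmas~\ref{l-sa-CBA} and~\ref{l-sa-xi-inj} — and it is precisely there that the hypothesis that $\xi$ be injective is indispensable.
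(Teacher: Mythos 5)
Your proposal is correct and follows essentially the same route as the paper's proof: the same state map $\phi'=(\phi,\mathrm{id})$ on $\phi^{-1}(S^A)\times S^C$, the same definition $(s^B,\xi(a))\omega'=(s^B\phi,a)\omega$ made unambiguous by the injectivity of $\xi$ and extended arbitrarily elsewhere, and the same coordinatewise verification of the covering identity. Your explicit remark that $\phi^{-1}(S^A)$ is closed under the transformations $\tau_{\xi(a)}^B$ is a detail the paper leaves implicit, but it is a welcome addition rather than a deviation.
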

\begin{proof}
Let $B\geq A$.
Then there is a mapping $\phi$ from a subset of $S^B$ onto $S^A$ and a mapping $\xi$ from $\Sigma^A$ into $\Sigma^B$ such that
for each $a\in\Sigma^A$, $\phi\tau_a^A = \tau_{\xi(a)}^B\phi$.
Suppose $\xi$ is injective.

Define the mapping $\phi':\phi^{-1}(S^A)\times S^C\rightarrow S^A\times S^C=S^{A\circ_\omega C}$ by
$$(s^B,s^C)\phi' = (s^B\phi,s^C),\quad s^B\in\phi^{-1}(S^A), s^C\in S^C.$$
Moreover, define the mapping $\omega':S^B\times\Sigma^B\rightarrow\Sigma^C$ such that
if $s^B\in\phi^{-1}(S^A)$ and $a\in\Sigma^A$, then 
$$(s^B,\xi(a))\omega' = (s^B\phi,a)\omega.$$
This assignment is well-defined since $\xi$ is injective.
For all other pairs $(s^B,a)$ the image under $\omega'$ can be defined arbitrarily.

Then it follows for each $(s^B,s^C)\in \phi^{-1}(S^A)\times S^C\subseteq S^{B\circ_{\omega'}C}$ and $a\in\Sigma^C$,
$$(s^B,s^C)\phi'\tau_a^{A\circ_\omega C} = (s^B,s^C)\tau_{\xi(a)}^{B\circ_{\omega'} C}\phi'.$$
Hence, $B\circ_{\omega'} C\geq A\circ_{\omega} C$.
\end{proof}

\section{Coverings with Cascade Product}

Semiautomata coming from partitions or decompositions of state sets play an important role in 
algebraic automata theory.

Let $A=(S,\Sigma,\delta)$ be a semiautomaton.
A {\em partition\/}\index{partition} of $S$ is a collection of non-empty pairwise disjoint subsets of $S$, whose union is the whole set $S$.
The elements of a partition are called {\em blocks}\index{block}.

A partition $P$ of $S$ is called {\em admissible\/}\index{partition!admissible} if it is compatible with the monoid action of the semiautomaton; i.e.,
for each input symbol $a\in\Sigma$ and each block~$B$ in $P$, the set $\tau_a(B)=\{\tau_a(b)\mid b\in B\}$ is contained in a block of $P$.
Note that for each string $x\in\Sigma^*$ and each block $B$ of $P$, it follows that $\tau_x(B)$ is contained in a block of $P$.

A partition $P$ of a set $S$ is a {\em refinement\/}\index{refinement} of a partition $Q$ of $S$ 
if each block of $P$ is a subset of some block of $Q$.
This {\em finer}\index{finer relation} relation on the set of partitions of a set~$S$ is a partial order.
Each element has a least upper bound and a greatest lower bound.
Thus the partitions of a set $S$ form a lattice.
The coarsest partition of $S$ is $\{S\}$ consisting of $S$ as the only block 
and the finest partition is $\{\{s\}\mid s\in S\}$ consisting of the elements of $S$ as singleton blocks.

\begin{example}\label{e-sa4}
Consider the semiautomaton $A$ with state set $S= \{1,\ldots,7\}$, input alphabet $\Sigma=\{a,b\}$, 
and transitions given by 
$$\tau_a = \left( \begin{array}{ccccccc}
1 & 2 & 3 & 4 & 5 & 6 & 7\\
2 & 1 & 3 & 5 & 4 & 5 & 7
\end{array} \right)
\quad\mbox{and}\quad
\tau_b = \left( \begin{array}{ccccccc}
1 & 2 & 3 & 4 & 5 & 6 & 7\\
5 & 6 & 5 & 1 & 2 & 2 & 7
\end{array} \right).$$
Then $P = \{P_1 = \{1,2,3\}, P_2 = \{4,5,6\}, P_3=\{7\}\}$ is an admissible partition of $S$, since
$$\begin{array}{llllll}
P_1\tau_a =\{1,2,3\} \subseteq P_1, && P_2\tau_a =\{4,5\} \subseteq P_2, && P_3\tau_a=\{7\} \subseteq P_3,\\
P_1\tau_b =\{5,6\} \subseteq P_2,   && P_2\tau_b =\{1,2\} \subseteq P_1, && P_3\tau_b=\{7\} \subseteq P_3.
\end{array}$$
\end{example}

Let $A=(S^A,\Sigma^A,\delta^A)$ be a semiautomaton.
Suppose $P=\{P_1,\ldots,P_s\}$ is an admissible partition of the state set $S^A$.
Then there is a semiautomaton $B=A/P$ with state set $S^B = \{\bar P_1,\ldots,\bar P_s\}$ and input alphabet $\Sigma^B=\Sigma^A$.
Since the partition $P$ is admissible, for each symbol $a\in\Sigma^A$ and each block $P_i$ there is a block $P_j$ such that $\tau_a^A(P_i)\subseteq P_j$.
Thus the transition function $\delta^B$ of $B$ can be defined as $\delta^B(\bar P_i,a) =\bar P_j$.
The semiautomaton $B$ is called the {\em $P$-factor}\index{P-factor} of $A$.

\begin{example}\label{e-sa5}
The $P$-factor of the semiautomaton $A$ in Ex.~\ref{e-sa4} 
is the semiautomaton $B=A/P$ with state set $S^B= \{\bar P_1,\bar P_2,\bar P_3\}$,
input alphabet $\Sigma^B=\{a,b\}$, and transitions given by 
$$\tau_a^B = \left( \begin{array}{ccc} \bar P_1 & \bar P_2 & \bar P_3 \\ \bar P_1 & \bar P_2 & \bar P_3 \end{array} \right)
\quad\mbox{and}\quad
\tau_b^B = \left( \begin{array}{ccc}   \bar P_1 & \bar P_2 & \bar P_3 \\ \bar P_2 & \bar P_1 & \bar P_3 \end{array} \right).$$
\end{example}

\begin{proposition}
Let\/ $B$ be a semiautomaton, $P$ an admissible partition of\/ $S^B$, 
and $A=B/P$ the\/ $P$-factor of\/ $B$.
Then\/ $B\geq A$.
\end{proposition}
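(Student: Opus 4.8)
The plan is to exhibit the covering $B \ge A$ directly from the definition, using the canonical projection of the state set $S^B$ onto the block set $S^A = \{\bar P_1,\ldots,\bar P_s\}$ as the state mapping and the identity on the common input alphabet as $\xi$. Concretely, set $\phi : S^B \to S^A$ by $s \mapsto \bar P_i$ where $P_i$ is the unique block of $P$ containing $s$; this is well-defined because $P$ is a partition, and it is surjective because every block is non-empty. Since $\Sigma^A = \Sigma^B$ by construction of the $P$-factor, take $\xi = \id_{\Sigma^A}$, which is trivially a mapping $\Sigma^A \to \Sigma^B$. Note that here $\phi$ is defined on all of $S^B$, so the ``subset of $S^B$'' in the definition of covering is $S^B$ itself.

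\textbf{The core step} is to verify the intertwining condition $\phi \tau_a^A = \tau_{\xi(a)}^B \phi = \tau_a^B \phi$ for each $a \in \Sigma^A$; since the composition convention in this paper is $(fg)(x) = g(f(x))$, this reads $s\,\phi\,\tau_a^A = s\,\tau_a^B\,\phi$ for every $s \in S^B$. Fix $s \in S^B$ and let $P_i$ be its block, so $s\phi = \bar P_i$. On the left, $\bar P_i \tau_a^A = \bar P_j$ where $P_j$ is the block with $\tau_a^B(P_i) \subseteq P_j$, which exists and is unique precisely because $P$ is admissible. On the right, $s \tau_a^B \in \tau_a^B(P_i) \subseteq P_j$, so $(s\tau_a^B)\phi = \bar P_j$ as well. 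Hence both sides equal $\bar P_j$, and the condition holds for all $s$, so $\phi\tau_a^A = \tau_a^B\phi$.

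\textbf{I expect no serious obstacle here} — the only point requiring care is keeping the direction of the covering straight (it is $B$, the richer semiautomaton, that covers $A$, its $P$-factor) and matching the composition-order conventions $(fg)(x) = g(f(x))$ and $\phi\tau_a^A = \tau_{\xi(a)}^B\phi$ from the definition of covering, rather than the opposite order used in the definition of homomorphic image. One should also remark that the uniqueness of the block $P_j$ with $\tau_a^B(P_i) \subseteq P_j$ is what makes $\delta^B$ (equivalently $\tau_a^A$ on $\bar P_i$) well-defined in the first place, so this was already established when the $P$-factor was constructed; the proof above simply reuses it. Assembling these observations, $\phi$ and $\xi = \id$ witness $B \ge A$, completing the argument.
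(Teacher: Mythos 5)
Your proof is correct and follows essentially the same route as the paper: the canonical block projection as $\phi$, the identity as $\xi$, and a direct verification that $s\,\phi\,\tau_a^A = s\,\tau_a^B\,\phi$ using admissibility of $P$. Your added remarks on well-definedness and composition order are accurate but do not change the substance of the argument.
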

\begin{proof}
Let $P=\{P_1,\ldots,P_s\}$ be a partition of $S^B$.
Then define $\phi:S^B\rightarrow S^A$ by setting $\phi(s^B)=\bar P_i$ if and only if $s^B\in P_i$,
and $\xi:\Sigma^A\rightarrow\Sigma^B$ the identity mapping.
Then for each $s^B\in S^B$ with $s^B\in P_i$ and $a\in\Sigma^B$,
$s^B(\phi\tau_a^A) = \bar P_i\tau_a^A=\delta^A(\bar P_i,a)=\bar P_j$ 
if and only if $s^B(\tau_a^B\phi) = \delta^B(s^B,a)\phi =\bar P_j$.
\end{proof}

In the following, let $\pi_P:S\rightarrow P$ denote the natural mapping of $S$ onto the set of blocks of a partition $P$ of $S$;
i.e., $\pi_P(s) = P_i$ if and only if $s\in P_i$.
Moreover, let $m(P)$ denote the maximal number of elements in a block of a partition (or decomposition) $P$ of a finite set.

\begin{theorem}\label{t-sa-part}
Let\/ $A=(S^A,\Sigma,\delta^A)$ be a semiautomaton and $P$ an admissible partition of\/ $S^A$.
Then there is a semiautomaton\/ $C$ with $|S^C|=m(P)$ such that the cascade product\/ $B\circ C$ covers $A$, where\/ $B = A/P$ is the\/ $P$-factor of\/ $A$.
\end{theorem}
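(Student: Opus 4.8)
The plan is to realize each state of $A$ as a pair: the block of $P$ containing it, which is a state of $B=A/P$, together with its \emph{position} inside that block, which will be a state of the small factor $C$. Thus $C$ records exactly the information that is discarded when one passes from $A$ to its $P$-factor. Concretely, write $P=\{P_1,\ldots,P_s\}$ and $m=m(P)$, and fix once and for all, for each block, an enumeration without repetitions $P_i=\{p_{i,1},\ldots,p_{i,|P_i|}\}$. Define $C=(S^C,\Sigma^C,\delta^C)$ by $S^C=\{1,\ldots,m\}$, $\Sigma^C=S^B\times\Sigma$ (so that the inclusion $S^B\times\Sigma^B\subseteq\Sigma^C$ needed to form the product $B\circ C$ holds), and, for $1\le k\le|P_i|$,
$$\delta^C\bigl(k,(\bar P_i,a)\bigr)=k',\qquad\mbox{where }\bar P_j=\tau_a^B(\bar P_i)\ \mbox{ and }\ p_{j,k'}=\tau_a^A(p_{i,k}),$$
letting $\delta^C(k,(\bar P_i,a))$ be arbitrary (say $1$) when $k>|P_i|$. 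Since $P$ is admissible and $B=A/P$, the image $\tau_a^A(P_i)$ lies in the block $P_j$ with $\bar P_j=\tau_a^B(\bar P_i)$, so the index $k'$ exists, is unique, and satisfies $k'\le|P_j|\le m$; hence $\delta^C$ is a well-defined total function, $C$ is a semiautomaton, and $|S^C|=m(P)$.

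Next I would exhibit the covering. In $B\circ C$ one has $S^{B\circ C}=S^B\times S^C$ and $\Sigma^{B\circ C}=\Sigma$; put
$$D=\{(\bar P_i,k)\mid 1\le i\le s,\ 1\le k\le|P_i|\}\subseteq S^{B\circ C},$$
let $\phi:D\to S^A$ be $(\bar P_i,k)\mapsto p_{i,k}$, which is a bijection and in particular maps $D$ onto $S^A$, and let $\xi=\id_\Sigma$. By the defining formula of the cascade product,
$$(\bar P_i,k)\tau_a^{B\circ C}=\bigl(\bar P_i\tau_a^B,\,k\,\tau_{(\bar P_i,a)}^C\bigr)=(\bar P_j,k'),$$
where $\bar P_j=\tau_a^B(\bar P_i)$ and $p_{j,k'}=\tau_a^A(p_{i,k})$; note $k'\le|P_j|$, so $D$ is invariant under each $\tau_a^{B\circ C}$ and the composite $\tau_{\xi(a)}^{B\circ C}\phi$ makes sense on $D$. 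Then for every $a\in\Sigma$ and $(\bar P_i,k)\in D$,
$$(\bar P_i,k)\bigl(\phi\tau_a^A\bigr)=p_{i,k}\tau_a^A=p_{j,k'}=(\bar P_j,k')\phi=(\bar P_i,k)\bigl(\tau_{\xi(a)}^{B\circ C}\phi\bigr),$$
i.e.\ $\phi\tau_a^A=\tau_{\xi(a)}^{B\circ C}\phi$, and therefore $B\circ C\ge A$.

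The mathematical content here is slight: it amounts to the already-established fact that $A$ covers its $P$-factor $B$, enriched by the observation that the coordinate lost in the quotient can be tracked by an automaton on $m(P)$ states. The one point to watch is the bookkeeping — that $\delta^C$ depends on the present $A$-state only through its $P$-block (which is precisely where admissibility of $P$ is used) and that the two coordinates produced by the cascade transition match the block-and-position decomposition of $\tau_a^A$. Fixing the block enumerations at the outset makes this routine, and I expect no genuine obstacle.
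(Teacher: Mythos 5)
Your proposal is correct and follows essentially the same route as the paper: your block enumerations $P_i=\{p_{i,1},\ldots,p_{i,|P_i|}\}$ are just an explicit construction of the paper's cross-partition $Q$ (take $Q_k$ to be the set of $k$-th elements of the blocks), and your transition rule for $C$, the bijection $\phi$, and the verification of $\phi\tau_a^A=\tau_{\xi(a)}^{B\circ C}\phi$ coincide with the paper's. The only difference is cosmetic — you make concrete the partition whose existence the paper merely asserts.
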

\begin{proof}
Let $P=\{P_1,\ldots,P_s\}$ be an admissible partition of $S^A$, and 
$B$ the $P$-factor of $A$ with state set $S^B=\{\bar P_1,\ldots,\bar P_s\}$ and input alphabet $\Sigma$.
One can find a partition $Q = \{Q_1,\ldots,Q_t\}$ of $S^A$ with $t=m(P)$ elements such that the intersection of the partitions $P$ and $Q$,
\begin{eqnarray}
P\cap Q = \{P_i\cap Q_j\mid P_i\cap Q_j\ne \emptyset, 1\leq i\leq s,1\leq j\leq t\},
\end{eqnarray}
equals the finest partition of $S^A$; i.e., each block in $P\cap Q$ is a singleton set; i.e., $P\cap Q =  \{\{s\}\mid s\in S^A\}$.

Consider the semiautomaton $C$ with state set $S^C=\{\bar Q_1,\ldots,\bar Q_t\}$, 
input alphabet $\Sigma^C = S^B\times \Sigma$, and transitions given by
$$\bar Q_j\tau_{(\bar P_i,a)}^C = \overline{(Q_j\cap P_i)\tau_a^A\pi_Q},\quad a\in \Sigma.$$
This operation is well-defined, since $Q_j\cap P_i$ is either empty or a singleton set $\{s^A\}$.
In the first case, the transition can be chosen arbitrarily.
In the second, $\overline{s^A\tau_a^A\pi_Q}$ is the state $\bar Q_i$ of $C$ where $s^A\tau_a^A\in Q_i$.

Let $T^{B\circ C}$ denote the subset of $S^{B\circ C}$ consisting of all pairs $(\bar P_i,\bar Q_j)$ such that $P_i\cap Q_j\ne \emptyset$.
Then by construction
there is a bijective mapping $\phi: T^{B\circ C}\rightarrow S^A$ defined by $(\bar P_i,\bar Q_j)\mapsto P_i\cap Q_j$.
Thus for each element $(\bar P_i,\bar Q_j)$ in $T^{B\circ C}$ and each symbol $a\in\Sigma$, 
\begin{eqnarray*}
(\bar P_i,\bar Q_j)\phi \tau_a^A 
&=&  (P_i\cap Q_j)\tau_a^A \\
&=& P_i\tau_a^A\pi_P \cap (P_i\cap Q_j)\tau_a^A\pi_Q\\
&=& (\overline{P_i\tau_a^A\pi_P},  \overline{(P_i\cap Q_j)\tau_a^A\pi_Q})\phi\\
&=& (\bar P_i\tau_a^B,  \bar Q_j\tau_{(\bar P_i,a)}^C)\phi\\
&=& (\bar P_i,  \bar Q_j)\tau_{a}^{B\circ C}\phi.
\end{eqnarray*}
Hence, $B\circ C$ covers $A$.
\end{proof}

Note that if $Q$ is an admissible partition of $S^A$, then the mappings $\tau_{(\bar P_i,a)}^C$ are independent of $\bar P_i$,
since all elements of $Q_j$ are mapped by $\tau_a^A$ into the same block of $Q$.
Then all inputs $(\bar P_i,a)$ with the same symbol $a\in\Sigma^A$ are equal 
and after identifying them, the transitions of $C$ are of the form $\bar Q_j\tau_a^C = \overline{Q_j\tau_a^A\pi_Q}$.
Therefore, the cascade product $B\circ C$ reduces to the direct product $B\times C$.

\begin{example}\label{e-sa6}
Consider the semiautomaton $A$ in Ex.~\ref{e-sa4} 
together with the admissible partition $P$ of $S^A$ 
and the $P$-factor $B=A/P$ in Ex.~\ref{e-sa5}. 
Note that we have $m(P)=3$. 

The partition $Q = \{Q_1=\{1,4,7\},Q_2=\{2,5\},Q_3=\{3,6\}\}$ of $S^A$ has the property that 
$P\cap Q$ is the finest partition of $S^A$. 
Indeed, 
$P_1\cap Q_1 = \{1\}$,
$P_1\cap Q_2 = \{2\}$,
$P_1\cap Q_3 = \{3\}$,
and so on.

The semiautomaton $C$ has state set $S^C=\{\bar Q_1,\bar Q_2,\bar Q_3\}$, 
input alphabet $\Sigma^C = S^B\times \Sigma$,
and transitions given by the following table:
$$\begin{array}{c||ccc}
C            & \bar Q_1 & \bar Q_2 & \bar Q_3 \\\hline\hline
(\bar P_1,a) & \bar Q_2 & \bar Q_1 & \bar Q_3 \\
(\bar P_1,b) & \bar Q_2 & \bar Q_3 & \bar Q_2 \\
(\bar P_2,a) & \bar Q_2 & \bar Q_1 & \bar Q_2 \\
(\bar P_2,b) & \bar Q_1 & \bar Q_2 & \bar Q_2 \\
(\bar P_3,a) & \bar Q_1 & \bar Q_1^* & \bar Q_1^* \\
(\bar P_3,b) & \bar Q_1 & \bar Q_1^* & \bar Q_1^* 
\end{array}$$
For instance, $\bar Q_1\tau_{(\bar P_1,a)}^C = \overline{\{1\}\tau_a^A\pi_Q} = \overline{\{2\}\pi_Q} = \bar Q_2$.
Note that the marked (*) transitions can be chosen arbitrarily, since the intersections of blocks are empty.

The semiautomaton $B\circ C$ has the state set
$S^{B\circ C} = S^B\times S^C$,
the input alphabet
$\Sigma^{B\circ C} = \Sigma^B$,
and the transitions given by the following table:
$$\begin{array}{c||ccccc}
B\circ C &(\bar P_1,\bar Q_1) 
         &(\bar P_1,\bar Q_2)
         &(\bar P_1,\bar Q_3)
         &(\bar P_2,\bar Q_1)
         &(\bar P_2,\bar Q_2)\\\hline
a        &(\bar P_1,\bar Q_2)
         &(\bar P_1,\bar Q_1)
         &(\bar P_1,\bar Q_3)
         &(\bar P_2,\bar Q_2)
         &(\bar P_2,\bar Q_1)\\
b        &(\bar P_2,\bar Q_2)
         &(\bar P_2,\bar Q_3)
         &(\bar P_2,\bar Q_2)
         &(\bar P_1,\bar Q_1)
         &(\bar P_1,\bar Q_2)\\ \hline\hline
         &(\bar P_2,\bar Q_3) &(\bar P_3,\bar Q_1) &(\bar P_3,\bar Q_2) &(\bar P_3,\bar Q_3) \\\hline
a        &(\bar P_2,\bar Q_2)&(\bar P_3,\bar Q_1)&(\bar P_3,\bar Q_1)&(\bar P_3,\bar Q_1)\\
b        &(\bar P_1,\bar Q_2)&(\bar P_3,\bar Q_1)&(\bar P_3,\bar Q_1)&(\bar P_3,\bar Q_1)\\
\end{array}$$
\end{example}

Let $A=(S,\Sigma,\delta)$ be a semiautomaton.
A {\em decomposition\/}\index{decomposition} of $S$ is a collection of non-empty subsets of $S$, whose union is the whole set $S$.
The elements of a decomposition are called {\em blocks}\index{block}.
If the blocks of a decomposition are pairwise disjoint, the decomposition is a partition.
A decomposition $D$ of $S$ is called {\em admissible\/}\index{decomposition!admissible} 
if for each input symbol $a\in\Sigma$ and each block $B$ in $D$, there is at least one block of $D$
containing the set $\tau_a(B)=\{\tau_a(b)\mid b\in B\}$.
Note that for each string $x\in\Sigma^*$ and each block $B$ of $D$, it follows that 
there is a least one block of $D$ including $\tau_x(B)$.
For an admissible decomposition $D$ of $S$, 
the {\em $D$-factor}\index{D-factor} of $A$ is defined as for a partition of $S$.
However, if for some input symbol $a\in\Sigma$ and some block $B$ in $D$, $\tau_a(B)$ may be contained 
in more than one block of $D$ and then one of these blocks can be arbitrarily chosen.

\begin{example}\label{e-sa7}
Consider the semiautomaton $A = (S,\Sigma,\delta)$ with state set $S=\{1,2,3,4,5,6\}$, input alphabet $\Sigma=\{a,b\}$, and transitions given by
$$\tau_a = \left( \begin{array}{cccccc}
1 & 2 & 3 & 4 & 5 & 6 \\
2 & 1 & 3 & 1 & 3 & 5 
\end{array} \right)
\quad\mbox{and}\quad
\tau_b = \left( \begin{array}{cccccc}
1 & 2 & 3 & 4 & 5 & 6 \\
5 & 4 & 3 & 3 & 3 & 3 
\end{array} \right).$$
Then $D = \{D_1 = \{1,2,3\}, D_2 = \{3,4,5\}, D_3=\{5,6\}\}$ is an admissible decompositon of $S$, since
$$\begin{array}{llllll}
D_1\tau_a =\{1,2,3\}\subseteq D_1, && D_2\tau_a =\{1,3\}\subseteq D_1, && D_3\tau_a=\{3,5\}\subseteq D_2,\\
D_1\tau_b =\{3,4,5\}\subseteq D_2, && D_2\tau_b =\{3\}\subseteq D_1, D_2, && D_3\tau_b=\{3\}\subseteq D_1, D_2.
\end{array}$$
A $D$-factor $B=A/D$ of $A$ can be defined with state set $S^D=\{\bar D_1,\bar D_2,\bar D_3\}$, input alphabet $\Sigma^D=\{a,b\}$, and transitions given by the table
$$\begin{array}{c||ccc}
B & \bar D_1 & \bar D_2 & \bar D_3 \\\hline\hline
a & \bar D_1 & \bar D_1 & \bar D_2 \\
b & \bar D_2 & \bar D_1 & \bar D_1 \\
\end{array}$$
Note that $\bar D_2\tau_b^B$ and $\bar D_3\tau_b^B$ can be defined as either $\bar D_1$ or $\bar D_2$.
\end{example}

It is more convenient to work with partitions than with decompositions.
For this, the following construction due to Michael Yoeli (1963) is useful.
To this end, let $A=(S^A,\Sigma^A,\delta^A)$ be a semiautomaton, 
$D=\{D_1,\ldots,D_s\}$ be an admissible decomposition of $S^A$,
and $B=A/D$ be a $D$-factor.

Define an {\em auxiliary semiautomaton} $A^* = (S^{A^*},\Sigma^{A^*},\delta^{A^*})$ from $A$ and $B$
with state set $S^{A^*} = \{(s^A,\bar D_i)\mid s^A\in S^A\cap D_i\}$,
input alphabet $\Sigma^{A^*} = \Sigma^A = \Sigma^B$, and transitions given by
\begin{eqnarray}
(s^A,\bar D_i)\tau_a^{A^*} = (s^A\tau_a^A,\bar D_i\tau_a^B), \quad a\in\Sigma^{A^*}.
\end{eqnarray}
Note that this operation is well-defined, 
since if $s^A\in D_i$ and $\bar D_i\tau_a^B = \bar D_j$ for some $1\leq j\leq s$, then $s^A\tau_a^A\in D_j$.

Define the partition $D^*$ of $S^{A^*}$, where each block consists of all pairs~$(s^A,\bar D_i)$ which have the same second component,
i.e., $D^* = \{D^*_1,\ldots,D^*_s\}$ with $D^*_i = \{(s^A,D_i)\mid s^A\in S^A\cap D_i\}$.

Then $D^*$ is an admissible partition, since if $s^A\in D_i$ and $\bar D_i\tau_a^B = \bar D_j$ 
for some $1\leq j\leq s$, 
then $s^A\tau_a^A\in D_j$ and so $(s^A,\bar D_i)\tau_a^{A^*} = (s^A\tau_a^A,\bar D_i\tau_a^B) = (s^A\tau_a^A, \bar D_j)$;
i.e., $D_i\tau_a^{A^*} \subseteq D_j$.

The $D^*$-factor $B^*=A^*/D^*$ of $A^*$ has state set $S^{B^*} = \{\bar D_1^*,\ldots,\bar D_s^*\}$, 
input alphabet $\Sigma^{B^*} = \Sigma^{A^*}$, and the transitions are defined as 
$\bar D_i^*\tau_a^{B^*} = \bar D_j$, where $D_i\tau_a^{A^*}\subseteq D_j$.

\begin{proposition}\label{p-sa-stern}
The semiautomata\/ $B^*$ and\/ $B$ are isomorphic and moreover $A^*$ covers~$A$.
\end{proposition}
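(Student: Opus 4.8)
The plan is to prove the two assertions in turn, in each case by writing down the evident maps and checking compatibility with the transitions; no step is genuinely hard, but one has to keep track of the fact that the $D$-factor $B$ may rest on arbitrary choices.

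\emph{The isomorphism $B^*\cong B$.} I would use the state map $\phi_0\colon S^B\to S^{B^*}$, $\bar D_i\mapsto\bar D_i^*$, together with the identity on inputs, and check that $\phi_0$ (with $\xi=\id$) exhibits $B^*$ as a homomorphic image of $B$ in the sense of the definition given before Example~\ref{e-sa1}. Since each block $D_i$ is non-empty, the block $D_i^*=\{(s^A,\bar D_i)\mid s^A\in D_i\}$ of $D^*$ is non-empty, so both $S^B$ and $S^{B^*}$ are indexed bijectively by $\{D_1,\dots,D_s\}$ and $\phi_0$ is a bijection. For $a\in\Sigma^{A^*}$ and a block $D_i$, put $\bar D_j=\bar D_i\tau_a^B$; expanding the definition of $\tau_a^{A^*}$ gives
\[
D_i^*\tau_a^{A^*}=\{(s^A\tau_a^A,\bar D_i\tau_a^B)\mid s^A\in D_i\}=\{(s^A\tau_a^A,\bar D_j)\mid s^A\in D_i\},
\]
and by the well-definedness remark preceding the proposition every state $s^A\tau_a^A$ occurring here lies in $D_j$, so $D_i^*\tau_a^{A^*}\subseteq D_j^*$. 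Hence $\bar D_i^*\tau_a^{B^*}=\bar D_j^*$, and therefore $(\bar D_i\tau_a^B)\phi_0=\bar D_j^*=\bar D_i^*\tau_a^{B^*}=(\bar D_i\phi_0)\tau_a^{B^*}$, i.e.\ $\tau_a^B\phi_0=\phi_0\tau_a^{B^*}$. As $\phi_0$ and $\id$ are bijective, this shows $B^*$ and $B$ are isomorphic.

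\emph{The covering $A^*\ge A$.} I would take $\phi\colon S^{A^*}\to S^A$, $(s^A,\bar D_i)\mapsto s^A$, and $\xi=\id_{\Sigma^A}$. The map $\phi$ is surjective because $D$ is a decomposition of $S^A$: given $s^A\in S^A$ there is a block $D_i\ni s^A$, and then $(s^A,\bar D_i)\in S^{A^*}$ maps to $s^A$. For $a\in\Sigma^A$ and $(s^A,\bar D_i)\in S^{A^*}$ one computes
\[
(s^A,\bar D_i)\,\tau_a^{A^*}\phi=(s^A\tau_a^A,\bar D_i\tau_a^B)\phi=s^A\tau_a^A=\big((s^A,\bar D_i)\phi\big)\tau_a^A=(s^A,\bar D_i)\,\phi\tau_a^A ,
\]
so $\phi\tau_a^A=\tau_{\xi(a)}^{A^*}\phi$, which is exactly the defining condition for $A^*\ge A$.

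\emph{The point to watch.} There is no serious obstacle here; the only thing needing attention is that the $D$-factor $B$ is not canonical when some $\tau_a^A(D_i)$ meets several blocks of $D$. But $D^*$ is a genuine \emph{partition} of $S^{A^*}$, so the $D^*$-factor $B^*$ has uniquely determined transitions, and the computation above shows that they necessarily coincide with whichever transitions were chosen for $B$ — which is precisely why $\phi_0$ is a well-defined isomorphism irrespective of those choices.
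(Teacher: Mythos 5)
Your proof is correct and follows essentially the same route as the paper: the bijection between $\{\bar D_i^*\}$ and $\{\bar D_i\}$ with the identity on inputs for the isomorphism, and the projection $(s^A,\bar D_i)\mapsto s^A$ with $\xi=\id$ for the covering. The only difference is that you spell out explicitly why $D_i^*\tau_a^{A^*}\subseteq D_j^*$ forces the (canonical) transitions of $B^*$ to agree with the possibly non-canonical choices made for $B$ --- a point the paper's proof leaves implicit --- which is a worthwhile clarification but not a change of approach.
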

\begin{proof}
Take the bijective mapping $\phi:S^{B^*}\rightarrow S^B:\bar D_i^*\mapsto \bar D_i$ and the identity mapping $\xi:\Sigma^{B^*}\rightarrow \Sigma^B$.
Then for each symbol $a\in\Sigma^{B^*}$ and $1\leq i\leq s$, we have
$\bar D_i^* \tau_a^{B^*}\phi = \bar D_j^*\phi = \bar D_j$ if and only if
$\bar D_i^*\phi\tau_a^B = \bar D_i\tau_a^B = \bar D_j$.
Therefore, the semiautomata $B^*$ and $B$ are isomorphic. 

Take the surjective mapping $\phi:S^{A^*}\rightarrow S^A:(s^A,\bar D_i) \mapsto s^A$.
Then for each symbol $a\in\Sigma^A$ and $1\leq i\leq s$, we have
$(s^A,\bar D_i)\phi\tau_a^A = s^A\tau_a^A$ and
$(s^A,\bar D_i)\tau_a^{A^*}\phi = (s^A\tau_a^A,\bar D_i\tau_a^B)\phi = s^A\tau_a^A$.
Thus $\phi\tau_a^A = \tau_a^{A^*}\phi$ and hence $A^*$ covers $A$.
\end{proof}

\begin{example}\label{e-sa8}
In view of the semiautomaton $A$ in Ex.~\ref{e-sa7}, the auxiliary semiautomaton $A^*$
has state set 
$$S^{A^*} = \{
(1,\bar D_1), (2,\bar D_1), (3,\bar D_1), (3,\bar D_2), (4,\bar D_2), (5,\bar D_2), (5,\bar D_3), (6,\bar D_3)
\},$$
input alphabet $\{a,b\}$, and transitions given by the table
$$\begin{array}{c||cccccccc}
A^* & (1,\bar D_1) & (2,\bar D_1) & (3,\bar D_1) & (3,\bar D_2) & (4,\bar D_2) & (5,\bar D_2) & (5,\bar D_3) & (6,\bar D_3)\\\hline\hline
a   & (2,\bar D_1) & (1,\bar D_1) & (3,\bar D_1) & (3,\bar D_1) & (1,\bar D_1) & (3,\bar D_1) & (3,\bar D_2) & (5,\bar D_1)\\
b   & (5,\bar D_2) & (4,\bar D_2) & (3,\bar D_2) & (3,\bar D_1) & (3,\bar D_1) & (3,\bar D_1) & (3,\bar D_1) & (3,\bar D_1)\\
\end{array}$$
The corresponding partition of the state set $S^{A^*}$ is 
$D^* = \{\bar D_1^*, \bar D_2^*, \bar D_3^*\}$,
where
\begin{eqnarray*}
\bar D_1^* &=& \{(1,\bar D_1),(2,\bar D_1),(3,\bar D_1)\},\\
\bar D_2^* &=& \{(3,\bar D_2),(4,\bar D_2),(5,\bar D_2)\},\\
\bar D_3^* &=& \{(5,\bar D_3),(6,\bar D_3)\},
\end{eqnarray*}
and the semiautomaton $B^* = A^*/D^*$ has the transition table
$$\begin{array}{c||ccc}
B^* & \bar D_1^* & \bar D_2^* & \bar D_3^* \\\hline\hline
a & \bar D_1^* & \bar D_1^* & \bar D_2^* \\
b & \bar D_2^* & \bar D_1^* & \bar D_1^* \\
\end{array}$$
\end{example}

\begin{theorem}\label{t-sa-dec}
Let\/ $A=(S^A,\Sigma,\delta^A)$ be a semiautomaton, $D$ an admissible decomposition of\/ $S^A$, and 
$B$ the $D$-factor of\/ $A$.
Then there exist semiautomata~$B^*$ and\/ $C$ such that\/ $B^*$ is isomorphic to\/ $B$, $|S^C|=m(D)$,  
and moreover $B^*\circ C\geq A$.
\end{theorem}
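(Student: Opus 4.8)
The plan is to combine Yoeli's construction (the passage preceding Proposition~\ref{p-sa-stern}) with Theorem~\ref{t-sa-part} and the transitivity of covering (Lemma~\ref{l-sa-covtr}).

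First I would form the auxiliary semiautomaton $A^{*} = (S^{A^{*}},\Sigma,\delta^{A^{*}})$ and the partition $D^{*} = \{D_1^{*},\ldots,D_s^{*}\}$ of $S^{A^{*}}$ exactly as described above, and let $B^{*} = A^{*}/D^{*}$ be the $D^{*}$-factor. By Proposition~\ref{p-sa-stern}, $B^{*}$ is isomorphic to $B$ and $A^{*}$ covers $A$; this settles the first requirement outright, and it will supply the last link in the covering chain once the covering for $A^{*}$ is established.

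Next I would apply Theorem~\ref{t-sa-part} to the semiautomaton $A^{*}$ together with its admissible partition $D^{*}$. This yields a semiautomaton $C$ with $|S^{C}| = m(D^{*})$ such that the cascade product $(A^{*}/D^{*})\circ C = B^{*}\circ C$ covers $A^{*}$. It then remains to identify $m(D^{*})$ with $m(D)$: for each $i$ the map $s^{A}\mapsto (s^{A},\bar D_i)$ is a bijection from $D_i$ onto $D_i^{*}$, so $|D_i^{*}| = |D_i|$ and hence $m(D^{*}) = \max_i |D_i^{*}| = \max_i |D_i| = m(D)$. Thus $|S^{C}| = m(D)$, as required.

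Finally, I would chain the two coverings: $B^{*}\circ C\geq A^{*}$ by the previous step and $A^{*}\geq A$ by Proposition~\ref{p-sa-stern}, so by Lemma~\ref{l-sa-covtr} we conclude $B^{*}\circ C\geq A$. There is no genuine obstacle here; the substantive work was already carried out in setting up $A^{*}$ and $D^{*}$ and in proving Theorem~\ref{t-sa-part}. The only point that deserves a moment's care is the cardinality bookkeeping $m(D^{*}) = m(D)$, which follows at once from the blockwise bijection just noted.
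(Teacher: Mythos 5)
Your proposal is correct and follows essentially the same route as the paper: form the auxiliary semiautomaton $A^{*}$ with its admissible partition $D^{*}$, invoke Proposition~\ref{p-sa-stern} for the isomorphism $B^{*}\cong B$ and the covering $A^{*}\geq A$, apply Theorem~\ref{t-sa-part} to $A^{*}$ and $D^{*}$ to obtain $C$, and chain the coverings via Lemma~\ref{l-sa-covtr}. Your explicit verification that $m(D^{*})=m(D)$ via the blockwise bijection $s^{A}\mapsto(s^{A},\bar D_i)$ is a small addition the paper merely asserts.
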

\begin{proof}
Consider the auxiliary semiautomaton $A^*$ constructed from $A$ and $B$.
Let $D^*$ be the admissible partition of $S^{A^*}$ associated to $D$ and $B^* = A^*/D^*$ the corresponding factor.
Then by Thm.~\ref{t-sa-part}, there is a semiautomaton $C$ such that $B^*\circ C$ covers $A^*$ and $|S^C|=m(D^*)$.
But by Prop.~\ref{p-sa-stern}, $B^*$ and $B$ are isomorphic and $A^*$ covers $A$. 
Thus by Lemma~\ref{l-sa-covtr}, $B^*\circ C$ covers $A$.
Moreover, $m(D)=m(D^*)$. 
\end{proof}

\section{Permutation and Reset Semiautomata}
This section treats specific semiautomata which play an important role in algebraic automata theory.

Let $A$ be a semiautomaton.
A {\em permutation input\/}\index{permutation input} of $A$ is an input $a\in\Sigma^A$ such that the transformation $\tau_a^A$ is a permutation of $S^A$; 
i.e., $\tau_a^A$ is a bijective mapping.

A {\em reset input\/}\index{reset input} of $A$ is an input $a\in\Sigma^A$ such that 
the transformation yields $|S^A\tau_a^A|=1$, i.e., $S^A\tau_a^A$ is a singleton set.

A {\em permutation-reset semiautomaton\/}\index{permutation-reset semiautomaton} is a semiautomaton $A$ whose inputs $a\in\Sigma^A$ are either reset or permutation inputs.

\begin{theorem}\label{t-sa-prs}
Each semiautomaton $A$ with $n=|S^A|\geq 2$ states can be covered by a cascade product of at most $n-1$ permutation-reset semiautomata.
\end{theorem}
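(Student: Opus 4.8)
The plan is to induct on the number of states $n=|S^A|$, reducing via Theorem~\ref{t-sa-dec} applied to one universally available admissible decomposition. For the base case $n=2$, every transformation of a two-element set is either a bijection or a constant map, so every input of $A$ is a permutation or a reset input; thus $A$ is itself a permutation-reset semiautomaton, i.e.\ a cascade product of $1=n-1$ of them, and it covers itself by reflexivity.

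For the inductive step, let $n\geq 3$ and assume the statement for all semiautomata with fewer states. Take $D$ to be the decomposition of $S^A$ consisting of all $(n-1)$-element subsets of $S^A$. For any block $B$ (an $(n-1)$-set) and any input $a$, the set $\tau_a^A(B)$ has at most $n-1$ elements and is therefore contained in some block of $D$; hence $D$ is admissible and $m(D)=n-1$. Moreover the $D$-factor $A/D$ can be chosen to be a permutation-reset semiautomaton: if $\tau_a^A$ is a permutation of $S^A$, then $X\mapsto\tau_a^A(X)$ is a permutation of the set of $(n-1)$-subsets and we let $\tau_a^{A/D}$ be the induced permutation of the blocks; if $\tau_a^A$ is not injective, then $\tau_a^A(S^A)\subsetneq S^A$, we fix an $(n-1)$-subset $D^*\supseteq\tau_a^A(S^A)$, and set $\bar D_i\tau_a^{A/D}=\bar D^*$ for every block $D_i$, a reset (a legitimate choice of $D$-factor, since $\tau_a^A(D_i)\subseteq\tau_a^A(S^A)\subseteq D^*$). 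With this choice, Theorem~\ref{t-sa-dec} provides semiautomata $B^*$ and $C$ such that $B^*$ is isomorphic to $A/D$ (hence itself permutation-reset, since an isomorphism replaces each transformation by a conjugate and conjugation by a bijection preserves both being a permutation and having a one-point image), $|S^C|=m(D)=n-1$, and $B^*\circ C\geq A$. Since $2\leq n-1<n$, the induction hypothesis applies to $C$: it is covered by a cascade product $P_1\circ\cdots\circ P_k$ of $k\leq n-2$ permutation-reset semiautomata. Applying Lemma~\ref{l-sa-CBA} to substitute this cascade for the inner factor $C$ of $B^*\circ C$, and then Lemma~\ref{l-sa-covtr} for transitivity of covering, yields a cascade product $B^*\circ P_1\circ\cdots\circ P_k\geq A$ of $1+k\leq n-1$ permutation-reset semiautomata, completing the induction.

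The verification that $D$ is admissible and the reassembly of the cascade through Lemmas~\ref{l-sa-covtr} and~\ref{l-sa-CBA} are routine. The genuinely delicate point — which I expect to be the main obstacle — is choosing the reduction correctly: one must recurse on the tail factor $C$ (with its $n-1$ states) while eliminating the head factor $A/D$ in a single stroke by arranging it to be permutation-reset outright. A symmetric attempt that tries to shrink both factors simultaneously, e.g.\ by searching for a nontrivial admissible partition of $S^A$, does not work in general, since some semiautomata admit no nontrivial admissible partition at all — for instance a cyclic permutation automaton of prime length equipped with one additional, non-injective input.
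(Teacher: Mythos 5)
Your proof is correct and follows essentially the same route as the paper: the decomposition $D$ into all $(n-1)$-element subsets of $S^A$, the observation that the $D$-factor is permutation-reset (permutations induce permutations of the blocks, non-injective inputs become resets), Theorem~\ref{t-sa-dec} to obtain $B^*\circ C\geq A$ with $|S^C|=n-1$, and Lemmas~\ref{l-sa-CBA} and~\ref{l-sa-covtr} to recurse on $C$. The only difference is presentational — you phrase the repetition as a formal induction and explicitly note that $B^*$, being isomorphic to $A/D$, is still permutation-reset, a point the paper passes over silently.
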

\begin{proof}
Take the decomposition $D$ of $S^A$, whose blocks are all the $n-1$ element subsets of $S^A$.
For each subset $S$ of $S^A$ and each input $a\in\Sigma^A$, 
we have $|S\tau_a^A|\leq |S|$.
Thus $D$ is an admissible decomposition of $S^A$.

Consider the corresponding $D$-factor $B=A/D$.
If $|S^A\tau_a^A|<n$, there is a block $D_i$ of $D$ such that $S^A\tau_a^A\subseteq D_i$. 
It follows that for each block $D_j$ of $D$,
$D_j\tau_a^A \subseteq S^A\tau_a^A \subseteq D_i$.
Thus in the $D$-factor $B$, we can define $\bar D_i\tau_a^A=\bar D_j$ for each block $D_i$.
Hence, $a$ is a reset input of $B$.

If $|S^A\tau_a^A|=n$, then $\tau_a^A$ is a permutation of $S^A$,
and
the permutation $\tau_a^A$ provides a permutation of the blocks of $D$.
Thus the transition $\tau_a^B$ is also a permutation of $S^B$ and hence $a$ is a permutation input of $B$.

It follows that the $D$-factor $B$ is a permutation-reset semiautomaton.
By Thm.~\ref{t-sa-dec}, the semiautomaton $A$ can be covered by a cascade product $B\circ C$, where
$B$ is a permutation-reset semiautomaton and $C$ is a semiautomaton with $m(D)=n-1$ states.
Applying the same construction to the semiautomaton $C$ provides a covering of $C$ by a cascade product $E\circ F$, where
$E$ is a permutation-reset semiautomaton and $F$ is a semiautomaton with $n-2$ states.
By Lemma~\ref{l-sa-CBA}, $B\circ (E\circ F)\geq B\circ C$ and so by Lemma~\ref{l-sa-covtr}, $B\circ (E\circ F)\geq A$.
Since each two-state semiautomaton is a permutation-reset semiautomaton, the repetition of this construction yields the desired result.
\end{proof}

\begin{example}\label{e-sa-pra}
Let $A$ be a semiautomaton with state set $S^A=\{1,2,3,4,5\}$, input alphabet $\Sigma^A=\{a,b\}$, and transitions
$$\tau_a^A = \left(\begin{array}{ccccc} 1 & 2 & 3 & 4 & 5\\ 2 & 3 & 4 & 5 & 1 \end{array} \right) 
\quad \mbox{and}\quad
\tau_b^A = \left(\begin{array}{ccccc} 1 & 2 & 3 & 4 & 5\\ 2 & 4 & 5 & 3 & 2 \end{array} \right).$$
Consider the decomposition $D$ of the state set $S^A$ into the $|S^A|-1$ element blocks
$$\begin{array}{lll}
D_1 = \{2,3,4,5\},&
D_2 = \{1,3,4,5\},&
D_3 = \{1,2,4,5\},\\
D_4 = \{1,2,3,5\},&
D_5 = \{1,2,3,4\}.
\end{array}$$
This decomposition is admissible, and the corresponding $D$-factor $B=A/D$ has state set $S^B=\{\bar D_1,\bar D_2,\bar D_3,\bar D_4,\bar D_5\}$,
input alphabet $\Sigma^B=\{a,b\}$, and transitions
$$\tau_a^B = \left(\begin{array}{ccccc} 
\bar D_1,\bar D_2,\bar D_3,\bar D_4,\bar D_5\\
\bar D_2,\bar D_3,\bar D_4,\bar D_5,\bar D_1
\end{array} \right) 
\quad \mbox{and}\quad
\tau_b^B = \left(\begin{array}{ccccc} 
\bar D_1,\bar D_2,\bar D_3,\bar D_4,\bar D_5\\
\bar D_1,\bar D_1,\bar D_1,\bar D_1,\bar D_1
\end{array} \right). $$
Thus $a$ is permutation input and $b$ is a reset input of $B$.
Hence, $B$ is a permutation-reset semiautomaton.
\end{example}

A {\em permutation semiautomaton}\index{permutation semiautomaton} is a semiautomaton $\Pi$ where each 
transformation $\tau_a^\Pi$ with $a\in\Sigma^\Pi$ is a permutation of $S^\Pi$.

A {\em reset semiautomaton}\index{reset semiautomaton} is a semiautomaton $R$ where each transformation $\tau_a^R$
with $a\in\Sigma^R$ is either the identity mapping or a reset mapping.

\begin{theorem}\label{t-sa-pr}
Each permutation-reset semiautomaton $A$ can be covered by a cascade product $\Pi\circ R$ of a permutation semiautomaton $\Pi$ and a reset semiautomaton $R$.
\end{theorem}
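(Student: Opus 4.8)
The plan is to separate the permutation inputs of $A$ from its reset inputs, place all the ``permutation behaviour'' into $\Pi$ and all the ``reset behaviour'' into $R$, and wire them together with a connection mapping so that the product state retains enough information to reconstruct the state of $A$. Write $\Sigma^A = \Sigma_p \cup \Sigma_r$, where $\Sigma_p$ is the set of permutation inputs and $\Sigma_r$ the set of reset inputs; if some input happens to be of both kinds (which forces $|S^A|=1$ and makes the statement trivial), put it in $\Sigma_p$. The transformations $\tau_a^A$ with $a\in\Sigma_p$ lie in the finite group $\Sym(S^A)$, and since a submonoid of a finite group is a subgroup, they generate a subgroup $G\le\Sym(S^A)$.

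First I would define the permutation semiautomaton $\Pi$ with state set $S^\Pi=G$, input alphabet $\Sigma^\Pi=\Sigma^A$, and transitions $g\tau_a^\Pi = g\,\tau_a^A$ (product in $G$) for $a\in\Sigma_p$ and $g\tau_a^\Pi=g$ for $a\in\Sigma_r$. Every $\tau_a^\Pi$ is either right multiplication by an element of $G$ or the identity on $G$, hence a permutation of $S^\Pi$, so $\Pi$ is a permutation semiautomaton. Next I would define the reset semiautomaton $R$ with state set $S^R=S^A$ and input alphabet $\Sigma^R = G\times\Sigma^A$, with transitions: $\tau_{(g,a)}^R=\id_{S^A}$ for $a\in\Sigma_p$, and, for $a\in\Sigma_r$ with $S^A\tau_a^A=\{c\}$, let $\tau_{(g,a)}^R$ be the constant map onto $\{c\,g^{-1}\}$ (note $c\,g^{-1}\in S^A$ since $g\in\Sym(S^A)$). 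Each $\tau_{(g,a)}^R$ is the identity or a reset map, so $R$ is a reset semiautomaton. Since $S^\Pi\times\Sigma^\Pi = G\times\Sigma^A = \Sigma^R$, the cascade product $\Pi\circ R$ is defined (the connection mapping is the identity on $G\times\Sigma^A$), with state set $G\times S^A$ and input alphabet $\Sigma^A$.

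Finally I would exhibit the covering. Define $\phi\colon G\times S^A\to S^A$ by $(g,r)\phi = r\,g$, that is, the state $r$ moved by the permutation $g$; this is surjective since $(e,s)\phi=s$. Take $\xi=\id_{\Sigma^A}$. It remains to verify $\phi\tau_a^A = \tau_a^{\Pi\circ R}\phi$ for every $a\in\Sigma^A$. If $a\in\Sigma_p$, then $(g,r)\tau_a^{\Pi\circ R}\phi = (g\,\tau_a^A,\, r)\phi = r(g\,\tau_a^A) = (r\,g)\tau_a^A = (g,r)\phi\,\tau_a^A$, using that the action of $G$ on $S^A$ satisfies $r(gh)=(rg)h$. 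If $a\in\Sigma_r$ with $S^A\tau_a^A=\{c\}$, then $(g,r)\tau_a^{\Pi\circ R}\phi = (g,\, c\,g^{-1})\phi = (c\,g^{-1})g = c = (g,r)\phi\,\tau_a^A$. Hence $\Pi\circ R$ covers $A$.

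The two case checks are routine; the substantive step is the design of $R$ and $\phi$. One must realise that the second component should store the state of $A$ \emph{with the accumulated permutation stripped off}, so that on a reset input it has to be reset not to $c$ but to $c\,g^{-1}$, which is possible precisely because the connection mapping of the cascade product delivers the current $\Pi$-state $g$ to $R$. The only other point needing any care is the degenerate overlap of the two input types, which occurs only for a one-state $A$ and is harmless.
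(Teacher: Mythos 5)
Your proof is correct and follows essentially the same route as the paper: split $\Sigma^A$ into permutation and reset inputs, let $\Pi$ track the group generated by the permutation transformations, let $R$ store the $A$-state with the accumulated permutation stripped off (resetting to $c\,g^{-1}$ on a reset input), and cover via $(g,r)\mapsto rg$. The only difference is cosmetic notation for the group elements, and you actually carry out the two-case verification of $\phi\tau_a^A=\tau_a^{\Pi\circ R}\phi$ that the paper leaves as ``easily checked.''
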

\begin{proof}
Decompose the state set $\Sigma=\Sigma^A$ of $A$ into two disjoint subsets $\Sigma = \Sigma_p\cup\Sigma_r$,
where $\Sigma_p$ and $\Sigma_r$ are the sets of permutation and reset inputs of $A$, respectively.

Take the transformation monoid $T(\Pi)$ generated (as group) 
by the permutations $\tau_a^A$ corresponding to the permutation inputs $a\in\Sigma_p$.
Clearly, $T(\Pi)$ is a subgroup of $T(A)$.
The elements of $T(\Pi)$ are the permutations $\tau_x^A$, where $x\in\Sigma_p^*$.
They form the state set of a semiautomaton $\Pi$ which are denoted by $\overline{x_p^A}$.
The semiautomaton $\Pi$ has the input alphabet $\Sigma^\Pi=\Sigma^A$ and the transitions
$$\overline{x_p^A}\tau_{a_p}^\Pi = \overline{(xa_p)_p^A}\quad \mbox{and}\quad
\overline{x_p^A}\tau_{a_r}^\Pi = \overline{x_p^A},
\quad a_p\in\Sigma_p, a_r\in\Sigma_r, x_p\in\Sigma_p^*.$$
Thus $\Pi$ is a permutation semiautomaton.

Define the semiautomaton $R$ with state set $S^R = \{\overline{s^A}\mid s^A\in S^A\}$, input alphabet $\Sigma^R = S^\Pi\times \Sigma$, and transitions
$$\overline{s^A}\tau_{(\overline{x_p^A},a_p)}^R = \overline{s^A}
\quad\mbox{and}\quad
\overline{s^A}\tau_{(\overline{x_p^A},a_r)}^R = \overline{(s^A\tau_{a_r}^A)(\tau_x^A)^{-1}},
\; a_p\in\Sigma_p, a_r\in\Sigma_r, x_p\in\Sigma_p^*.$$
Since $s^A\tau_{a_r}^A$ is the same for all $s^A$, it follows that $R$ is a reset semiautomaton.

Claim that the cascade product $\Pi\circ R$ covers $A$.
Indeed,
define the mapping $\phi:S^{\Pi\circ R}\rightarrow S^A$ by setting
$$\phi(\overline{x_p^A},\overline{s^A}) = s^A\tau_x^A.$$
Since $\tau_x^A$ is a permutation of $S^A$, the mapping $\phi$ is surjective.
Moreover, since the cascade product 
$\Pi\circ R$ has the input alphabet $\Sigma$, the mapping $\xi:\Sigma^{\Pi\circ R}\rightarrow \Sigma^A$ can be chosen as the identity.
The equation $\phi\tau_a^A = \tau_a^{\Pi\circ R}\phi$ with $a\in\Sigma$ can be easily checked.
\end{proof}

\begin{example}\label{e-sa-pra1}
Consider the semiautomaton $A$ with state set $S^A = \{1,2,3,4,5\}$, input alphabet $\Sigma^A=\{a,b\}$, and transitions
$$\tau_a^A = \left(\begin{array}{ccccc} 1 & 2 & 3 & 4 & 5\\ 2 & 3 & 4 & 5 & 1 \end{array} \right) 
\quad \mbox{and}\quad
\tau_b^A = \left(\begin{array}{ccccc} 1 & 2 & 3 & 4 & 5\\ 1 & 1 & 1 & 1 & 2 \end{array} \right).$$
This is a permutation-reset semiautomaton with permutation input $a$ and reset input $b$. 
The state set $\Sigma$ decomposes into the set of permutation inputs $\Sigma_p=\{a\}$ and the set of reset inputs $\Sigma_r=\{b\}$
(see semiautomaton $B$ in Ex.~\ref{e-sa-pra}).

The semiautomaton $A$ is covered by a cascade product $\Pi\circ R$ with permutation semiautomaton $\Pi$ and reset semiautomaton $R$.
The permutation semiautomaton $\Pi$ has the transformation monoid $T(\Pi)$ generated by the permutation $g=\tau_a^A$ of the permutation input $a\in\Sigma_p$.
This is a cyclic group of order~5,
$$T(\Pi) = \{g=\tau_a^A,g^2=\tau_{aa}^A,g^3=\tau_{aaa}^A,g^4=\tau_{aaaa}^A,\id=g^5=\tau_{aaaaa}^A\}.$$ 
Thus the semiautomaton $\Pi$ has state set $S^\Pi = \{\overline{\id},\overline g,\ldots,\overline {g^4}\}$, 
input alphabet~$\Sigma$, and the transitions 
$$\overline{g^i}\tau_a^\Pi = \overline{g^{i+1}}
\quad \mbox{and} \quad
\overline{g^i}\tau_b^\Pi = \overline{g^i}, \quad 0\leq i\leq 4,$$
where $i+1$ is taken modulo~5.

The reset semiautomaton $R$ has state set $\Sigma^R=\{\overline {1^A},\ldots,\overline {5^A}\}$, input alphabet $\Sigma^R=S^\Pi\times \Sigma$, and transitions 
$$\overline {j^A} \tau_{(\overline{g^i},a)}= \overline{j^A},\quad 0\leq i\leq 4, 1\leq j\leq 5 ,$$
and
$$\begin{array}{c||ccccc}
R                  & \overline {1^A}  & \overline {2^A}  & \overline {3^A}  & \overline {4^A}  & \overline {5^A}  \\\hline 
(\overline{\id},b) & \overline {1^A}  & \overline {1^A}  & \overline {1^A}  & \overline {1^A}  & \overline {1^A}  \\
(\overline{g},b)   & \overline {5^A}  & \overline {5^A}  & \overline {5^A}  & \overline {5^A}  & \overline {5^A}  \\
(\overline{g^2},b) & \overline {4^A}  & \overline {4^A}  & \overline {4^A}  & \overline {4^A}  & \overline {4^A}  \\
(\overline{g^3},b) & \overline {3^A}  & \overline {3^A}  & \overline {3^A}  & \overline {3^A}  & \overline {3^A}  \\
(\overline{g^4},b) & \overline {2^A}  & \overline {2^A}  & \overline {2^A}  & \overline {2^A}  & \overline {2^A}  
\end{array}$$
\end{example}

\begin{theorem}
Each reset semiautomaton can be covered by a direct product of two-state reset semiautomata.
\end{theorem}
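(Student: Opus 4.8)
The plan is to attach to $R$ one two-state reset semiautomaton for each of its states, the $i$-th component recording whether or not $R$ currently sits in its $i$-th state, and then to take the direct product of all these components.

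Concretely, write $S^R=\{s_1,\ldots,s_n\}$ and $\Sigma=\Sigma^R$. For each $i\in\{1,\ldots,n\}$ I would define a semiautomaton $R_i$ with state set $\{0,1\}$ and input alphabet $\Sigma$, whose transitions are prescribed as follows: if $\tau_a^R$ is the identity on $S^R$, then $\tau_a^{R_i}$ is the identity on $\{0,1\}$; if $\tau_a^R$ is the reset mapping with image $\{s_j\}$, then $\tau_a^{R_i}$ is the constant mapping with image $\{1\}$ in case $j=i$ and the constant mapping with image $\{0\}$ in case $j\neq i$. In every case $\tau_a^{R_i}$ is either the identity or a reset, so each $R_i$ is a two-state reset semiautomaton. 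Let $P=R_1\times\cdots\times R_n$ be their direct product; it has input alphabet $\Sigma$ and state set $\{0,1\}^n$, and by the description of transitions in a direct product, $\tau_a^P$ acts coordinatewise via the $\tau_a^{R_i}$.

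Next I would take $\xi=\id_\Sigma$ (legitimate since $\Sigma^P=\Sigma$) and let $\phi$ be defined on the subset $E=\{e_1,\ldots,e_n\}$ of $\{0,1\}^n$, where $e_i$ has entry $1$ in coordinate $i$ and entry $0$ elsewhere, by $\phi(e_i)=s_i$; this is a bijection of $E$ onto $S^R$, hence surjective. The two things to verify are that $E$ is closed under every $\tau_a^P$ and that $\phi\tau_a^R=\tau_a^P\phi$. If $\tau_a^R=\id$ then $\tau_a^P$ fixes every coordinate, so it fixes each $e_i$, and both $\phi\tau_a^R$ and $\tau_a^P\phi$ send $e_i$ to $s_i$. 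If $\tau_a^R$ has image $\{s_j\}$, then applying $\tau_a^P$ to any $e_i$ resets the $j$-th coordinate to $1$ and every other coordinate to $0$, i.e.\ $\tau_a^P(e_i)=e_j$; hence $E$ is invariant, and $\phi\tau_a^R(e_i)=\tau_a^R(s_i)=s_j=\phi(e_j)=\tau_a^P\phi(e_i)$. Therefore $P\geq R$, as required.

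Since the argument is a direct verification, there is no serious obstacle; the only points needing attention are confirming that each auxiliary transition $\tau_a^{R_i}$ really is a reset or the identity (so the factors genuinely are two-state reset semiautomata) and checking that the domain $E$ of $\phi$ is invariant under the product transitions, which is exactly what makes the composition $\tau_a^P\phi$ meaningful in the covering condition. A more economical variant would replace the unary indicator encoding by a binary encoding of $S^R$, using only $\lceil\log_2 n\rceil$ two-state factors and relying on the fact that resetting $R$ to a fixed state resets each bit-component to a fixed bit; but the indicator construction above is the most transparent.
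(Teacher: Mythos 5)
Your construction is correct: each $\tau_a^{R_i}$ is indeed either the identity or a reset on $\{0,1\}$, the set $E$ of indicator vectors is invariant under every $\tau_a^{P}$, the map $\phi:E\rightarrow S^R$, $e_i\mapsto s_i$, is surjective, and with $\xi=\id_\Sigma$ the covering identity $\phi\tau_a^R=\tau_a^P\phi$ holds in both cases (identity input and reset input), so $P\geq R$ in the sense of the paper's definition, which explicitly permits $\phi$ to be defined only on a subset of the covering state set. However, this is a genuinely different route from the paper's. The paper argues by induction on $|S^R|$: since every transition of a reset semiautomaton is the identity or a reset, \emph{every} partition of $S^R$ is admissible; choosing a two-block partition $P$ and applying Theorem~\ref{t-sa-part} yields a covering by a cascade product $B\circ C$ with $|S^B|=2$ and $|S^C|=m(P)<|S^R|$, which (the partition being admissible) collapses to the direct product $B\times C$, and the recursion on $C$ is then glued together with Lemmas~\ref{l-sa-CBA} and~\ref{l-sa-covtr}. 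What each approach buys: the paper's argument reuses the general decomposition machinery already developed (and with balanced two-block partitions can get by with roughly $\lceil\log_2 n\rceil$ factors, essentially your binary-encoding variant), whereas your one-hot construction is self-contained, fully explicit, and avoids any appeal to Theorem~\ref{t-sa-part} or to the transitivity lemmas, at the cost of using $n$ factors rather than the minimum number. Both are valid proofs of the statement as phrased, since it places no bound on the number of two-state factors.
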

\begin{proof}
Let $A$ be a reset semiautomaton with $n=|S^A|\geq 2$ states.
If $n=2$, the assertion is clear. 
Otherwise note that each partition $P$ of $S^A$ is admissible, since every transition $\tau_a^A$ is either the identity mapping or a reset mapping.
Thus we may take a partition $P$ of $S^A$ with two blocks.
Then by Thm.~\ref{t-sa-part}, $A$ can be covered by a cascade product $B\circ C$, where $|S^B|=2$ and $|S^C|=m(P)<|S^A|$.
In this case, the cascade product $B\circ C$ can be chosen as the direct product~$B\times C$.

The same procedure can be applied to the semiautomaton $C$ providing a covering of $C$ by a direct product $D\times E$.
By Lemma~\ref{l-sa-CBA}, $B\times(D\times E)$ covers $B\times C$ and thus by Lemma~\ref{l-sa-covtr}, $B\times (D\times E)$ covers $A$.
Continuing this way leads to the desired result.
For this, note that a reset semiautomaton $B$ covering a semiautomaton $A$ can be augmented by one state and still covers $A$. 
\end{proof}

A finite group $G$ gives rise to a {\em grouplike semiautomaton}\index{grouplike semiautomaton},  
which is a semiautomaton denoted by $G$ with state set $G$, input alphabet $G$, and transitions given by the right multiplications 
$\tau_g^G:G\rightarrow G:x\mapsto xg$, where $g\in G$.

\begin{example}\label{e-sa-g0}
Consider the Klein four-group $G=\{e,a,b,c\}$ given by the group table
$$\begin{array}{c||cccc}
\cdot & e & a & b & c\\\hline\hline
e     & e & a & b & c\\
a     & a & e & c & b\\
b     & b & c & e & a\\
c     & c & b & a & e\\
\end{array}$$
The grouplike semiautomaton $G$ has state set $G$, input alphabet $G$, and the transitions are the right multiplications given by the rows of the group table,
$$
\begin{array}{ll}
\tau_e^G=\left(\begin{array}{cccc} e & a & b & c\\ e & a & b & c\\ \end{array}\right),&
\tau_a^G=\left(\begin{array}{cccc} e & a & b & c\\ a & e & c & b\\ \end{array}\right),\\
\tau_b^G=\left(\begin{array}{cccc} e & a & b & c\\ b & c & e & a\\ \end{array}\right),&
\tau_c^G=\left(\begin{array}{cccc} e & a & b & c\\ c & b & a & e\\ \end{array}\right) .
\end{array}$$
\end{example}

\begin{lemma}
The permutation semiautomaton $\Pi$ in Thm.~\ref{t-sa-pr} can be covered by a grouplike semiautomaton.
\end{lemma}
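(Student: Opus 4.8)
The plan is to take the grouplike semiautomaton attached to the group $T(\Pi)$ and check directly that it covers $\Pi$. Recall from the construction in the proof of Theorem~\ref{t-sa-pr} that the state set of $\Pi$ is exactly $T(\Pi)$, the finite group generated by the permutations $\tau_a^A$ with $a\in\Sigma_p$; its states are written $\overline{x_p^A}$ for $x_p\in\Sigma_p^*$, and the identification $\overline{x_p^A}\leftrightarrow\tau_x^A$ makes the set of states coincide with the underlying set of the group. Its transitions are $\overline{x_p^A}\tau_{a_p}^\Pi=\overline{(xa_p)_p^A}$ for a permutation input $a_p$ and $\overline{x_p^A}\tau_{a_r}^\Pi=\overline{x_p^A}$ for a reset input $a_r$. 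Let $G$ be the grouplike semiautomaton of the group $T(\Pi)$: it has state set $T(\Pi)$, input alphabet $T(\Pi)$, and $\tau_g^G$ equal to right multiplication by $g$.

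First I would specify the covering data. Take $\phi:S^G\to S^\Pi$ to be the identity map on the common underlying set $T(\Pi)$; it is a bijection, hence in particular a surjection of all of $S^G$ onto $S^\Pi$. Define $\xi:\Sigma^\Pi\to\Sigma^G=T(\Pi)$ by sending each permutation input $a_p\in\Sigma_p$ to the generator $\tau_{a_p}^A\in T(\Pi)$ and each reset input $a_r\in\Sigma_r$ to the identity element $\id$ of $T(\Pi)$.

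Second I would verify the covering identity $\phi\tau_a^\Pi=\tau_{\xi(a)}^G\phi$ for the two types of input. Since $\phi$ is the identity map, this reduces to the equality $\tau_a^\Pi=\tau_{\xi(a)}^G$ of self-maps of $T(\Pi)$. For a permutation input $a_p\in\Sigma_p$ and a state $\overline{x_p^A}$, the left-hand side gives $\overline{(xa_p)_p^A}$, while the right-hand side gives $\overline{x_p^A}\cdot\tau_{a_p}^A$; under the identification $\overline{x_p^A}=\tau_x^A$ these agree because $\tau_x^A\tau_{a_p}^A=\tau_{xa_p}^A$ in the transition monoid. For a reset input $a_r\in\Sigma_r$ the left-hand side fixes $\overline{x_p^A}$, and the right-hand side is $\overline{x_p^A}\cdot\id=\overline{x_p^A}$. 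Hence $G\geq\Pi$, and $G$ is grouplike, as required.

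The argument is essentially bookkeeping. The only point that needs attention is matching conventions: one must confirm that ``right multiplication by $\tau_{a_p}^A$'' in the grouplike semiautomaton is literally the transition $\overline{x_p^A}\mapsto\overline{(xa_p)_p^A}$ of $\Pi$, which comes down to the monoid identity $\tau_x^A\tau_{a_p}^A=\tau_{xa_p}^A$ already noted for transition monoids. I do not anticipate any genuine obstacle.
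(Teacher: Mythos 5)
Your proposal is correct and follows essentially the same route as the paper: cover $\Pi$ by the grouplike semiautomaton of $G=T(\Pi)$ with $\phi$ the identity on states and $\xi$ sending each input to the group element realizing the same right multiplication (the generator $\tau_{a_p}^A$ for permutation inputs, $\id$ for reset inputs). Your verification of $\phi\tau_a^\Pi=\tau_{\xi(a)}^G\phi$ is just a more explicit writing-out of the check the paper leaves implicit.
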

\begin{proof}
The permutation automaton $\Pi$ has state set $G=T(\Pi)$ and input alphabet $\Sigma^\Pi=\Sigma$.
Each transition $\tau_{a_p}^\Pi$ corresponding to a permutation input $a_p\in\Sigma_p$ is a permutation which amounts to a right multiplication,
and each transition~$\tau_{a_r}^\Pi$ associated with a reset input $a_r\in\Sigma_r$ is the identity mapping.

Define the semiautomaton $G$ with state set $S^G=G$, input alphabet $\Sigma^G=G$, and transitions given by the right multiplications with the elements of $G$.
Then~$G$ covers $\Pi$ 
if we take $\phi:S^G\rightarrow S^\Pi$ as the identity mapping 
and $\xi:\Sigma^\Pi\rightarrow\Sigma^G$ as the mapping that assigns each $a\in\Sigma^\Pi$ to the element of $G$ which performs the same right multiplication as $\tau_a^\Pi$. 
\end{proof}

\begin{example}\label{e-sa-pra2}
Reconsider the permutation semiautomaton $\Pi$ in Ex.~\ref{e-sa-pra1}.
This semiautomaton is covered by the grouplike semiautomaton $G=C_5$ given by the cyclic group of order~5.
\end{example}

\begin{theorem}\label{t-sa-glike}
Let\/ $G$ be a grouplike semiautomaton and $H$ a subgroup of\/ $G$.
Then\/ $G$ can be covered by a cascade product\/ $B\circ_\omega C$ such that the semiautomaton\/ $C$ is grouplike 
and isomorphic to\/ $H$.
\end{theorem}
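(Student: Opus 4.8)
The plan is to decompose $G$ along the partition into \emph{right} cosets of $H$, which is admissible precisely because the transitions of a grouplike semiautomaton are right multiplications. First I would fix a right transversal $T=\{t_1,\dots,t_k\}$ of $H$ in $G$, so that every element of $G$ has a unique factorization $g=ht_i$ with $h\in H$ and $t_i\in T$. The collection $P=\{Ht_1,\dots,Ht_k\}$ is a partition of $S^G=G$, and it is admissible: for any input $g\in\Sigma^G=G$ and any block $Ht_i$ one has $(Ht_i)\tau_g^G=H(t_ig)$, which is again a right coset, hence contained in a block of $P$. Note $m(P)=|H|$. Let $B=G/P$ be the corresponding $P$-factor, with state set $\{\bar P_1,\dots,\bar P_k\}$, input alphabet $G$, and transitions $\bar P_i\tau_g^B=\bar P_j$ where $Ht_ig=Ht_j$.

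For the second factor I would take $C$ to be the grouplike semiautomaton attached to $H$ itself: state set $H$, input alphabet $H$, and transitions $\tau_h^C\colon x\mapsto xh$ for $h\in H$. Thus $C$ is grouplike and isomorphic to $H$, as required, and $|S^C|=|H|=m(P)$. The coupling between the two factors is encoded by the connection mapping $\omega\colon S^B\times G\to H$ defined by $\omega(\bar P_i,g)=h'$, where $h'$ is the unique element of $H$ with $t_ig=h't_j$ and $t_j\in T$ is the representative of the coset $Ht_ig$; this is well defined because $T$ is a transversal. Form the cascade product $B\circ_\omega C$; its state set $\{\bar P_1,\dots,\bar P_k\}\times H$ has $k|H|=|G|$ elements, its input alphabet is $G$, and its transitions are $(\bar P_i,h)\tau_g^{B\circ_\omega C}=(\bar P_i\tau_g^B,\,h\tau_{\omega(\bar P_i,g)}^C)=(\bar P_j,hh')$ with $t_ig=h't_j$.

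It remains to exhibit the covering data. I would define $\phi\colon S^{B\circ_\omega C}\to S^G$ by $\phi(\bar P_i,h)=ht_i$; this is a bijection by the uniqueness of the factorization $g=ht_i$, so in particular it is a surjection from (a subset of) $S^{B\circ_\omega C}$ onto $S^G$. Take $\xi\colon\Sigma^G\to\Sigma^{B\circ_\omega C}$ to be the identity on $G$. The verification of $\phi\tau_g^G=\tau_g^{B\circ_\omega C}\phi$ is then a one-line computation: applying the left-hand side to $(\bar P_i,h)$ gives $(ht_i)g=ht_ig$, while the right-hand side sends $(\bar P_i,h)$ first to $(\bar P_j,hh')$ with $t_ig=h't_j$ and then by $\phi$ to $(hh')t_j=h(t_ig)=ht_ig$. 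Hence $B\circ_\omega C\geq G$, which is the assertion.

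I do not expect a genuine obstacle here: once one observes that it is the right cosets, not the left cosets, that give an admissible partition, the rest is bookkeeping with the transversal factorization $G=HT$. The only points that deserve care are the well-definedness of $\omega$ and the check that $\phi$ intertwines the transition maps, both of which rest on the uniqueness of the expression $g=ht_i$. Alternatively one could route the argument through Theorem~\ref{t-sa-part}, taking as second partition $Q=\{hT\mid h\in H\}$ so that $P\cap Q$ is the finest partition of $G$; but then one still has to recognise the resulting factor as the grouplike semiautomaton of $H$ after identifying inputs of equal action, so the direct construction above seems cleaner.
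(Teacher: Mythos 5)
Your proof is correct, and it rests on exactly the same decomposition as the paper's: the admissible partition of $G$ into \emph{right} cosets of $H$ together with the unique transversal factorization $g=ht_i$. Where you differ is in execution. The paper routes the argument through Theorem~\ref{t-sa-part}: it introduces the second partition $Q=\{T,h_2T,\dots,h_tT\}$, checks that $P\cap Q$ is the finest partition of $G$, obtains a covering $B\circ C\geq G$ from the general machinery (so that $C$ initially has state set $\{\bar Q_1,\dots,\bar Q_t\}$ and input alphabet $S^B\times\Sigma$), and only afterwards identifies $C$ with the grouplike semiautomaton of $H$ by exhibiting explicit maps $\phi(h_i)=\overline{h_iT}$ and $\xi(\overline{Hg_i},g)=h_j$ and collapsing inputs with equal action; the mapping $\xi$ is then taken as the connection mapping $\omega$. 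You instead take $C$ to be the grouplike semiautomaton of $H$ from the outset, write down the connection mapping $\omega(\bar P_i,g)=h'$ (with $t_ig=h't_j$) explicitly, and verify the intertwining relation $\phi\tau_g^G=\tau_g^{B\circ_\omega C}\phi$ by hand via the bijection $\phi(\bar P_i,h)=ht_i$ --- which is precisely the paper's bijection $(\bar P_i,\bar Q_j)\mapsto P_i\cap Q_j$ in disguise, since $Hg_i\cap h_jT=\{h_jg_i\}$. Your direct route is self-contained and avoids the identification step, which is arguably cleaner given that the statement already permits an arbitrary connection mapping $\omega$; the paper's route buys reuse of the general partition theorem at the cost of the extra isomorphism argument. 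Both hinge on the same two points you flag: well-definedness of $\omega$ and the intertwining check, each a consequence of the uniqueness of $g=ht_i$.
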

\begin{proof}
Let $G$ be a grouplike semiautomaton
and $H=\{e=h_1,\ldots,h_t\}$ a subgroup $G$.
Take the partition $P$ of $G$ into the right cosets of $H$:
\begin{eqnarray}\label{e-Hg}
P = \{H = He,Hg_2,\ldots,Hg_s\},
\end{eqnarray}
where $T=\{e=g_1,\ldots,g_s\}$ is a transversal\index{transveral} of $H$ in $G$, i.e., a full set of representatives of the right cosets of $H$ in $G$.
Note that by Lagrange's theorem, we have $|G|=st$.
Moreover, $P$ is an admissible partition of $G$, since the right cosets of $H$ in $G$ are pairwise disjoint and 
for each $1\leq i\leq s$ and $g\in G$, we have
$(Hg_i)\tau_g^G = (Hg_i)g = H(g_ig) = Hg_j$ where $g_ig=hg_j\in Hg_j$ for some $h\in H$ and $1\leq j\leq s$.

On the hand, the set $Q = \{T,h_2T,\ldots,h_tT\}$ is also a partition of $G$, 
since if $g\in h_iT\cap h_jT$, then $g=h_ig_k=h_jg_l$ for some $g_k,g_l\in T$.
Since the cosets in~(\ref{e-Hg}) are disjoint, $k=l$ and therefore $h_i=h_j$.
Moreover, the partition $P\cap Q$ is the finest one of $G$; i.e., each block in $P\cap Q$ is a singleton set.
To see this, let $h_ig_k = h_jg_l$. 
Then as above, $k=l$ and so $h_i=h_j$.
It follows that each group element $g\in G$ can be written as $g=h_ig_j$, where $1\leq i\leq t$ and $1\leq j\leq s$.

By Thm.~\ref{t-sa-part}, the semiautomaton $G$ can be covered by the cascade product $B\circ C$, 
where $B=G/P$ and $|S^C|=|H|=t$. 

The semiautomaton $C$ is isomorphic to the grouplike semiautomaton $H$ with state set $H$, input alphabet $H$, and transitions given by the right multiplication with 
the elements of $H$.
To see this, define the mapping $\phi:H\rightarrow S^C$ by $\phi(h_i) = \overline{h_iT}$ and 
the mapping 
$\xi:\Sigma^C\rightarrow\Sigma^H$ by $\xi(\overline{Hg_i},g) = h_j$, where $g_ig=h_jg'$ 
for some $1\leq j\leq t$ and $g'\in G$.
Both mappings are surjective.
The identity $\phi\tau_{(\overline{Hg_i},g)}^C = \tau_{\xi(\overline{Hg_i},g)}^H\phi$ can now be easily checked.

By the remark after Lemma~\ref{l-sa-xi-inj}, after identifying the inputs of $C$ giving the same image 
we obtain a semiautomaton $C'$ which is isomorphic to the grouplike semiautomaton $H$.
If the mapping $\xi$ is taken as $\omega$, the result is obtained.
\end{proof}

\begin{corollary}\label{c-sa-glike}
Let\/ $G$ be a grouplike semiautomaton and $H$ a normal subgroup of\/ $G$.
Then\/ $G$ can be covered by a cascade product\/ $B\circ_\omega C$ such that\/ $B$ is grouplike and isomorphic to\/ $G/H$ and\/ $C$ is grouplike and isomorphic to\/ $H$.
\end{corollary}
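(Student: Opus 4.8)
The plan is to read this off Theorem~\ref{t-sa-glike} together with the elementary fact that the cosets of a normal subgroup form a group. Applying Theorem~\ref{t-sa-glike} to the subgroup $H$ already produces a cascade product $B\circ_\omega C$ covering $G$ in which $C$ is grouplike and isomorphic to $H$, and in which $B=G/P$ is the factor of $G$ modulo the partition $P=\{H=He,Hg_2,\ldots,Hg_s\}$ of $G$ into the right cosets of $H$. So the only thing left to establish is that, under the normality hypothesis, this $B$ can be taken to be the grouplike semiautomaton of the quotient group $G/H$.

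First I would spell out the transitions of $B=G/P$: its state set is $S^B=\{\overline{Hg_1},\ldots,\overline{Hg_s}\}$, its input alphabet is $\Sigma^B=\Sigma^G=G$, and for each state $\overline{Hg_i}$ and each input $g\in G$ one has $\overline{Hg_i}\tau_g^B=\overline{H(g_ig)}$, since $(Hg_i)\tau_g^G=H(g_ig)$ lies in the coset containing $g_ig$. Here I invoke normality: because $Hg=gH$ for all $g$, the set $G/H$ of cosets is a group under $(Hx)(Hy)=H(xy)$, and the displayed formula becomes $\overline{Hg_i}\tau_g^B=\overline{(Hg_i)(Hg)}$; that is, $\tau_g^B$ is right multiplication on $G/H$ by the coset $Hg$.

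Next I would perform the input identification exactly as in the proof of Theorem~\ref{t-sa-glike} (the remark after Lemma~\ref{l-sa-xi-inj}): two inputs $g,g'\in G$ induce the same transformation of $B$ iff $Hg_ig=Hg_ig'$ for all $i$, and taking $g_i=e$ this is equivalent to $Hg=Hg'$. Collapsing the inputs that act identically therefore replaces the input alphabet $G$ of $B$ by $G/H$, and the resulting semiautomaton $B'$ has state set $G/H$, input alphabet $G/H$, and transitions the right multiplications of $G/H$ — i.e.\ $B'$ is precisely the grouplike semiautomaton of $G/H$. Just as $\xi$ was absorbed into the connection map in the previous theorem, I would absorb this quotient map $g\mapsto Hg$ into the connection map and into the covering map on inputs, obtaining a cascade product $B'\circ_{\omega'}C'$ with $B'$ grouplike and isomorphic to $G/H$ and $C'$ grouplike and isomorphic to $H$, still covering $G$.

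I do not expect a genuine obstacle here: the corollary is Theorem~\ref{t-sa-glike} plus the definition of the quotient group. The only point requiring any care is the bookkeeping already carried out in Theorem~\ref{t-sa-glike} — namely that collapsing inputs of the first factor of a cascade product is harmless because the connection mapping and the input part of the covering can be composed with the collapsing map — and verifying that the product-of-cosets formula $(Hx)(Hy)=H(xy)$, which is exactly what makes $B'$ grouplike, is well defined precisely because $H$ is normal.
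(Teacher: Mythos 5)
Your proposal is correct and follows essentially the same route as the paper: invoke Theorem~\ref{t-sa-glike} to obtain $B\circ_\omega C$ with $C\cong H$ grouplike, then use normality to show that inputs $g,g'$ with $Hg=Hg'$ act identically on the $P$-factor $B$, so that after identifying them $B$ becomes the grouplike semiautomaton of the quotient group $G/H$. The paper's own proof carries out exactly this computation, namely $(Hg_i)g=(g_iH)g=g_i(Hg)=g_i(Hg')=(g_iH)g'=(Hg_i)g'$.
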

\begin{proof}
In the previous proof, 
the $P$-factor $B=G/P$ has the state set $S^B=\{\overline{He},\overline{Hg_2},\ldots,\overline{Hg_s}\}$ 
which corresponds to the set of cosets of $H$ in $G$.
By hypothesis, $H$ is normal, i.e., $gHg^{-1}=H$ for each group element $g\in G$.
Thus the cosets $H,Hg_2,\ldots,g_s$ form the factor group $G/H$.

Let $g$ and $g'$ belong to the same coset of $H$ in $G$. 
Then $g'=hg$ for some $h\in H$ and so in the $P$-factor $B$, 
$$(Hg_i)g = (g_iH)g = g_i(Hg) = g_i(Hg') = (g_iH)g' = (Hg_i)g'.$$
Thus the transitions $\tau_g^B$ and $\tau_{g'}^B$ are equal, and vice versa.
This shows that $B$ is a grouplike semiautomaton which is isomorphic to $G/H$.
\end{proof}

\begin{example}
Reconsider the grouplike semiautomaton $G$ in Ex.~\ref{e-sa-g0}.
Take the subgroup $H=\{e,a\}$ of $G$.
Then $P = \{P_1=H=\{e,a\},P_2=Hb=\{b,c\}\}$ is a partition of $G$ into right cosets
and $T=\{e,b\}$ is a transversal of $H$ in $G$.
The set $Q = \{Q_1=T = \{e,b\}, Q_2=Ta = \{a,c\}\}$ is also a partition of $G$ 
and $P\cap Q$ is the finest partition of $G$:  
$$P_1\cap Q_1=\{e\}, \;P_1\cap Q_2=\{a\}, \; P_2\cap Q_1=\{b\}, \;P_2\cap Q_2=\{c\}.$$

The semiautomaton $B=A/P$ has the state set $S^B=\{\bar P_1,\bar P_2\}$, the input alphabet $\Sigma^B=\Sigma^A$, and the transitions given by
\begin{eqnarray*}
\tau_e^B = \left(\begin{array}{cc}\bar P_1,\bar P_2\\\bar P_1,\bar P_2 \end{array}\right),&
\tau_a^B = \left(\begin{array}{cc}\bar P_1,\bar P_2\\\bar P_1,\bar P_2 \end{array}\right),\\
\tau_b^B = \left(\begin{array}{cc}\bar P_1,\bar P_2\\\bar P_2,\bar P_1 \end{array}\right),&
\tau_c^B = \left(\begin{array}{cc}\bar P_1,\bar P_2\\\bar P_2,\bar P_1 \end{array}\right).
\end{eqnarray*}
Since the group $G$ is abelian, the subgroup $H$ is normal.
Thus the semiautomaton $B$ is grouplike and isomorphic to the factor group $G/H = \{H,Hb\}$ which has the following
group table:
$$\begin{array}{c||cc}
B  & H & Hb\\\hline\hline
H  & H & Hb\\
Hb & Hb & H\\
\end{array}$$
The semiautomaton $C$ has state set $S^C=\{\bar Q_1,\bar Q_2\}$, input alphabet $\Sigma^C = S^B\times \Sigma^A$, and transitions given by following table:
$$\begin{array}{c||cc}
C            & \bar Q_1 & \bar Q_2\\\hline\hline
(\bar P_1,e) & \bar Q_1 & \bar Q_2\\
(\bar P_1,a) & \bar Q_2 & \bar Q_1\\
(\bar P_1,b) & \bar Q_1 & \bar Q_2\\
(\bar P_1,c) & \bar Q_2 & \bar Q_1\\
(\bar P_2,e) & \bar Q_2 & \bar Q_2\\
(\bar P_2,a) & \bar Q_1 & \bar Q_1\\
(\bar P_2,b) & \bar Q_2 & \bar Q_2\\
(\bar P_2,c) & \bar Q_1 & \bar Q_2\\
\end{array}$$
The grouplike semiautomaton $H$ has state set $H$, input alphabet $H$, and the transitions
$$\tau_e^H = \left(\begin{array}{cc} e & a\\e & a \end{array}\right)\quad\mbox{and}\quad
\tau_a^H = \left(\begin{array}{cc} e & a\\a & e \end{array}\right). $$
Now identify equal inputs in $C$:
$(\bar P_1,e)$, $(\bar P_1,b)$, $(\bar P_2,a)$, $(\bar P_2,c)$
and
$(\bar P_1,a)$, $(\bar P_1,c)$, $(\bar P_2,e)$, $(\bar P_2,b)$.
Then we obtain a semiautomaton $C'$ with state set $\{\bar Q_1,\bar Q_2\}$, input alphabet say
$\{(\bar P_1,e),(\bar P_1,a)\}$ and transition table
$$\begin{array}{c||cc}
C'           & \bar Q_1 & \bar Q_2\\\hline\hline
(\bar P_1,e) & \bar Q_1 & \bar Q_2\\
(\bar P_1,a) & \bar Q_2 & \bar Q_1\\
\end{array}$$
It is clear that $C'$ is isomorphic to the grouplike semiautomaton $H$.
\end{example}

A group $G$ is {\em simple}\index{group!simple} if it has only the trivial normal subgroups $G$ and $\{e\}$.
A {\em simple grouplike semiautomaton}\index{semiautomaton!simple grouplike} is a grouplike semiautomaton $G$ which corresponds to a simple group.

\begin{theorem}
Each grouplike semiautomaton\/ $G$ can be covered by a cascade product of simple group\-like semiautomata.
\end{theorem}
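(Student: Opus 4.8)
The plan is to argue by induction on the order $|G|$ of the underlying group, using Corollary~\ref{c-sa-glike} to split off one normal subgroup at a time and the substitution lemmas to reassemble the pieces. For the base case, if $G$ is simple (this subsumes the one-state semiautomaton, whose group is trivial and so has only trivial normal subgroups), then $G$ is itself a simple grouplike semiautomaton and $G$ covers $G$, so a ``cascade product'' consisting of the single factor $G$ suffices. For the inductive step, suppose $G$ is not simple. Then $G$ has a normal subgroup $H$ with $2\le |H|<|G|$, and hence also $2\le |G/H|<|G|$. By Corollary~\ref{c-sa-glike}, $G$ is covered by a cascade product $B\circ_\omega C$ in which $B$ is grouplike and isomorphic to $G/H$ and $C$ is grouplike and isomorphic to $H$. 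Since $|G/H|,|H|<|G|$, the induction hypothesis applies to $B$ and to $C$, giving cascade products $\mathcal B = D_1\circ\cdots\circ D_m$ and $\mathcal C = E_1\circ\cdots\circ E_k$ of simple grouplike semiautomata with $\mathcal B\ge B$ and $\mathcal C\ge C$.

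Next I would splice these two pieces into $G$. All coverings produced in the preceding sections (those of Theorem~\ref{t-sa-part}, hence those underlying Corollary~\ref{c-sa-glike}, and those of Lemmas~\ref{l-sa-CBA} and~\ref{l-sa-ABC}) carry an injective --- indeed essentially the identity --- input-alphabet map $\xi$, and this property is preserved under the compositions used in Lemmas~\ref{l-sa-CBA}, \ref{l-sa-ABC}, and~\ref{l-sa-covtr}; so along the induction one may assume $\mathcal B\ge B$ (and $\mathcal C\ge C$) with $\xi$ injective. Then, since $\mathcal B\ge B$ with $\xi$ injective, Lemma~\ref{l-sa-ABC} provides a connection mapping $\omega_1$ with $\mathcal B\circ_{\omega_1}C\ge B\circ_\omega C$, and Lemma~\ref{l-sa-covtr} gives $\mathcal B\circ_{\omega_1}C\ge G$. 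Applying Lemma~\ref{l-sa-CBA} with $\mathcal B$ as the leading factor and $\mathcal C\ge C$, we obtain a connection mapping $\omega_2$ with $\mathcal B\circ_{\omega_2}\mathcal C\ge \mathcal B\circ_{\omega_1}C$, so $\mathcal B\circ_{\omega_2}\mathcal C\ge G$ by transitivity. Finally, $\mathcal B\circ_{\omega_2}\mathcal C$ is a cascade product whose first factor is the cascade product $D_1\circ\cdots\circ D_m$ and whose second factor is the cascade product $E_1\circ\cdots\circ E_k$; after re-associating and absorbing $\omega_2$ into the connection maps between consecutive factors, it is a cascade product $D_1\circ\cdots\circ D_m\circ E_1\circ\cdots\circ E_k$ of simple grouplike semiautomata, which closes the induction.

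The main obstacle will be that last bookkeeping step: the cascade product $\circ_\omega$ is not literally associative as an operation on semiautomata, so one must either establish the routine ``associativity'' isomorphism $(X\circ_{\omega'}Y)\circ_{\omega''}Z\cong X\circ_{\bar\omega}(Y\circ_{\hat\omega}Z)$ for suitably chosen connection maps, or --- more in the spirit of this paper --- adopt once and for all the convention that a \emph{cascade product of semiautomata $A_1,\dots,A_\ell$} means a semiautomaton with state set $S^{A_1}\times\cdots\times S^{A_\ell}$, input alphabet $\Sigma^{A_1}$, and transitions whose $i$-th coordinate depends only on the first $i$ coordinates and the input, and then verify directly that $\mathcal B\circ_{\omega_2}\mathcal C$ has exactly this shape. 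A secondary technical point is the injectivity hypothesis needed to invoke Lemma~\ref{l-sa-ABC} when replacing the leading factor $B$; this is benign, since as noted every covering constructed in the paper has identity (or at least injective) input map, and this is inherited through the operations of Lemmas~\ref{l-sa-CBA}, \ref{l-sa-ABC}, and~\ref{l-sa-covtr} (cf.\ also the remark following Lemma~\ref{l-sa-xi-inj}), so the property can simply be carried along the induction hypothesis.
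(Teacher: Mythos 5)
Your overall strategy---induct on $|G|$, split off an arbitrary proper normal subgroup via Corollary~\ref{c-sa-glike}, and splice the recursively obtained decompositions of \emph{both} factors back in---differs from the paper's, which takes a composition series $G=H_0,H_1,\ldots,H_r=\{e\}$ and peels off one \emph{simple} quotient $H_i/H_{i+1}$ at a time, so that only the inner factor of the cascade is ever replaced and only Lemma~\ref{l-sa-CBA} (which carries no injectivity hypothesis) is needed. The difference is not cosmetic: it is exactly where your argument has a genuine gap. To replace the leading factor $B$ by $\mathcal{B}$ you invoke Lemma~\ref{l-sa-ABC}, which requires a covering $\mathcal{B}\geq B$ with \emph{injective} input map $\xi:\Sigma^B\rightarrow\Sigma^{\mathcal{B}}$. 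But in Corollary~\ref{c-sa-glike} the factor $B=G/P$ has input alphabet $\Sigma^B=\Sigma^G=G$; it is grouplike and isomorphic to $G/H$ only \emph{after identifying inputs lying in the same coset}, so any covering of $B$ by the grouplike semiautomaton $G/H$ is forced to use $\xi:g\mapsto Hg$, which is not injective once $H$ is nontrivial. Worse, the cascade product $\mathcal{B}$ delivered by your induction hypothesis has input alphabet equal to that of its first simple grouplike factor, a group of order at most $|G/H|<|G|=|\Sigma^B|$, so no injective $\xi:\Sigma^B\rightarrow\Sigma^{\mathcal{B}}$ can exist on cardinality grounds. Your blanket claim that every covering constructed in the paper has an injective input map therefore fails for precisely the covering you need, and Lemma~\ref{l-sa-ABC} cannot be applied as stated.

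The gap is repairable, but not by bookkeeping alone: either you must augment the input alphabet of $\mathcal{B}$ with duplicate symbols to make room for an injective $\xi$ (using the paper's remark that enlarging state set or alphabet preserves covering, and then arguing that the augmented factors are still, up to equivalence, simple grouplike semiautomata), or---much more simply---you should choose $H$ to be a \emph{maximal} normal subgroup, so that $G/H$ is already simple, the leading factor never needs further decomposition, and the recursion acts only on the inner factor $C$ via Lemma~\ref{l-sa-CBA}. The latter repair turns your proof into the paper's composition-series argument. Your remarks on the base case and on the re-nesting of the iterated cascade (non-associativity of $\circ_\omega$) are correct and apply equally to the paper's own proof, which also leaves that point implicit.
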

\begin{proof}
Each finite group $G$ has a {\em composition series}\index{composition series}. 
That is, a series $G=H_0,H_1,\ldots,H_r=\{e\}$ of subgroups of $G$ such that $H_{i+1}$ is a proper normal subgroup of $H_i$
and the factor groups $H_i/H_{i+1}$ are simple.

By Cor.~\ref{c-sa-glike}, the grouplike semiautomaton $G$ can be covered by a cascade product 
$G/H_1\circ_{\omega_1}H_1$, where the grouplike semiautomaton $G/H_1$ is simple.
The same construction can be applied to the grouplike semiautomaton $H_1$, which can be covered by a cascade product 
$H_1/H_2\circ_{\omega_2}H_2$, 
where the grouplike semiautomaton $H_1/H_2$ is simple.
Thus by Lemma~\ref{l-sa-CBA}, 
$$G/H\circ_{\omega'_2}(H_1/H_2\circ_{\omega_2}H_2)\geq G/H_1\circ_{\omega_1}H_1$$ 
and hence by Lemma~\ref{l-sa-covtr}, $$G/H\circ_{\omega'_2}(H_1/H_2\circ_{\omega_2}H_2)\geq G.$$
By repeating this construction the result is follows.
\end{proof}

\section{Krohn-Rhodes Decomposition}
The assertions established in the previous sections cumulate in the following result.
\begin{theorem}[Krohn-Rhodes]
Each semiautomaton\/ $A$ can be covered by direct and cascade products of semiautomata of two shapes:
\begin{itemize}
\item simple grouplike semiautomata, and
\item two-state reset semiautomata. 
\end{itemize}
\end{theorem}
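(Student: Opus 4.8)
The plan is to feed the given semiautomaton through the chain of decomposition results proved above and then glue the resulting pieces back together by means of the covering calculus of Lemmas~\ref{l-sa-covtr}, \ref{l-sa-CBA} and~\ref{l-sa-ABC}. If $|S^A|\le 1$ the statement is trivial (a one-state semiautomaton is covered by any two-state reset semiautomaton), so assume $n=|S^A|\ge 2$. By Theorem~\ref{t-sa-prs}, $A$ is covered by a cascade product of at most $n-1$ permutation--reset semiautomata; inspecting that proof, this cascade is built up right-nested, say $A_1\circ(A_2\circ(\cdots\circ A_k)\cdots)$ with every $A_i$ permutation--reset, the nesting being produced step by step through Lemma~\ref{l-sa-CBA}.

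The next step is to decompose each factor $A_i$ into semiautomata of the two admissible shapes. First, Theorem~\ref{t-sa-pr} covers $A_i$ by a cascade product $\Pi_i\circ R_i$ of a permutation semiautomaton $\Pi_i$ and a reset semiautomaton $R_i$, and in this covering the input map $\xi$ is the identity. Second, $\Pi_i$ is covered by a grouplike semiautomaton, which in turn is covered by a cascade product of simple grouplike semiautomata, so by transitivity of covering (Lemma~\ref{l-sa-covtr}) $\Pi_i$ is covered by a cascade product of simple grouplike semiautomata. Third, $R_i$ is covered by a direct product of two-state reset semiautomata, and since a direct product of semiautomata is a special case of a cascade product, this is already of the desired shape.

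It then remains to substitute these three decompositions back into the cascade $A_1\circ(A_2\circ(\cdots))$ of the first step. Replacing the second (right-hand) factor of a cascade product by a semiautomaton that covers it is precisely Lemma~\ref{l-sa-CBA}, while replacing the first (left-hand) factor is Lemma~\ref{l-sa-ABC}, valid whenever the input map of the covering is injective; in either case transitivity of covering (Lemma~\ref{l-sa-covtr}) lets the new covering propagate outward through the remaining layers of nesting. Performing these substitutions for all the $A_i$, $\Pi_i$ and $R_i$ converts the cover of $A$ into a cover built from simple grouplike semiautomata and two-state reset semiautomata combined by cascade (and direct) products, which is exactly the assertion.

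The main obstacle is the bookkeeping in this final step. Because the cascade supplied by Theorem~\ref{t-sa-prs} is deeply right-nested, every permutation--reset factor, including the outermost one, must be replaced \emph{in situ}, and each replacement occurring in a first coordinate invokes Lemma~\ref{l-sa-ABC} with its injectivity hypothesis on $\xi$. The substitution of $\Pi_i\circ R_i$ for $A_i$ causes no difficulty, since there $\xi=\id$; but for the finer substitutions of a cascade of simple grouplike semiautomata for $\Pi_i$ and of a direct product of two-state reset semiautomata for $R_i$, one must either check that the input maps produced in those constructions can be taken injective, or else first pass to an equivalent semiautomaton (identifying inputs that induce the same transformation) and, where needed, enlarge input alphabets so that the hypothesis of Lemma~\ref{l-sa-ABC} is met. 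Everything else is routine.
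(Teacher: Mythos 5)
Your proposal is correct and follows essentially the same route as the paper, which presents the theorem as the cumulation of Theorem~\ref{t-sa-prs}, Theorem~\ref{t-sa-pr}, the grouplike covering of $\Pi$ and its decomposition into simple grouplike factors, the decomposition of reset semiautomata into two-state ones, and the covering calculus of Lemmas~\ref{l-sa-covtr}, \ref{l-sa-CBA} and~\ref{l-sa-ABC}. In fact you are more careful than the paper (which offers no written proof) in flagging the injectivity hypothesis of Lemma~\ref{l-sa-ABC} needed for substitutions in the left-hand factor, and your proposed remedy of identifying inputs that induce the same transformation matches the remark following Lemma~\ref{l-sa-xi-inj}.
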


This assertion can be refined by allowing only simple grouplike semiautomata with simple groups
that are homomorphic images of subgroups of the transition monoid $T(A)$.
For this, specific admissible decompositions are employed.
However, this will not be pursued any further.

Instead, we will give an example to demonstrate the construction in the above theorem.
For this, note that the construction of a covering of a given semiautomaton $A$ can always be started 
with the decomposition $D$ of $S^A$, whose blocks are all the $|S^A|-1$ element subsets of $S^A$ 
as described in the proof of Thm.~\ref{t-sa-prs}.

\begin{example}
Reconsider the semiautomaton $A$ with admissible decomposition $D=\{D_1,\ldots,D_5\}$ of $S^A$ and $D$-factor $B=A/D$ in Ex.~\ref{e-sa-pra}.

Regard the auxiliary semiautomaton $A^*$ covering $A$.
The semiautomaton $A^*$ has state set 
$S^{A^*} =\{(i,\bar D_j)\mid 1\leq i,j\leq 5,i\ne j\}$,
input alphabet
$\Sigma^{A^*} = \{a,b\}$,
and transitions
$$(i,\bar D_j) \tau_a^{A^*} = (\tau_a^A(i),\tau_a^B(\bar D_j))
\quad \mbox{and}\quad 
(i,\bar D_j) \tau_b^{A^*} = (\tau_b^A(i),\tau_b^B(\bar D_j)),$$
where $1\leq i,j\leq 5$ and $i\ne j$.
For instance, 
$(3,\bar D_2) \tau_a^{A^*} = (\tau_a^A(3),\tau_a^B(\bar D_2)) = (4,\bar D_3)$
and
$(3,\bar D_2) \tau_b^{A^*} = (\tau_b^A(3),\tau_b^B(\bar D_2)) = (5,\bar D_1)$.

Define the decomposition $D^*=\{D_1^*,\ldots,D_5^*\}$ of $S^{A^*}$ having blocks
$D_i^* = \{(j,\bar D_i)\mid 1\leq j\leq 5, i\ne j\}$ for $1\leq i\leq 5$.
This decomposition is an admissible partition.

The semiautomaton $B^* = A^*/D^*$ has the state set 
$S^{B^*} =\{\bar D_i^*\mid 1\leq i\leq 5\}$,
input alphabet
$\Sigma^{B^*} = \{a,b\}$,
and transitions given by the table
$$\begin{array}{c||ccccc}
B^*          & \bar D_1^* & \bar D_2^* & \bar D_3^* & \bar D_4^* & \bar D_5^* \\\hline\hline
\tau_a^{B^*} & \bar D_2^* & \bar D_3^* & \bar D_4^* & \bar D_5^* & \bar D_1^* \\
\tau_b^{B^*} & \bar D_1^* & \bar D_1^* & \bar D_1^* & \bar D_1^* & \bar D_1^* 
\end{array}$$
This is a permutation-reset semiautomaton isomorphic to $B$.
It is covered by the cascade product $\Pi\circ R$ of a permutation semiautomaton $\Pi$ and a reset semiautomaton $R$ (see Ex.~\ref{e-sa-pra1}).
The permutation semiautomaton $\Pi$ is covered by a grouplike semiautomaton $G$ which corresponds to the cyclic group of order~5. 
Since groups of prime order are simple, the semiautomaton $G$ is already simple.
On the other hand, the reset semiautomaton $R$ can be covered by a cascade of two-state reset semiautomata.

In the next step, a semiautomaton $C$ can be constructed such that $B^*\circ C$ covers $A^*$.
To this end, a partition $P$ of $S^{A^*}$ is required such that $D^*\cap P$ is the finest partition of $S^{A^*}$.
For instance, we may take the partition $P=\{P_1,\ldots,P_5\}$ with blocks
\begin{eqnarray*}
P_1 &=& \{(1,\bar D_2), (1,\bar D_3), (1,\bar D_4), (1,\bar D_5)\},\\
P_2 &=& \{(2,\bar D_1), (2,\bar D_3), (2,\bar D_4), (2,\bar D_5)\},\\
P_3 &=& \{(3,\bar D_1), (3,\bar D_2), (3,\bar D_4), (3,\bar D_5)\},\\
P_4 &=& \{(4,\bar D_1), (4,\bar D_2), (4,\bar D_3), (4,\bar D_5)\},\\
P_5 &=& \{(5,\bar D_1), (5,\bar D_2), (5,\bar D_3), (5,\bar D_4)\}.
\end{eqnarray*}
Then we have $D_j^*\cap P_i = \{(i,\bar D_j)\}$ for $1\leq i,j\leq 5$ with $i\ne j$.

Define the semiautomaton $C$ with state set $S^C = \{\bar P_1,\ldots,\bar P_5\}$, 
input alphabet $\Sigma^C=S^{A^*}\times\{a,b\}$,
and transitions
$$\bar P_i\tau_{(\bar D_j^*,a)}^C = (i,\bar D_j)\tau_a^{A^*}\pi_P,
\quad\mbox{and}\quad
\bar P_i\tau_{(\bar D_j^*,b)}^C = (i,\bar D_j)\tau_b^{A^*}\pi_P,$$
where $1\leq i,j\leq 5$ with $i\ne j$.
For instance, we have 
$\bar P_3\tau_{(\bar D_2^*,a)}^C = (3,\bar D_2)\tau_a^{A^*}\pi_P = (4,\bar D_3)\pi_P = \bar P_4$ 
and
$\bar P_3\tau_{(\bar D_2^*,b)}^C = (3,\bar D_2)\tau_b^{A^*}\pi_P = (5,\bar D_1)\pi_P = \bar P_5$.
%

The cascade product $B^*\circ C$ covers $A$.
Further decomposition of $C$ yields the desired covering of $A$ as stated in the above result.
\end{example}

\end{document}